\let\mc\mathcal
\let\euc\mathcal
\let\mathcal\mc
\def\>{\relax\ifmmode\mskip.666667\thinmuskip\relax\else\kern.111111em\fi}
\def\:{\relax\ifmmode\mskip.333333\thinmuskip\relax\else\kern.0555556em\fi}
\def\?{\relax\ifmmode\mskip-.666667\thinmuskip\relax\else\kern-.111111em\fi}
\def\<{\relax\ifmmode\mskip-.333333\thinmuskip\relax\else\kern-.0555556em\fi}
\def\vsk#1>{\vskip#1\baselineskip}
\def\vv#1>{\vadjust{\vsk#1>}\ignorespaces}
\def\vvn#1>{\vadjust{\nobreak\vsk#1>\nobreak}\ignorespaces}
 \let\alb\allowbreak
\def\fratop{\genfrac{}{}{0pt}1}
\def\satop#1#2{\fratop{\scriptstyle#1}{\scriptstyle#2}}
\let\dsize\displaystyle \let\tsize\textstyle \let\ssize\scriptstyle
\let\sssize\scriptscriptstyle 
\let\phan\phantom \let\vp\vphantom 
\def\stackrel#1#2{\mathrel{\mathop{\kern 0pt#2}\limits^{#1}}}
\let\Medskip\medskip
\def\medskip{\par\Medskip}
\let\Bigskip\bigskip
\def\bigskip{\par\Bigskip}
\let\Maketitle\maketitle
\def\maketitle{\Maketitle\thispagestyle{empty}\let\maketitle\empty}
\newtheorem{thm}{Theorem}[section]
\newtheorem{cor}[thm]{Corollary}
\newtheorem{lem}[thm]{Lemma}
\newtheorem{prop}[thm]{Proposition}
\numberwithin{equation}{section}
\theoremstyle{definition}
\newtheorem*{example}{Example}
\def\beq{\begin{equation}}
\def\eeq{\end{equation}}
\def\be{\begin{equation*}}
\def\ee{\end{equation*}}
\let\nc\newcommand
\def\bea{\begin{eqnarray*}}
\def\eea{\end{eqnarray*}}
\def\bean{\begin{eqnarray}}
\def\eean{\end{eqnarray}}
\let\al\alpha
\let\bt\beta
\let\gm\gamma
\let\Gm\Gamma
\let\dl\delta
\let\ka\kappa
\let\la\lambda
\let\phi\varphi
\let\si\sigma
\let\der\partial
\let\Hat\widehat
\let\ox\otimes
\let\Tilde\widetilde
\let\bra\langle
\let\ket\rangle
\let\ge\geqslant
\let\le\leqslant
\let\on\operatorname
\let\bi\bibitem
\let\bs\boldsymbol
\def\C{{\mathbb C}}
\def\Z{{\mathbb Z}}
\def\R{{\mathbb R}}
\def\BB{{\mathbb B}}
\def\Ibb{{\mathbb I}}
\def\Ub{{\mathbb U}}
\def\Ie{{\euc I}}
\def\Ue{{\euc U}}
\def\We{{\euc W}}
\def\Fc{{\mc F}}
\def\Hc{{\mc H}}
\def\Ic{{\mc I}}
\def\Pc{{\mc P}}
\def\Me{{\euc M}}
\def\Te{{\euc T}}
\def\Eb{\;\overline{\!\?E\?}\>}
\def\gmb{\bs\gm}
\def\gmbr{\>\bar{\?\gm}}
\def\gmbb{\bs{\gmbr}}
\def\Gmb{\bar\Gm}
\def\GGb{\bs{\Gmb}}
\def\Ibr{\bar I}
\def\Jbr{\bar J}
\def\Kbr{\bar K}
\def\+#1{^{\{#1\}}}
\def\lsym#1{#1\alb\dots\relax#1\alb}
\def\lc{\lsym,}
\def\Gr{\mathrm{Gr}}
\def\Im{\on{Im}}
\def\Re{\on{Re}}
\def\Res{\on{Res}}
\def\tbigoplus{\mathop{\textstyle{\bigoplus}}\limits}
\def\cirs{{\raise.2ex\hbox{$\sssize\circ$}}}
\def\Loc{\on{Loc}}
\def\ab{a,\:b}
\def\ij{i,\:j}
\def\kn{k\<,\:n}
\def\IJ{I\?,\:J}
\def\JK{J,\:K}
\def\gsl{\mathfrak{sl}}
\def\onek{\{\:1\:\lc k\:\}}
\def\onen{\{\:1\:\lc n\:\}}
\def\Cxs{\C^\times}
\nc{\Il}{{\Ic_{\bla}}}
\nc{\Ikn}{{\Ic_{\:\kn}}}
\nc{\Ikon}{{\Ic_{\:k-1,\>n}}}
\nc{\Fla}{\Fc_\bla}
\nc{\tfl}{{T^*\<\Fla}}
\nc{\GL}{{G\<L_n(\C)}}
\nc{\GLC}{{\GL\times\C^*}}
\def\Gkn{\Gr(k,n)}
\def\TGkn{{\dsize T^*\Gkn}}
\def\HT{{\dsize H^*_T\bigl(\TGkn\<\bigr)}}
\def\KT{K_T\bigl(\TGkn\<\bigr)}
\def\CZH{\C[\:\ZZ^{\pm1}; H^{\pm1}]}
\def\qti{\tilde q}
\def\kat{\tilde\ka}
\def\Ibt{\tilde{\smash\Ibb\vp1}\vp\Ibb}
\def\acuv#1{\acute{#1}\vp{#1}}
\def\Edd{\skew4\acuv E}
\def\Fdd{\acuv F}
\def\Udd{\skew3\acuv{\smash U\vp i}\vp U}
\def\Xdd{\acuv X}
\def\Med{\skew2\acuv{\smash\Me\vp i}\vp\Me}
\def\Ted{\skew2\acuv{\smash\Te\vp i}\vp\Te}
\def\pii{\pi\sqrt{\<-1}}
\def\piit{\pi\:\sqrt{\<-1}}
\def\xxx{x_1\lc x_n}
\def\zzz{z_1\lc z_n}
\def\Dt{\skew3\Tilde{\vp1\smash D}}
\def\Pt{\skew5\Tilde{\vp1\smash P}}
\def\Dh{\Hat D}
\def\Phr{\Phi^{\raise.2ex\hbox{$\sssize\<\mathrm{r\<ed}\<$}}}
\def\les#1{\sssize<\<#1}
\def\ges#1{\sssize>\<#1}
\def\zero#1{\raise#1ex\hbox{$\sssize0$}}
\def\infs#1{\raise#1ex\hbox{$\sssize\infty$}}
\def\Co{C^{\>\zero{.15}}}
\def\Cf{C^{\:\infs{.26}}}
\def\Do{D^{\:\zero{.15}}}
\def\Fo{F^{\>\zero{.15}}}
\def\Ff{F^{\:\infs{.26}}}
\def\Go{G^{\:\zero{.20}}}
\def\Gf{G^{\infs{.32}}}
\def\cho{\chi^{\zero{-.31}}}
\def\chf{\chi^{\infs{-.20}}}
\def\ioo{\iota^{\<\zero{-.31}}}
\def\iof{\iota^{\?\infs{-.20}}}
\def\Mo{M^{\:\zero{.15}}}
\def\Mf{M^{\infs{.26}}}
\def\muo{\mu^{\:\zero{-.31}}}
\def\muf{\mu^{\infs{-.20}}}
\def\pio{\pi^{\:\zero{-.37}}}
\def\pif{\pi^{\infs{-.26}}}
\def\Pso{\Psi^{\:\zero{.15}}}
\def\Psf{\Psi^{\infs{.26}}}
\def\So{S^{\:\zero{.20}}}
\def\Sf{S^{\infs{.32}}}
\def\Weo{\We^{\:\zero{.15}}}
\def\Wef{\We^{\:\infs{.26}}}
\def\Weod{\skew3\acuv\We^{\:\zero{.15}}}
\def\Yo{Y^{\:\zero{.15}}}
\def\Yf{Y^{\:\infs{.26}}}
\def\Lp{L^{\!\raise.25ex\hbox{$\sssize+$}}}
\def\Lin{L^{\!\raise.2ex\hbox{$\sssize\mathrm{i\<n\<t}$}}}
\def\Wtr{W^{\:\raise.2ex\hbox{$\sssize\mathrm{t\<r}$}}}
\def\Wdd{\acuv W^{\:\raise.2ex\hbox{$\sssize\mathrm{t\<r}$}}}
\def\Psin{\Psi^{\raise.2ex\hbox{$\sssize\mathrm{i\<n\<t}\<$}}}
\def\sii{\si^{{\sssize\uparrow}\lower.2ex\hbox{$\ssize i$}}}
\def\sip{\si^{\:\lower.2ex\hbox{$\ssize\prime$}}}
\def\BBg{\BB_{\raise.16ex\hbox{$\sssize\<>\?1$}}}
\def\BBl{\BB_{\raise.16ex\hbox{$\sssize\?<\<1$}}}
\def\bul{\mathbin{\raise.2ex\hbox{$\sssize\bullet$}}}
\def\intt{\mathchoice
{\mathop{\raise.2ex\rlap{$\,\,\ssize\backslash$}{\intop}}\nolimits}
{\mathop{\raise.3ex\rlap{$\,\sssize\backslash$}{\intop}}\nolimits}
{\mathop{\raise.1ex\rlap{$\sssize\>\backslash$}{\intop}}\nolimits}
{\mathop{\rlap{$\sssize\:\backslash$}{\intop}}\nolimits}}
\def\GZ/{Gelfand-Zetlin}
\def\KZ/{{\slshape KZ\/}}
\def\qKZ/{{\slshape qKZ\/}}
\def\qKZB/{{\slshape qKZB\/}}
\def\XXX/{{\slshape XXX\/}}
\def\XXZ/{{\slshape XXZ\/}}
\def\hb{\bs h}
\def\lb{\bs l}
\def\bss{\bs s}
\nc{\bla}{{\bs\la}}
\def\zz{{\bs z}}
\def\TT{{\bs t}}
\def\Sym{\on{Sym}}
\def\Tb{\bs T}
\def\GG{{\bs\Gm}}
\def\xx{{\bs x}}
\def\Czon{\C\<\setminus\?\{0\:,\<1\}}
\def\CRp{\C\<\setminus\<\R_{\ge0}}
\def\naqla{\nabla^{\>\vp|\mathrm{\sssize quant}}}
\def\naknp{\naqla_{\?\kn,\>p}}
\def\TT{{\bs t}}
\def\YY{{\bs Y}}
\def\ZZ{{\bs Z}}
\def\zzst{{\tsize\zz^{\lower.02ex\hbox{$\<\ssize\star$}}}}
\def\zzss{\zz^{\raise.08ex\hbox{$\sssize\?\star$}}}
\def\zzs{\mathchoice{\rlap{$\zzst$}}{\rlap{$\zzst$}}{\rlap{$\zzss$}}
{\rlap{$\sssize\zz^{\<\star}$}}\phan{\zz}}
\begin{document}

\hrule width0pt
\vsk->

\title[Monodromy of the equivariant quantum differential equation]
{Monodromy of the equivariant quantum differential\\[2pt]
equation of the cotangent bundle of\\[2pt]
a Grassmannian}

\author[Vitaly Tarasov and Alexander Varchenko]
{Vitaly Tarasov$\>^\circ$ and Alexander Varchenko$\>^\star$}

\maketitle

\begin{center}
{\it $^{\star}\<$Department of Mathematics, University
of North Carolina at Chapel Hill\\ Chapel Hill, NC 27599-3250, USA\/}
\vsk.5>
{\it $\kern-.4em^\circ\?$Department of Mathematical Sciences,
Indiana University\,--\>Purdue University Indianapolis\kern-.4em\\
402 North Blackford St, Indianapolis, IN 46202-3216, USA\/}
\end{center}

{\let\thefootnote\relax
\footnotetext{\vsk-.8>\noindent
$^\circ\<${\sl E\>-mail}:\enspace vtarasov@iupui.edu\>,
supported in part by Simons Foundation grants \rlap{430235, 852996}
\\
$^\star\<${\sl E\>-mail}:\enspace anv@email.unc.edu\>,
supported in part by NSF grants DMS-1665239, DMS-1954266}}

\vsk>
{\leftskip3pc \rightskip\leftskip \parindent0pt \Small
{\it Key words\/}: Grassmannian, quantum differential equation,
$\:q\:$-hypergeometric solution
\vsk.6>
{\it 2010 Mathematics Subject Classification\/}: 14N35, 53D45, 14D05, 33C70
\par}

\begin{abstract}
We describe the monodromy of the equivariant quantum differential equation
of the cotangent bundle of a Grassmannian in terms of the equivariant
$\:K\?$-theory algebra of the cotangent bundle. This description is based
on the hypergeometric integral representations for solutions of the equivariant
quantum differential equation. We identify the space of solutions with
the space of the equivariant $\:K\?$-theory algebra of the cotangent bundle.
In particular, we show that for any element of the monodromy group, all entries
of its matrix in the standard basis of the equivariant $\:K\?$-theory algebra
of the cotangent bundle are Laurent polynomials with integer coefficients in
the exponentiated equivariant parameters.
\end{abstract}

\vsk1.2>
\rightline{\it In memory of Igor Krichever (1950\:--2022)}
\vsk-.9>
\strut

{\small\tableofcontents\par}

\setcounter{footnote}{0}
\renewcommand{\thefootnote}{\arabic{footnote}}

\vsk-2>\vsk>
\section{Introduction}
This paper has grown in 2019 from our attempts to understand what is written
in the paper \cite{AO} by M.\,Aganagic and A.\,Okounkov on the monodromy
of the quantum difference equations in the equivariant $\:K\?$-theory of
Nakajima varieties.\?{\def\thefootnote{$\diamond$}\footnotemark
\def\thefootnote{\kern-\parindent}\footnotetext{\rule{0pt}{.9\baselineskip}%
$^\diamond$\:When the second author of this paper was a student at
Kolmogorov's boarding high school in Moscow for mathematically gifted students,
his one-year-older friend Lenya Levin, see \cite{Le}, was teaching him
the basics of automata theory. Among other things Lenya was preaching that
{\it the language of a higher level machine sounds for a lower level machine
like white noise!\/}}}
We consider the simplest example of a Nakajima variety --- the cotangent bundle
$\,\TGkn\,$ of a Grassmannian $\,\Gkn\,$. In this example we describe the
monodromy of solutions of the quantum differential equation in the equivariant
cohomology of $\,\TGkn\,$, which is arguably a more difficult problem than
the description of the monodromy of the quantum difference equation in the
equivariant $\:K\?$-theory of $\,\TGkn\,$, since solutions of the differential
equations are multivalued functions while solutions of the difference equations
are univalued.

\vsk.2>
While the study of the monodromy of the difference equations in \cite{AO}
is based on the algebraic geometry of the moduli spaces of quasi-maps,
our tools are based on the identification, presented in \cite{GRTV,MO,RTV1},
of the equivariant quantum differential equations with the dynamical
differential equations in the theory of quantum integrable systems, on the one
hand, and based on the integral representations for solutions of the dynamical
equation constructed in \cite{TV1,MV,TV4}, on the other hand.

\vsk.2>
We consider the cotangent bundle $\,\TGkn\,$ with the natural torus
$\,{T=(\Cxs\<)^n\?\times\Cxs}$ action. The equivariant quantum differential
equation for $\,\TGkn\,$ is a linear system of ordinary differential equations
in one complex variable $\,p\,$, depending on the equivariant parameters
$\,\zz=(z_1,\dots,z_n)\:,\>h\,$. The equivariant quantum differential equation
has singular points at $\,p=0\:,1,\infty\,$. To describe the monodromy,
it is enough to describe the following three linear operators:
the operator $\,\muo\:$ of the monodromy of solutions around $\,p=0\,$,
the operator $\,\muf\:$ of the monodromy of solutions around $\,p=\infty\,$,
and the operator of the analytic continuation of solutions from a neighbourhood
of the point $\,p=0\,$ to a neighbourhood of the point $\,p=\infty\,$.

\vsk.2>
We parametrize solutions of the quantum differential equation
\vv.03>
by elements of the equivariant $\:K\?$-theory algebra $\,\KT\,$
see Section \ref{secsolK}. For each $\,\:X\?\in\<\KT\,$, we associate
\vv.03>
two solutions of the quantum differential equation, $\,\:\Pso_{\?X}\:$ and
$\,\:\Psf_{\?X}\>$. The solution $\,\:\Pso_{\?X}\:$ is first defined in
\vv.03>
a neighborhood of $\,p=0\,$ and then analytically continued to the universal
cover of $\,\:\Czon\,$, while the solution $\,\:\Psf_{\?X}\>$ is first
defined in a neighborhood of $\,p=\infty\,$ and then analytically continued
to the universal cover of $\,\:\Czon\,$. In Theorem \ref{monoP}, we describe
the monodromy operator $\,\muo\:$ in terms of the solutions $\,\:\Pso_{\?X}\:$,
and the monodromy operator $\,\muf\:$ in terms of the solutions
$\,\:\Psf_{\?X}\>$. In Theorem \ref{trans}, we show that the solutions
$\,\:\Pso_{\?X}\:$ defined in a neighborhood of the point $\,p=0\,$
and solutions $\,\:\Psf_{\?X}\>$ defined in a neighborhood of the point
$\,p=\infty\,$ are related via the transition map $\,\:\tau:\KT\to\:\KT\,$
introduced in Section \ref{sectrans},
\vvn-.5>
\be
\Psf_{\?\tau X}(\zz\:;h\:;p)\,=\,\Pso_{\?X}(\zz\:;h\:;p)\,.\kern-2em
\ee
This describes the analytic continuation of solutions of the quantum
differential equation from a neighbourhood of the point $\,p=0\,$
to a neighbourhood of the point $\,p=\infty\,$.

\vsk.2>
Furthermore, Theorem \ref{monoint} shows that for any element of the monodromy
group, all entries of its matrix in the standard basis of $\,\KT\,$ are Laurent
polynomials with integer coefficients in the exponentiated equivariant
parameters.

\vsk.2>
To prove Theorem \ref{trans}, we write an integral formula for solutions of
the quantum differential equation and express the result via the solutions
$\,\:\Pso_{\?X}\:$ and $\,\:\Psf_{\?X}\>$ parametrized by elements of $\,\KT\,$,
see Theorem \ref{thmPsin}. Theorems \ref{monoP}, \ref{trans}, \ref{monoint},
and \ref{thmPsin} are the main results of the paper.

\vsk.2>
The paper is organized as follows. In Section \ref{sQde}, we introduce the
equivariant quantum differential equation for $\,\TGkn\,$ and the equivariant
$\:K\?$-theory algebra $\,\KT\,$. In Section \ref{sectrig}, we consider
the trigonometric weight functions. In Section \ref{sec Int Rep}, we describe
solutions of the quantum differential equation following \cite{TV5}\:,
parametrized the solutions by elements of $\,\KT\,$, and prove
Theorem \ref{monoP} on the monodromy operators $\,\muo\:$ and $\,\muf\:$.
In Section \ref{secint}, we consider an integral formula for solutions of the
quantum differential equation, and prove Theorems \ref{thmPsin} and \ref{trans}.
In Section \ref{sec:pfs}, we prove technical results used in the main part
of the paper. We collect the facts about the Schur polynomials in Appendix.

\vsk.2>
We thank Andrey Smirnov for useful discussions. The second author thanks
Max Planck Institute for Mathematics in Bonn for hospitality in May--June 2022.

\section{Equivariant quantum differential equation}
\label{sQde}

\subsection{Equivariant cohomology of $\,\TGkn\,$}
\label{sec:eq}
Let $\,k,n\,$ be nonnegative integers such that $\,k\le n\,$. Consider
the Grassmannian $\,\:\Gkn\,$ of $\>k\:$-dimensional subspaces in $\,\:\C^n$
and the cotangent bundle $\,\TGkn\,$. We embed $\,\:\Gkn\,$ in $\,\TGkn\,$
as the zero section and consider the points of $\,\:\Gkn\,$ as points of
$\,\TGkn\,$.

\vsk.2>
Denote by $\,\Ikn\:$ the set of $\>k\:$-\:element subsets of
\vv.06>
$\,\:\onen\,$. Let $\,e_1\lc e_n\,$ be the standard basis of $\,\C^n$.
For any $\>I\?\in\Ikn\,$, let $\,F_I\<\in\Gkn\?\subset\TGkn\,$ be the span
of $\,\{\:e_i\<\ |\ i\<\in\<I\>\}$.

\vsk.2>
Let $\,A\subset\GL\,$ \,be the torus of diagonal matrices and
$\,T=A\times\Cxs$. The group $\,A\,$ acts on $\,\:\C^n$ and hence on
$\,\TGkn\,$. Let the group $\,\:\Cxs$ act on $\,\TGkn\,$ by multiplication in
\vv.06>
each fiber. We denote by $\>-\:h\,$ its $\,\:\Cxs\<$-weight. The fixed points
of the action of $\,T\,$ on $\,\TGkn\,$ are the points $\,F_I\<\in\TGkn\,$,
$\,\:I\?\in\Ikn\,$.

\vsk.2>
Let $\,E\,$ and $\,\Eb\,$ be the vector bundles over $\,\:\Gkn\,$ with fibers
over a point $\,F\in\Gkn$ being $\,F\>$ and $\,\:\C^n\!/F\>$, respectively.
Denote by $\,\gmb=\{\:\gm_1\lc\gm_k\:\}\,$ and
$\,\gmbb=\{\:\gmbr_1\lc\gmbr_{n-k}\:\}\,$ the respective set of the Chern roots
of $\,E\,$ and $\,\Eb\,$. Denote by $\,\zz=\{\:\zzz\:\}\,$ the Chern roots
corresponding to the factors of the torus $\,A\,$.

\vsk.2>
Consider the equivariant cohomology algebra $\,H^*_T(\TGkn\:;\C)\,$.
Then
\vvn.4>
\begin{align}
\label{Hrel}
& \HT\,={}
\\[-2pt]
\notag
&\!\?{}=\,\C[\gmb]^{\>S_k}\!\ox\C[\gmbb]^{\:S_{n-k}}\!\ox\C[\zz\:;h\:]\>
\Big/\Bigl\bra\,\prod_{i=1}^k\,(u-\gm_i)\,\prod_{j=1}^{n-k}\,(u-\gmbr_j)\,=\,
\prod_{a=1}^n\,(u-z_a)\Bigr\ket\,,\kern-.6em
\\[-15pt]
\notag
\end{align}
where $\,u\,$ is an indeterminate. Notice that $\,c_1(E)=\gm_1\lsym+\gm_k\,$
and $\,c_1(\Eb)=\gmbr_1\lsym+\gmbr_{n-k}\>$ are the equivariant first Chern
\vv.04>
classes of the vector bundles $\,E\,$ and $\,\Eb\,$, respectively. Moreover,
\vvn-.4>
\be
c_1(E)+c_1(\Eb)\,=\,z_1\lsym+z_n\,.
\ee

\subsection{Equivariant quantum differential equation}
\label{secmult}
The quantum
\vv.07>
multiplication by divisors on $\,\HT\,$ is described in \cite{MO}.
\vv-.05>
The fundamental equivariant cohomology classes of divisors on $\TGkn$
\vv.07>
are linear combinations of $\,c_1(E)\,$ and $\,c_1(\Eb)\,$.
The quantum multiplication $\,c_1(E)\>*_p:\HT\to\HT\,$ by the divisor
$\,c_1(E)\,$ depends on a parameter $\,p\,$ and is given in \cite{MO}\:,
\vv.06>
see Theorem~10.2.1 and Proposition~11.2.2 therein.
The parameter $\,p\,$ in this paper equals
\vv.07>
$\,q^{\:\ell}$ in \cite[Proposition~11.2.2\:]{MO}\:. The quantum multiplication
by $\,c_1(E)+c_1(\Eb)\,$ does not depend on $\,p\,$ and coincides with
the ordinary multiplication by $\,c_1(E)+c_1(\Eb)=z_1\lsym+z_n\,$.
Furthermore, the quantum multiplication by any polynomial $\,g(\zz\:;h)\,$
coincides with the ordinary multiplication by $\,g(\zz\:;h)\,$.

\vsk.3>
The quantum connection $\,\:\naknp\>$ for $\,\HT\,$ is defined by the formula
\vvn.1>
\beq
\label{nabla}
\naknp\:=\,p\>\frac\der{\der\:p}\>-\>c_1(E)*_p{},\kern-2em
\vv-.1>
\eeq
see \cite[\:(1.15)\:]{MO}\:. The equation
\vvn-.4>
\beq
\label{qde}
p\>\frac{\der\<f}{\der\:p}\,=\,c_1(E)*_p\<f\kern-1.6em
\vv.1>
\eeq
for horizontal sections for the quantum connection
is called the {\it equivariant quantum differential equation}.
It can be viewed as an ordinary differential equation in $\,\:p\,$
depending on $\,\zz,h\,$ as parameters.

\begin{lem}
\label{lem sing}
Equation \eqref{qde} depends holomorphically on $\,\zz\:,h\,$.
For fixed $\,\zz,h\>$, equation \eqref{qde} is Fuchsian with singular points
at $\,p=0\,$, $\,p=1\,$, and $\,p=\infty\,$.
\end{lem}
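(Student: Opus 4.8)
The plan is to analyze the structure of the quantum multiplication operator $c_1(E)*_p$ as an endomorphism-valued function of $p$, reading off its explicit form from \cite{MO}. First I would recall from \cite[Theorem~10.2.1, Proposition~11.2.2]{MO} that $c_1(E)*_p$ is given by the classical cup product $c_1(E)\cup(\,\cdot\,)$ plus a $p$-dependent correction term, and that this correction is a rational function of $p$ whose denominator is a power of $(1-p)$ (after the substitution $p=q^{\ell}$ in the cited proposition). Concretely, the matrix entries of $c_1(E)*_p$ in the fixed-point basis $\{[F_I]\}_{I\in\Ikn}$ are elements of $\C(\zz;h)[p,(1-p)^{-1}]$, and moreover each entry is regular at $p=0$ (the purely quantum corrections vanish as $p\to0$, since they are weighted by positive powers of $p$). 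Holomorphic dependence on $\zz,h$ is then immediate: the entries are polynomial in $\zz,h$ after clearing the denominator $(1-p)^{N}$, hence holomorphic wherever that denominator is nonzero, which is everywhere in the $(\zz,h)$-variables.

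Next I would rewrite \eqref{qde} in matrix form. Fixing a basis of $\HT$ over $\C(\zz;h)$ — say the fixed-point classes $[F_I]$ — equation \eqref{qde} becomes $p\,\frac{d}{dp}\vec f = M(p)\vec f$ where $M(p)$ is the matrix of $c_1(E)*_p$. Dividing by $p$, the system $\frac{d}{dp}\vec f = \frac{1}{p}M(p)\vec f$ has coefficient matrix $A(p)=M(p)/p$. From the structure above, $A(p)$ has a simple pole at $p=0$ (since $M(p)$ is regular and generically nonzero there), a pole at $p=1$ whose order is controlled by the power of $(1-p)$ in the denominators of $M(p)$, and we must examine $p=\infty$ separately. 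A Fuchsian (regular-singular) point is one where $A(p)$ has at worst a simple pole. At $p=0$ this is clear. At $p=1$ I would need to verify that the apparent higher-order pole is in fact at most simple; I expect that the precise form of the quantum correction in \cite{MO} — which for the $T^*\Gr$ case is known to produce a connection with only logarithmic singularities — gives exactly a simple pole at $p=1$. For $p=\infty$, substitute $p=1/w$, so $p\,\frac{d}{dp}=-w\,\frac{d}{dw}$, and \eqref{qde} becomes $w\,\frac{d}{dw}\vec f = -M(1/w)\vec f$; since $M(1/w)\to M^{\infty}$ (a finite limit, the "large-$p$" multiplication operator) as $w\to0$, the point $w=0$, i.e. $p=\infty$, is also regular-singular.

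The main obstacle I anticipate is the verification at $p=1$: a priori the quantum correction term contributes a factor $(1-p)^{-1}$ (or a higher power) to $M(p)$, and after dividing by $p$ one must confirm that $A(p)=M(p)/p$ still has only a simple pole at $p=1$ rather than a pole of order $\ge 2$. This amounts to checking that in the explicit formula of \cite[Proposition~11.2.2]{MO} the divisor operator $c_1(E)*_p$ itself — not some higher derivative — carries at most a first-order pole in $(1-p)$; equivalently that the purely quantum part of $c_1(E)*_p$ is of the form $(1-p)^{-1}$ times a $p$-regular (indeed $p$-polynomial, and in fact $p$-independent after accounting for the geometric series) operator. This is a direct consequence of the shape of the formula in \cite{MO}, where the quantum parameter enters only through the combination $p/(1-p)$ or $\sum_{d\ge1}p^d$, so I would simply cite that structure and conclude. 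Once the three points $p=0,1,\infty$ are each seen to be regular-singular and no other singularities occur (the entries of $A(p)$ being rational in $p$ with poles only at $0$ and $1$), the lemma follows; holomorphy in $\zz,h$ was already settled in the first step.
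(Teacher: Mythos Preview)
Your proposal is correct and takes essentially the same approach as the paper: both arguments reduce the lemma to \cite[Proposition~11.2.2]{MO}. The paper's proof is in fact a single sentence citing that proposition, whereas you have spelled out the verification (holomorphy in $\zz,h$; the form $\text{classical}+\dfrac{p}{1-p}\cdot(\text{operator})$ giving simple poles at $p=0,1,\infty$), which is exactly the content one extracts from that reference.
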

\begin{proof}
The statement follows from \cite[Proposition~11.2.2\:]{MO}\:.
\end{proof}

Integral representations for solutions of the quantum differential equation
were constructed in \cite{TV3} as multidimensional hypergeometric integrals.
In \cite{TV4}\,, \cite{TV5}\,, solutions were presented in another way
in the form of multidimensional hypergeometric functions. We remind
the construction from \cite{TV4}\:, \cite{TV5} in Section \ref{sec Int Rep}.

\vsk.2>
In this paper we will describe the monodromy of equivariant quantum differential
equation \eqref{qde}\:. To this end, we will describe the analytic continuation
of its solutions around the singular points $\,\:p=0\,$ and $\,\:p=\infty\,$,
and from a neighbourhood of $\,p=0\,$ to a neighbourhood of $\,p=\infty\,$.
For the precise statements, see Theorems \ref{monoP}, \ref{trans}.

\vsk.2>
For convenience of writing, instead of differential equation \eqref{qde}\:,
we will consider the {\it modified quantum differential equation\/}
\vvn.3>
\beq
\label{mqde}
p\>\frac{\der\<f}{\der\:p}\,=\,c_1(E)*_p\<f\:-\:
\frac{hp}{1-p}\:\min\:(0,n-2k)\>f\kern-1.6em
\vv.2>
\eeq
If $\>f(p)\,$ solves equation \eqref{mqde}\:,
then $\,(1-p)^{h\<\min\:(0,\>n\:-2k)}f(p)\,$ solves equation \eqref{qde}\:,
and vice versa.

\subsection{Equivariant $\:K\?$-theory of $\,\TGkn\,$}
\label{sec:equivK}
We consider the equivariant $\:K\?$-theory algebra $\,\KT\,$.
Our general reference is \cite[Ch.5]{ChG}.

\vsk.2>
Introduce new variables $\,\:\GG=\{\:\Gm_{\?1}\lc\Gm_{\<k}\}\,$,
$\,\GGb=\{\:\Gmb_{\?1}\lc\Gmb_{\<n-k}\}\,$, $\ZZ=\{\:Z_1\lc Z_n\:\}$
and $H$. Let $\,\:\C[\:\GG^{\pm1}]\,$, $\,\:\C[\:\GGb^{\pm1}]\,$,
$\,\:\C[\:\ZZ^{\pm1};H^{\pm1}\:]\,$ be the algebras of Laurent polynomials
in the respective variables, and
$\,\:\C[\:\GG^{\pm1}]^{\>S_k}$, $\,\:\C[\:\GGb^{\pm1}]^{\:S_{n-k}}\:$
\vvn.4>
the algebras of symmetric Laurent polynomials. Then
\begin{align}
\label{Krel}
& \KT\,={}
\\[-2pt]
\notag
\kern-2em{}=\,{}&\C[\:\GG^{\pm1}]^{\>S_k}\!\ox\C[\:\GGb^{\pm1}]^{\:S_{n-k}}
\!\ox\CZH\>\Big/\Bigl\bra\,
\prod_{i=1}^k\,(u-\Gm_{\?i})\,\prod_{j=1}^{n-k}\,(u-\Gmb_{\!j})\,=\,
\prod_{a=1}^n\,(u-Z_a)\Bigr\ket\,.\kern-.66em
\end{align}
Here $\,u\,$ is an indeterminate, the variables $\,\:\Gm_{\?1}\lc\Gm_{\<k}\>$
and $\,\:\Gmb_{\?1}\lc\Gmb_{\<n-k}\>$ correspond to (virtual) line bundles
denoted by the same letters such that $\,E=\Gm_{\?1}\<\lsym\oplus\Gm_{\<k}\>$
and $\,\Eb=\Gmb_{\?1}\<\lsym\oplus\Gmb_{\<n-k}\,$,
and the variables $\,Z_1\lc Z_n\:,H\,$ correspond to the factors
\vv.06>
of the torus $\,T=(\Cxs)^n\!\times\Cxs$. The algebra $\,\KT\,$
is a free module over $\,K_T(\mathrm{pt})=\C[\:\ZZ^{\pm1};H^{\pm1}\:]\,$.

\vsk.2>
Recall the fixed points $\,F_I\<\in\TGkn\,$, $\,\:I\?\in\Ikn\,$, of the action
\vv.04>
of $\,T\,$ on $\,\TGkn\,$. The restriction of the class $\,X\<\in\KT\,$
\vv.04>
to the point $\,F_I\>$ will be denoted by $\,X|_{\:I}^{}\>$.
In terms of the variables $\,\:\GG,\GGb\:$, the restriction
to the point $\,F_I\>$ amounts to the substitution
\vvn.4>
\be
\{\:\Gm_{\?1}\lc\Gm_{\<k}\}\,\mapsto\,\{\:Z_i\ |\ \:i\<\in\<I\>\}\,,\qquad
\{\:\Gmb_{\?1}\lc\Gmb_{\<n-k}\}\,\mapsto\,\{\:Z_j\ |\ \>j\<\not\in\<I\>\}\,.
\kern-1em
\vv.4>
\ee
Consider the {\it equivariant localization} map
\vvn.4>
\begin{align*}
\Loc\::\:\KT\: &{}\to\,K_T\bigl(\TGkn^T\:\bigr)\>=\!
\tbigoplus_{I\in\Ikn\!}\!K_T(F_I)\kern-2em
\\
\Loc\::\:X\> &{}\mapsto\<\tbigoplus_{I\in\Ikn\!}\!X|_{\:I}^{}\,.\kern-2em
\\[-15pt]
\end{align*}
The equivariant localization theory asserts that the map $\,\:\Loc\,\:$ is
\vv.06>
an embedding of algebras, see for example \cite[\:Chapter 5\:]{ChG}\:,
\cite[\:Appendix\:]{RK}\:. Moreover, an element
$\,\bigoplus_{\>I\in\Ikn\!}\?U_I\,$ is in the image of $\,\:\Loc\,\:$ if and
\vv.06>
only if for any $\,I\?\in\Ikn$ and any $\,i\:,j\<\in\<\onen\,$, $\,\:i\ne j\,$,
the difference $\,U_I\<-U_{s_{\ij}(I)}\,$ is divisible by $\,Z_i\<-Z_j\,$,
where $\,s_{\ij}\<\in\<S_n\>$ is the transposition of $\,i\:,j\,$.

\subsection{Transition map}
\label{sectrans}

Consider additional variables $\,\Tb=\{\:T_1\lc T_k\}\,$.
\vv.04>
For a Laurent polynomial $\,P(\Tb;\ZZ;H)\,$ symmetric in $\,T_1\lc T_k\:$,
define
the class $\,P(\GG;\ZZ;H)\in\<\KT\,$ via the substitution
\vv.04>
$\,T_i\mapsto\Gm_{\?i}\>$, $\,i=1\lc k\,$.
Define also the class $\,P(H\GG;\ZZ;H)\,$ via the substitution
$\,T_i\mapsto\<H\Gm_{\?i}\>$, $\,i=1\lc k\,$.

\vsk.3>
Let $\,\Pc\>$ be the space of Laurent polynomials $\,P(\Tb;\ZZ;H)\,$ that
\vv.1>
are symmetric polynomials in $\,\:T_1^{-1}\?\lc T_k^{-1}\:$ of degree
at most $\,n-1\,$ in each of $\,\:T_1^{-1}\?\lc T_k^{-1}$, and such that
\vvn.4>
\beq
\label{PHZ}
P(Z_a\:,H\?Z_a\:,T_3\:\lc T_k\:;\ZZ;H)\,=\,0\,,\qquad a=1\lc n\,.\kern-2em
\eeq

Set
\vvn-.5>
\beq
\label{ET}
E(\Tb;H)\,=\,\prod_{i=1}^k\,\prod_{\satop{j=1}{j\ne i}}^k\,(1-HT_i/T_j)\,.
\kern-1.8em
\eeq

\begin{prop}
\label{lemPEK}
For any $\,P\<\in\Pc$, there are unique classes
$\,\So_{\<P}\:,\Sf_{\<P}\<\in\<\KT\:$ such that
\vvn-.1>
\be
P(\GG;\ZZ;H)\,=\,E(\GG;H)\>\So_{\<P}\,,\qquad
P(H\GG;\ZZ;H)\,=\,E(\GG;H)\>\Sf_{\<P}\,.\kern-2em
\vv.5>
\ee
Abusing notation, we will write
\vv.4>
\be
\So_{\<P}\>=\,P(\GG;\ZZ;H)/E(\GG;H)\,,\qquad
\Sf_{\<P}\:=\,P(H\GG;\ZZ;H)/E(\GG;H)\,.\kern-1.8em
\vv.2>
\ee
\end{prop}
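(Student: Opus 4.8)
The plan is to verify that the defining equations of $\KT$ in \eqref{Krel}, together with the vanishing conditions \eqref{PHZ}, force $P(\GG;\ZZ;H)$ and $P(H\GG;\ZZ;H)$ to be divisible by $E(\GG;H)$ \emph{inside} the algebra $\KT$, and then to note that division by a nonzerodivisor in a free module is unique. The natural tool is the equivariant localization embedding $\Loc:\KT\to\bigoplus_{I\in\Ikn}K_T(F_I)$ recalled in Section \ref{sec:equivK}: it suffices to show that, for every $I\in\Ikn$, the restriction $E(\GG;H)|_I$ divides $P(\GG;\ZZ;H)|_I$ and $P(H\GG;\ZZ;H)|_I$ in $\C[\ZZ^{\pm1};H^{\pm1}]$, and that the resulting family of quotients $\{E(\GG;H)^{-1}P(\GG;\ZZ;H)|_I\}_I$ (and the $H\GG$ analogue) lies in the image of $\Loc$, i.e.\ satisfies the GKM divisibility condition ``$U_I-U_{s_{ij}(I)}$ divisible by $Z_i-Z_j$.''

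First I would unwind the restrictions. Writing $I=\{i_1<\dots<i_k\}$, the substitution $\Gm_a\mapsto Z_{i_a}$ gives
\be
E(\GG;H)|_I\,=\,\prod_{a=1}^k\prod_{\satop{b=1}{b\ne a}}^k\bigl(1-HZ_{i_a}/Z_{i_b}\bigr),
\ee
and $P(\GG;\ZZ;H)|_I=P(Z_{i_1}\lc Z_{i_k};\ZZ;H)$. So I must show $P(Z_{i_1}\lc Z_{i_k};\ZZ;H)$ is divisible by each factor $1-HZ_{i_a}/Z_{i_b}$, equivalently by $Z_{i_b}-HZ_{i_a}$ up to a unit. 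This is exactly where \eqref{PHZ} enters: setting two of the arguments of $P$ equal to $Z_a$ and $HZ_a$ makes $P$ vanish, so (using that $P$ is symmetric in the $T_i^{-1}$, hence symmetric in the $T_i$) the polynomial $P(\dots)$ as a Laurent polynomial in its arguments vanishes on the locus $\{T_b=HT_a\}$ after specializing the remaining variables appropriately — more precisely, $P(Z_{i_1}\lc Z_{i_k};\ZZ;H)$ vanishes whenever $Z_{i_b}=HZ_{i_a}$, because then the pair $(Z_{i_a},Z_{i_b})=(Z_{i_a},HZ_{i_a})$ is of the forbidden form in \eqref{PHZ} with $a\rightsquigarrow i_a$. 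Hence $Z_{i_b}-HZ_{i_a}$ divides the restriction. Since the distinct factors $Z_{i_b}-HZ_{i_a}$, $a\ne b$, are pairwise coprime in $\C[\ZZ^{\pm1};H^{\pm1}]$ (a UFD in the relevant variables), their product $E(\GG;H)|_I$ divides $P(\GG;\ZZ;H)|_I$, and the quotient $\So_{<P}|_I$ is a well-defined Laurent polynomial. The $H\GG$ case is identical after the substitution $T_i\mapsto HZ_{i_a}$, using that $P(HZ_{i_a},HZ_{i_b},\dots)$ vanishes when $HZ_{i_b}=H\cdot HZ_{i_a}$, i.e.\ $Z_{i_b}=HZ_{i_a}$, again of the forbidden form.

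The main obstacle — and the step that needs genuine care — is verifying that the quotient family actually lies in $\Im(\Loc)$, i.e.\ the GKM condition. For a transposition $s_{ij}$ swapping $i\in I$ with $j\notin I$, one has $E(\GG;H)|_I\equiv E(\GG;H)|_{s_{ij}(I)}\pmod{Z_i-Z_j}$ and similarly for the numerators (both being restrictions of genuine classes, namely $E(\GG;H)\in\KT$ and $P(\GG;\ZZ;H)\in\KT$, which already satisfy GKM), so modulo $Z_i-Z_j$ the quotients agree \emph{provided} $E(\GG;H)|_I$ is a nonzerodivisor mod $Z_i-Z_j$; this holds since no factor $1-HZ_{i_a}/Z_{i_b}$ becomes $0$ under $Z_i=Z_j$ (as $i\in I$, $j\notin I$, the indices $i_a,i_b$ both range over $I$, so no factor involves $Z_j$). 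A short lemma ``if $A|_I\equiv A|_{I'}$, $B|_I\equiv B|_{I'}$ mod $Z_i-Z_j$, $B|_I$ divides $A|_I$ and $A|_{I'}$, and $B|_I$ is a unit mod $Z_i-Z_j$, then $A|_I/B|_I\equiv A|_{I'}/B|_{I'}$ mod $Z_i-Z_j$'' closes this. Finally, uniqueness of $\So_{<P},\Sf_{<P}$ is immediate: $E(\GG;H)$ restricts to a nonzerodivisor at every fixed point, hence is a nonzerodivisor in $\KT$ via $\Loc$, so $E(\GG;H)\,S=E(\GG;H)\,S'$ forces $S=S'$. I would also double-check the degree bookkeeping from the definition of $\Pc$ (degree $\le n-1$ in each $T_i^{-1}$) to confirm $\So_{<P}$ and $\Sf_{<P}$ genuinely land in $\KT$ rather than a localization — but this is a routine count against the presentation \eqref{Krel} and not the crux.
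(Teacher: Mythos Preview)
Your argument is correct and follows the same localization strategy as the paper: show that the restrictions $P(\ZZ_I;\ZZ;H)/E(\ZZ_I;H)$ and $P(H\ZZ_I;\ZZ;H)/E(\ZZ_I;H)$ are Laurent polynomials for every $I$, then invoke the description of $\Im(\Loc)$. In fact you are more careful than the paper, which simply asserts ``therefore there exist unique classes'' without spelling out the GKM check that you supply.

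Two small clean-ups. First, in your GKM step you say $E(\GG;H)|_I$ is a \emph{unit} modulo $Z_i-Z_j$; what you need and what your argument actually establishes is that it is a \emph{nonzerodivisor} modulo $Z_i-Z_j$ (the quotient $\C[\ZZ^{\pm1};H^{\pm1}]/(Z_i-Z_j)$ is a domain and $E(\GG;H)|_I$ maps to a nonzero element since $j\notin I$), which is enough for the cancellation you perform. Second, the closing remark about ``degree bookkeeping'' is unnecessary: once the family $\{G_I\}$ lies in $\Im(\Loc)$ it is automatically a class in $\KT$, with no further degree condition to verify---the bound $\le n-1$ in the definition of $\Pc$ plays no role in this proposition (it is needed only later for the bijectivity in Proposition~\ref{lempi}).
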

\begin{proof}
For $\,I=\{i_1\?\lsym<i_k\}\<\in\Ikn\,$, denote
\vv.088>
$\,\ZZ_{\<I}=(Z_{i_1}\lc Z_{i_k})\,$. Since $\,P(\Tb;\ZZ;H)\,$
is symmetric in $\,T_1\lc T_k\,$ and has property \eqref{PHZ}\:,
\vv.13>
the values at $\,\Tb=\ZZ_{\<I}\,$ of the rational functions
$\,P(\Tb;\ZZ;H)/E(\Tb;H)\,$, $\,P(H\Tb;\ZZ;H)/E(\Tb;H)\,$,
\vvn.4>
\be
\Go_{\?I}\<(\ZZ;H)\>=\>\frac{P(\ZZ_{\<I}\:;\ZZ;H)}{E(\ZZ_{\<I}\:;H)}\;,\qquad
\Gf_{\?I}\?(\ZZ;H)\>=\>\frac{P(H\?\ZZ_{\<I}\:;\ZZ;H)}{E(\ZZ_{\<I}\:;H)}\;,
\kern-2em
\vv.4>
\ee
are Laurent polynomials in $\,\ZZ,H\,$ for all $\,I\?\in\Ikn\,$.
\vv.08>
Therefore, there exist unique classes $\,\So_{\<P}\:,\Sf_{\<P}\<\in\<\KT\,$
such that $\,\So_{\<P}|_{\:I}^{}\<=\Go_{\?I}\,$,
$\,\:\Sf_{\<P}\<|_{\:I}^{}\<=\Gf_{\?I}\,$ for all $\,I\?\in\Ikn\,$.
\vv.07>
That proves Proposition \ref{lemPEK}.
\end{proof}

Consider the maps
\vvn.1>
\begin{gather}
\label{pimap}
\pio:\:\Pc\:\to\>\KT\,,\qquad P\:\mapsto\:P(\GG;\ZZ;H)/E(\GG;H)\,,
\kern-2em
\\[6pt]
\notag
\pif:\:\Pc\:\to\>\KT\,,\qquad P\:\mapsto\:P(H\GG;\ZZ;H)/E(\GG;H)\,.
\kern-2em
\\[-15pt]
\notag
\end{gather}

\begin{prop}
\label{lempi}
The maps $\,\:\pio$ and $\;\pif\<$ are isomorphisms of $\;\CZH$-modules
\vvn.12>
$\; \Pc$ and $\,\:\KT\,$.
\end{prop}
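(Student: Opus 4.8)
The plan is to show that both $\pio$ and $\pif$ are injective $\CZH$-module homomorphisms between free $\CZH$-modules of equal finite rank, whence they are automatically isomorphisms by a rank count. That $\pio$ and $\pif$ are $\CZH$-linear is immediate from their definitions, since multiplication by the fixed element $E(\GG;H)$ and the substitutions $\Tb\mapsto\GG$, $\Tb\mapsto H\GG$ are all $\CZH$-linear. The key structural facts to assemble are: (i) $\KT$ is a free $\CZH$-module of rank $\binom nk=\#\Ikn$ by the presentation \eqref{Krel}; and (ii) $\Pc$ is a free $\CZH$-module of the same rank $\binom nk$. For (ii) I would first note that the ambient space of Laurent polynomials $P(\Tb;\ZZ;H)$ symmetric in $T_1^{-1},\dots,T_k^{-1}$ of degree at most $n-1$ in each $T_i^{-1}$ is free over $\CZH$ with basis the monomial symmetric functions in $T_1^{-1},\dots,T_k^{-1}$ indexed by partitions $\la$ with $\la_1\le n-1$ and at most $k$ parts — there are $\binom nk$ such partitions. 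The cutting conditions \eqref{PHZ}, for $a=1,\dots,n$, then carve out a submodule; one must check this submodule is again free of rank exactly $\binom nk$, i.e. that the conditions are "independent enough" to drop the rank from $\binom{n-1+k}{k}$ down to $\binom nk$ but no further. I would do this by a direct filtration/leading-term argument, or alternatively by deferring to the explicit basis of $\Pc$ presumably exhibited in Section \ref{sec:pfs}.

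Next, for injectivity, I would argue that if $P\in\Pc$ satisfies $P(\GG;\ZZ;H)=E(\GG;H)\,\So_P$ with $\So_P=0$ in $\KT$, then $P(\GG;\ZZ;H)=0$ in $\KT$; by the localization embedding $\Loc$ of Section \ref{sec:equivK}, this means $P(\ZZ_I;\ZZ;H)=0$ for every $I\in\Ikn$. Since $P$ is a Laurent polynomial in $T_1^{-1},\dots,T_k^{-1}$ of degree $\le n-1$ in each, symmetric in those variables, and its values at all the $\binom nk$ unordered substitutions $\{Z_i: i\in I\}$ vanish, a Lagrange-interpolation / vanishing argument forces $P\equiv 0$ — here one uses that the $Z_a$ are independent variables, so specialising a polynomial of controlled degree at enough distinct points kills it. Hence $\ker\pio=0$, and the same reasoning applied to the substitution $\Tb\mapsto H\GG$ gives $\ker\pif=0$; note $H$ is invertible so the substitution $T_i\mapsto H\Gm_i$ is a bijection on the relevant monomials and the degree bound is preserved. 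With injectivity and equality of ranks in hand, surjectivity of each map is automatic (an injective endomorphism-type map of free modules of the same finite rank over a domain is surjective after we identify both sides as submodules of the localization, or more cleanly: an injective $\CZH$-linear map between free modules of equal finite rank over the Noetherian domain $\CZH$ whose cokernel is finitely generated and torsion must vanish — here one checks the cokernel is torsion by base-changing to the fraction field, where both sides have the same dimension and injectivity is preserved).

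The main obstacle I expect is pinning down step (ii): proving that the linear conditions \eqref{PHZ} cut the rank of $\Pc$ down to exactly $\binom nk$, neither more nor less. Showing $\operatorname{rank}\Pc\le\binom nk$ is the delicate direction, because one must verify the $n$ conditions (each depending on the generic point $Z_a$) are not redundant beyond the expected amount; the cleanest route is probably to produce an explicit $\CZH$-basis of $\Pc$ and count, or to set up a leading-monomial order in which the conditions \eqref{PHZ} have linearly independent leading terms spanning a complementary $\binom{n-1+k}{k}-\binom nk$-dimensional space. Once the rank equality is secured, everything else is the interpolation lemma and the general free-module argument, both routine. I would also remark that, combined with Proposition \ref{lemPEK}, this immediately yields a $\CZH$-linear automorphism $\tau=\pif\circ(\pio)^{-1}$ of $\KT$, which is precisely the transition map referred to in the introduction.
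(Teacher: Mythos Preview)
Your surjectivity argument has a genuine gap. An injective $\CZH$-linear map between free modules of equal finite rank over a domain need not be surjective: multiplication by $1-H$ on $\CZH$ itself is the obvious counterexample, and more generally the cokernel being torsion does not force it to vanish. So even granting your rank computation (ii) and your injectivity sketch, the conclusion does not follow. This also means your ``main obstacle'' (ii) is not just an obstacle but a dead end: knowing $\rank\Pc=\binom nk$ would not finish the proof.

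The paper proceeds in the opposite order. Surjectivity is proved directly by exhibiting, for each $X\in\KT$, an explicit preimage in $\Pc$: the trigonometric weight functions $\Weo_{\!X}$ and $\Wef_{\!X}$ defined in \eqref{WofX} satisfy $\pio\Weo_{\!X}=X$ and $\pif\Wef_{\!X}=X$ (Proposition~\ref{piWX}). With surjectivity in hand, the rank of $\Pc$ need not be computed independently.

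Your injectivity sketch is also incomplete as written. Vanishing at the $\binom nk$ points $\ZZ_I$ alone does not kill a symmetric polynomial of degree $\le n-1$ in each of $T_1^{-1},\dots,T_k^{-1}$, since for $k\ge2$ the ambient space has rank $\binom{n-1+k}{k}>\binom nk$; the $\Pc$-conditions \eqref{PHZ} must enter. The paper's argument is an induction on $k$: expand $P=\sum_i\Pt_i(T_1,\dots,T_{k-1})\,T_k^{-i}$; each $\Pt_i\in\Pc_{k-1}$, and for every $(k{-}1)$-subset $I$ the $n$ evaluations $P(\ZZ_I,HZ_a)=0$ for $a\in I$ (from \eqref{PHZ}) together with $P(\ZZ_I,Z_b)=0$ for $b\notin I$ (from $\pio P=0$) force $\Pt_i(\ZZ_I)=0$, i.e.\ $\pio_{k-1}\Pt_i=0$, whence $\Pt_i=0$ by induction.
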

\noindent
Proposition \ref{lempi} is proved in Section \ref{pipfs}.

\vsk.4>
Define the {\it transition map} $\,\:\tau:\KT\to\:\KT\,$ by the rule
\vvn.3>
\beq
\label{tau}
\tau\,=\,\pif(\pio\<)^{-1}\:.\kern-1em
\vv.2>
\eeq

\begin{cor}
\label{lemmu}
The map $\,\:\tau\:$ is an automorphism of the $\;\CZH$-module $\,\KT\,$.
\end{cor}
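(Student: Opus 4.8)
The statement to prove is Corollary~\ref{lemmu}: the transition map $\tau = \pif(\pio)^{-1}$ is an automorphism of the $\CZH$-module $\KT$.

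\medskip

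The plan is to deduce this immediately from Proposition~\ref{lempi}. By that proposition, both $\pio\colon\Pc\to\KT$ and $\pif\colon\Pc\to\KT$ are isomorphisms of $\CZH$-modules. Hence $(\pio)^{-1}\colon\KT\to\Pc$ is a $\CZH$-module isomorphism, and composing it with the $\CZH$-module isomorphism $\pif\colon\Pc\to\KT$ yields a $\CZH$-module isomorphism $\KT\to\KT$. A module isomorphism from a module to itself is by definition an automorphism, so $\tau$ is an automorphism of the $\CZH$-module $\KT$.

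\medskip

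I would write this out in one or two sentences: compositions of module homomorphisms are module homomorphisms, inverses of module isomorphisms are module isomorphisms, and therefore $\tau=\pif\circ(\pio)^{-1}$ is a $\CZH$-module automorphism of $\KT$ because it is a bijective $\CZH$-module endomorphism of $\KT$. There is essentially no obstacle here; the only thing to be careful about is that all the structure maps involved are genuinely $\CZH$-linear, which is exactly the content of Proposition~\ref{lempi}, and that the source and target of $\tau$ coincide, which is clear from the definitions of $\pio$ and $\pif$ in \eqref{pimap}. The whole content of the corollary is packaged in the preceding proposition, so the proof is a one-line formality.
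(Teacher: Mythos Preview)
Your proof is correct and matches the paper's approach: the corollary is stated without proof immediately after Proposition~\ref{lempi}, so it is intended as an immediate consequence of that proposition, exactly as you argue.
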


Recall that $\,\KT\,$ is a free $\,\C[\:\ZZ^{\pm1};H^{\pm1}\:]\:$-\:module.
\vv.06>
Then given a basis of $\,\KT\,$, the maps $\,\:\tau\,$ and $\,\:\tau^{-1}$
are given by matrices with entries in $\,\C[\:\ZZ^{\pm1};H^{\pm1}\:]\,$.
The determinant $\,\:\det\tau\,$ does not depend on a choice of a basis
of $\,\KT\,$.

\begin{prop}
\label{dettau}
We have $\;\det\tau\:=\:
H^{\tsize\<-\frac{n\:(n-1)}2\<\binom{\:n-1\:}{k-1}}_{\vp1}$.
\end{prop}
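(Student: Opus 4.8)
The plan is to compute $\det\tau$ by computing $\det\pio$ and $\det\pif$ separately (with respect to suitable bases) and using $\tau=\pif(\pio)^{-1}$, so that $\det\tau=\det\pif/\det\pio$. Since the transition $\pif$ differs from $\pio$ only by the substitution $\Gm_i\mapsto H\Gm_i$ in the numerator, and the denominator $E(\GG;H)$ is the same in both, the ratio $\det\pif/\det\pio$ will reduce to a clean power of $H$ coming purely from the bookkeeping of how the substitution $\Tb\mapsto H\Tb$ scales a basis of $\Pc$. Concretely, I would choose a monomial-type basis of the free $\CZH$-module $\Pc$ and track the total $H$-degree picked up under $P(\Tb)\mapsto P(H\Tb)$.

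First I would fix a basis. Using Proposition \ref{lempi}, which says $\pio$ and $\pif$ are isomorphisms, it suffices to pick any convenient basis $\{P_\alpha\}$ of $\Pc$ over $\CZH$ and any basis of $\KT$; then $\det\tau$ is the determinant of the change-of-basis matrix between $\{\pif(P_\alpha)\}$ and $\{\pio(P_\alpha)\}$, and in fact $\det\tau = \det[\pif]\cdot(\det[\pio])^{-1}$ where the matrices $[\pio],[\pif]$ are taken from $\{P_\alpha\}$ to a fixed basis of $\KT$. The key observation: the image $\pif(P)=P(H\GG;\ZZ;H)/E(\GG;H)$ is obtained from $\pio(P')/$ where $P'(\Tb;\ZZ;H)=P(H\Tb;\ZZ;H)$, but $P'$ need not lie in $\Pc$ (it has the same $T^{-1}$-degree bound and still vanishes on the locus \eqref{PHZ} after relabeling), so I must check $P\mapsto P(H\Tb)$ actually preserves $\Pc$ — it does, because \eqref{PHZ} is symmetric under $Z_a\leftrightarrow HZ_a$ in the relevant sense and the degree condition is scale-invariant. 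Then $\pif = \pio\circ\sigma$ where $\sigma:\Pc\to\Pc$, $\sigma(P)(\Tb)=P(H\Tb)$, whence $\det\tau=\det\sigma$.

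So the problem reduces to computing $\det\sigma$ for the linear map $P(\Tb)\mapsto P(H\Tb)$ on $\Pc$ over $\CZH$. On a monomial $\prod_i T_i^{-a_i}$ (with the $a_i$ ranging over a basis-indexing set of size $\rank_{\CZH}\Pc = \dim_\C H^*_T = \binom nk$, after symmetrization), $\sigma$ multiplies by $H^{-\sum_i a_i}$. Hence $\det\sigma = H^{-M}$ where $M$ is the sum of $\sum_i a_i$ over all basis elements. A natural basis of the symmetric-Laurent-polynomial space $\Pc$ is indexed by $k$-subsets $J\subset\onen$ (equivalently partitions inside a $k\times(n-k)$ box), and the natural "weight" $\sum a_i$ of the corresponding basis vector is, up to an overall shift, $\sum_{j\in J}(j-1)$ or its complement; summing over all $\binom nk$ subsets $J$ and using $\sum_{J\in\Ikn}\sum_{j\in J} 1 = \binom{n-1}{k-1}\cdot(\text{each }j)$, the total is $\binom{n-1}{k-1}\sum_{j=1}^n (j-1) = \binom{n-1}{k-1}\frac{n(n-1)}2$. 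This matches the claimed exponent $-\frac{n(n-1)}2\binom{n-1}{k-1}$.

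The main obstacle is pinning down the precise basis of $\Pc$ and the exact $H$-weight of each basis element, because $\Pc$ is cut out by the nontrivial vanishing conditions \eqref{PHZ}, not just by a degree bound — so a naive monomial basis $\{\prod T_i^{-a_i}\}$ does not lie in $\Pc$, and one must use a basis adapted to the quotient/ideal structure (e.g. images of Schur-type polynomials, which is presumably why the Appendix collects facts about Schur polynomials). The clean way around this is to argue that $\det\sigma$ is computed on the associated graded of $\Pc$ for the filtration by total $T^{-1}$-degree: $\sigma$ is "diagonal" for this filtration with eigenvalue $H^{-d}$ on the degree-$d$ piece, the vanishing conditions \eqref{PHZ} are homogeneous-compatible with this filtration, and $\dim_\C(\mathrm{gr}^d\Pc)$ equals the number of $k$-subsets $J$ of $\onen$ with the appropriate statistic equal to $d$. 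Then $M=\sum_d d\cdot\dim\mathrm{gr}^d\Pc$ is exactly the sum over subsets computed above. I would also sanity-check the sign/orientation (that $\sigma$ genuinely acts by $H^{-(\cdot)}$ and not $H^{+(\cdot)}$) against a small case such as $k=1$, $n=2$, where $\Pc$ has rank $2$ and the exponent should be $-1$.
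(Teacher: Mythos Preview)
Your central reduction $\pif=\pio\circ\sigma$ with $\sigma(P)(\Tb)=P(H\Tb)$ breaks down: the map $\sigma$ does \emph{not} preserve $\Pc$. If $P\in\Pc$, then $\sigma(P)(Z_a,HZ_a,T_3,\dots)=P(HZ_a,H^2Z_a,HT_3,\dots)$, and the defining conditions \eqref{PHZ} only give vanishing of $P$ when two arguments equal $Z_b$ and $HZ_b$ for some $b$; they say nothing about the pair $(HZ_a,H^2Z_a)$. So $\sigma(P)$ need not satisfy \eqref{PHZ}. (In the case $k=n=2$ one checks directly that $\Pc$ is one-dimensional and that the single generator is sent by $\sigma$ outside $\Pc$.) Once $\sigma$ fails to act on $\Pc$, the identity $\det\tau=\det\sigma$ is not even meaningful, and your filtration argument collapses for the same reason: there is no endomorphism of $\Pc$ whose determinant you are computing. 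The related obstruction to your backup plan is that the vanishing conditions \eqref{PHZ} are not homogeneous for the $T^{-1}$-degree filtration, so the associated graded of $\Pc$ is not controlled by a naive monomial count.

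The paper proceeds quite differently: it passes to the localized ring $\C(\ZZ;H)$ and uses the explicit fixed-point formula \eqref{tauXI} for $\tau$, which expresses the matrix of $\tau$ in the fixed-point basis as $\bigl(\Weo(H\ZZ_I;\ZZ_{\sigma_J};H)/E(\ZZ_I;H)R(\ZZ_{\sigma_J})\bigr)_{I,J}$. The determinant of this matrix is then computed directly as formula \eqref{WofERdet}, which in turn rests on the determinant identity of Proposition~\ref{lemdet} for the trigonometric weight functions. In other words, the paper avoids any abstract scaling trick and instead evaluates the determinant in a concrete basis where it factors nicely. If you want to salvage your approach, you would need to find a genuine automorphism of $\Pc$ (or of $\KT$) intertwining $\pio$ and $\pif$; the naive substitution $\Tb\mapsto H\Tb$ is not it.
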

\begin{proof}
The statement follows from formulae \eqref{tauXI}\:, \eqref{WofERdet}
\vv.07>
in Section \ref{orthdet} by extension of scalars from
$\,\C[\:\ZZ^{\pm1};H^{\pm1}\:]\,$ to rational functions
$\,\C(\:\ZZ^{\pm1};H^{\pm1}\:)\,$.
\end{proof}

\section{Trigonometric weight functions}
\label{sectrig}
\subsection{Definition}
\label{deftrig}
For a function $\,f(\Tb)\,$, denote
\vvn.4>
\be
\Sym_{\>\Tb}\<f(\Tb)\,=\,\sum_{\si\in S_k}\>f(T_{\si(1)}\lc T_{\si(k)})\,.
\kern-2em
\vv-.5>
\ee
Set
\vvn-.4>
\beq
\label{U}
U(\Tb;\ZZ;H)\,=\,\prod_{i=1}^k\,\biggl(\;
\prod_{a=1}^{i-1}\,\:(1-H\?Z_a/T_i)\?\prod_{b=i+1}^k\!\<(1-Z_{\:b}/T_i)\?
\prod_{j=i+1}^k\<\frac{1-HT_j/T_i}{1-T_j/T_i}\,\biggr)\>.\kern-2.5em
\vv.3>
\eeq
Define the {\it reduced trigonometric weight function}
\vvn.5>
\beq
\label{We}
\We(\Tb;\ZZ;H)\,=\,\Sym_{\>\Tb} U(\Tb;\ZZ;H)\,,\kern-2em
\vv.3>
\eeq
the {\it trigonometric weight function}
\vvn-.2>
\beq
\label{Weo}
\Weo\<(\Tb;\ZZ;H)\,=\,\We(\Tb;\ZZ;H)\,
\prod_{i=1}^k\:\prod_{a=k+1}^n\<(1-Z_a\:/T_i)\,,\kern-2em
\vv-.1>
\eeq
and the {\it opposite trigonometric weight function}
\beq
\label{Wef}
\Wef\<(\Tb;\ZZ;H)\,=\,H^{\>k(k-1)/2}\,\:\We(\Tb;\ZZ;H)\,
\prod_{i=1}^k\:\prod_{a=k+1}^n\<(1-H\?Z_a\:/T_i)\,.
\kern-1.6em
\vv.2>
\eeq
By the standard reasoning, the functions $\,\:\We,\Weo\?,\Wef$ are
polynomials in $\,\:T_1^{-1}\?\lc T_k^{-1}\?,\ZZ,H\>$.

\begin{lem}
\label{symWtr}
The functions $\;\We(\Tb;\ZZ;H)\,$, $\>\Weo\<(\Tb;\ZZ;H)\,$,
\vv.1>
$\>\Wef\<(\Tb;\ZZ;H)\>$ are symmetric in $\,Z_1\lc Z_k\,$, and
in $\,Z_{k+1}\lc Z_n\,$.
\end{lem}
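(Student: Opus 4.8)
The plan is to reduce all three symmetry statements to a single symmetry property of the reduced weight function $\We(\Tb;\ZZ;H)$, namely symmetry in $Z_1\lc Z_k$ and, separately, in $Z_{k+1}\lc Z_n$; the symmetry of $\Weo$ and $\Wef$ then follows immediately, since the extra factors $\prod_{i,a}(1-Z_a/T_i)$ and $\prod_{i,a}(1-H Z_a/T_i)$ in \eqref{Weo}, \eqref{Wef} run over $a=k+1\lc n$ and are manifestly symmetric in $Z_{k+1}\lc Z_n$ (and do not involve $Z_1\lc Z_k$ at all), while $H^{k(k-1)/2}$ is a constant. So the whole lemma comes down to the claim about $\We$.

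For the symmetry of $\We$ in $Z_{k+1}\lc Z_n$ this is trivial from the definition: inspecting \eqref{U}, the summand $U(\Tb;\ZZ;H)$ depends only on $Z_1\lc Z_k$ (the variables $Z_a$ appearing are $Z_a$ for $a\le i-1$ and $Z_b$ for $i+1\le b\le k$, all with $a,b\le k$), hence so does $\We=\Sym_{\Tb}U$, and there is nothing to prove for $Z_{k+1}\lc Z_n$. The content of the lemma is therefore entirely the symmetry of $\We$ in $Z_1\lc Z_k$. The hard part is to show that the single transposition $Z_p\leftrightarrow Z_{p+1}$ for $1\le p\le k-1$ fixes $\We$; since these transpositions generate $S_k$, that suffices.

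To prove invariance under $Z_p\leftrightarrow Z_{p+1}$, I would use the standard "$R$-matrix" argument for weight functions of this type (as in \cite{RTV1,TV5}). Write $\We=\Sym_{\Tb}U$ and group the $k!$ permutations into $k!/2$ pairs $\{\si,\si s_p'\}$ where $s_p'$ is a suitable transposition acting on the $\Tb$-indices chosen so that the two Chern-root arguments "$i$'' and "$i+1$'' sit in the two slots controlling the factors that see $Z_p$ versus $Z_{p+1}$. For each such pair one checks the local identity: the sum of the two corresponding terms of $U$ is, after the swap $Z_p\leftrightarrow Z_{p+1}$, equal to the sum of the two terms of the pair $\{\si',\si' s_p'\}$ obtained by the $\Tb$-transposition, using the elementary rational-function identity
\[
\frac{1-HT_j/T_i}{1-T_j/T_i}\,\bigl(\text{term})+\frac{1-HT_i/T_j}{1-T_i/T_j}\,\bigl(\text{swapped term})
\]
being symmetric in $T_i,T_j$ together with the factored form of the $Z$-dependent prefactors $(1-HZ_a/T_i)$ and $(1-Z_b/T_i)$. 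Concretely, I expect the cleanest route is to verify the identity after specializing to each fixed point, i.e.\ to show $\We|_{\Tb=\ZZ_I}$ is symmetric in $Z_1\lc Z_k$ for every $I\in\Ikn$, using the explicit product formula for $U(\ZZ_I;\ZZ;H)$ (only one permutation survives up to the vanishing of the $1-T_j/T_i$ factors) and comparing $I$ with $s_p(I)$ when both $p,p+1$ lie in $I$ or both lie outside $I$ (the mixed case being handled by direct inspection of the two surviving terms). The main obstacle is purely bookkeeping: matching up which term of the antisymmetrized sum cancels against which after the swap, and handling the three cases ($\{p,p+1\}\subset I$, $\{p,p+1\}\cap I=\varnothing$, and $|\{p,p+1\}\cap I|=1$) uniformly; no genuinely new idea is needed, only care with indices.
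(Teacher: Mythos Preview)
Your reduction to the symmetry of $\We$ in $Z_1\lc Z_k$ is exactly right, and your pairing idea in approach (a) is precisely the paper's method. The paper, however, carries it out in one line by writing down the explicit identity that makes the pair $\{\si,\si s_p\}$ manifestly $Z_p\leftrightarrow Z_{p+1}$-invariant:
\[
(1-Z_b/T_i)(1-HZ_a/T_j)(T_i-HT_j)-(1-Z_b/T_j)(1-HZ_a/T_i)(T_j-HT_i)
\]
equals the same expression with $Z_a\leftrightarrow Z_b$. Once you factor out everything symmetric in $T_{\si(p)},T_{\si(p+1)}$ from the two paired summands, only the three-factor product above survives (with $T_i=T_{\si(p)}$, $T_j=T_{\si(p+1)}$, $a=p$, $b=p+1$), and the identity is an elementary polynomial check. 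So your (a) is correct but undersold: there is no need for any ``$R$-matrix'' machinery or reference to \cite{RTV1,TV5}; the whole thing is this one cubic identity.

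Your preferred route (b), specializing $\Tb=\ZZ_I$, has a genuine gap. What you want is the polynomial identity $\We(\Tb;\ZZ;H)=\We(\Tb;s_p\ZZ;H)$ in independent variables $\Tb,\ZZ,H$. Evaluating at $\Tb=\ZZ_I$ ties the $\Tb$-variables to the $\ZZ$-variables, so permuting $Z_1\lc Z_k$ simultaneously moves the evaluation point; you are no longer comparing the two sides at the same $\Tb$. Even if you fix this by treating $\Tb$ and $\ZZ$ as genuinely independent and evaluating at finitely many $\Tb$-values, you would still need to argue that agreement at those points forces agreement of the polynomials, which requires a separate degree/interpolation argument you have not supplied. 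Drop (b) and just write down the identity above; that is the entire proof.
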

\begin{proof}
The symmetry in $\,Z_1\lc Z_k\,$ follows from the identity
\vvn.4>
\begin{align*}
& (1-Z_{\:b}/T_i)\>(1-H\?Z_a/T_j)\>(T_i-HT_j)-
(1-Z_{\:b}/T_j)\>(1-H\?Z_a/T_i)\>(T_j-HT_i)\,={}\kern-.5em
\\[5pt]
& \!{}=\,(1-Z_a/T_i)\>(1-H\?Z_{\:b}/T_j)\>(T_i-HT_j)-
(1-Z_a/T_j)\>(1-H\?Z_{\:b}/T_i)\>(T_j-HT_i)\,.\kern-.5em
\\[-12pt]
\notag
\end{align*}
The symmetry in $\,Z_{k+1}\lc Z_n\,$ is manifest.
\end{proof}

\vsk.2>
For a permutation $\,\si\<\in\<S_n\,$, denote
\vv.07>
$\,\ZZ_{\?\si}=(Z_{\si(1)}\lc Z_{\si(n)})\,$.
Recall the space $\,\Pc\:$ defined in Section \ref{sectrans}.

\begin{prop}
\label{WWP}
For any $\;\si\<\in\<S_n\,$, we have $\,\Weo\<(\Tb;\ZZ_{\?\si};H)\<\in\Pc$
and $\,\:\Wef\<(\Tb;\ZZ_{\?\si};H)\<\in\Pc\>$.
\end{prop}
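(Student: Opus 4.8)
The plan is to verify the two defining properties of the space $\Pc$ directly for the weight functions $\Weo$ and $\Wef$ with the variables $\ZZ$ permuted by an arbitrary $\si\in S_n$. Recall that $\Pc$ consists of Laurent polynomials $P(\Tb;\ZZ;H)$ that are (i) symmetric in $\,T_1^{-1}\lc T_k^{-1}$, (ii) of degree at most $n-1$ in each $T_i^{-1}$, and (iii) satisfy the vanishing condition $P(Z_a,H Z_a,T_3\lc T_k;\ZZ;H)=0$ for $a=1\lc n$. Properties (i) and (ii) are stable under any permutation of $\ZZ$, so it suffices to check them for $\si=\id$; property (iii), however, genuinely needs to be checked for every $\si$, and that is where the work lies.

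For properties (i) and (ii): symmetry of $\We$, $\Weo$, $\Wef$ in $\Tb$ is built into the definition via $\Sym_{\>\Tb}$, but we need symmetry in $T_i^{-1}$ rather than $T_i$ and we need it together with the degree bound — both are the content of the sentence "By the standard reasoning, the functions $\We,\Weo,\Wef$ are polynomials in $T_1^{-1}\lc T_k^{-1},\ZZ,H$." Concretely, $U(\Tb;\ZZ;H)$ as written is a rational function whose only denominators are the factors $1-T_j/T_i$; after symmetrization these cancel against the numerator factors $1-HT_j/T_i$ in the usual way, leaving a genuine Laurent polynomial, and inspection of $U$ shows each $T_i$ occurs only through negative powers, with $\We$ having degree $\le k-1$ in each $T_i^{-1}$, hence $\Weo$ and $\Wef$ have degree $\le k-1+(n-k)=n-1$ in each $T_i^{-1}$, as required. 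Symmetry in $T_1^{-1}\lc T_k^{-1}$ is the same as symmetry in $T_1\lc T_k$ for a Laurent polynomial, so (i) holds. None of this is affected by replacing $\ZZ$ with $\ZZ_\si$.

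The real point is property (iii): we must show $\Weo(Z_{\si(a)}^{*}\text{-substituted})\dots$ — more precisely, that setting $T_1=Z_b$ and $T_2=HZ_b$ (for the appropriate index $b$ depending on $\si$) kills $\Weo(\Tb;\ZZ_\si;H)$ and $\Wef(\Tb;\ZZ_\si;H)$. The strategy is to trace through the symmetrized sum: in $\Sym_{\>\Tb}U$, group the $k!$ terms into pairs that differ by the transposition of the two arguments occupying the slots set equal to $Z_b$ and $HZ_b$. For each such pair, one shows the two summands cancel, using an identity in the spirit of the displayed three-term relation in the proof of Lemma \ref{symWtr} — the factor $(1-HT_j/T_i)$ in $U$ together with the $Z$-dependent factors conspires so that the specialization $T_i=Z_b,\ T_j=HZ_b$ forces the pair sum to vanish. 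The extra product $\prod_{i}\prod_{a>k}(1-Z_a/T_i)$ in $\Weo$ (resp. $\prod_i\prod_{a>k}(1-HZ_a/T_i)$ in $\Wef$) is symmetric in $T_1\lc T_k$ and so factors out of each pair, not disturbing the cancellation. Because $\si$ is arbitrary, one has to organize this so the argument is uniform in which $Z$-value plays the role of $Z_b$; I expect this bookkeeping — correctly identifying the cancelling involution on the symmetric group and checking the pairwise identity once and for all — to be the main obstacle, and it is presumably deferred to Section \ref{sec:pfs} in the paper's own treatment. Everything else is the "standard reasoning" already invoked for the polynomiality of $\We,\Weo,\Wef$.
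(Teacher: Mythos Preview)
Your treatment of properties (i) and (ii) is fine and matches the paper's ``standard reasoning''. The issue is your mechanism for (iii).

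You propose to pair the $k!$ terms of $\Sym_{\Tb}U$ by the transposition swapping the slots of $Z_b$ and $HZ_b$, and then to exhibit a cancellation identity in the spirit of Lemma~\ref{symWtr}. That is not what happens. Already for $k=2$ one has
\[
\We(x,Hx;\ZZ;H)\;=\;(1-Z_1/x)(1-Z_2/x)(1+H)\,,
\]
so the two paired summands are \emph{not} negatives of each other for generic $x$: one of them is identically zero (the factor $1-HT_1/T_2$ becomes $1-Hx/(Hx)=0$) and the other is generically nonzero. There is no cancellation identity to be found.

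The correct observation --- and this is what the paper means by ``by inspection of \eqref{U}--\eqref{Wef}'' --- is that \emph{every individual term} of $\Sym_{\Tb}U$ vanishes at the specialization $T_1=Z_c$, $T_2=HZ_c$. If $c\in\{\si(k+1)\lc\si(n)\}$ the extra product in $\Weo$ (respectively $\Wef$) already contains the factor $1-Z_c/T_1$ (respectively $1-HZ_c/T_2$), which is zero. Otherwise write $c=\si(c')$ with $c'\le k$; in each summand $U(S_1\lc S_k;\ZZ_\si;H)$ let $p,q$ be the positions where $Z_c,HZ_c$ sit. If $p<c'$ the factor $(1-Z_{\si(c')}/S_p)$ vanishes; if $q>c'$ the factor $(1-HZ_{\si(c')}/S_q)=(1-Z_c/Z_c)$ vanishes; and in the remaining case $q\le c'\le p$ (so $q<p$) the numerator factor $1-HS_p/S_q=1-HZ_c/(HZ_c)$ vanishes. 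That is the entire argument, and nothing is deferred to Section~\ref{sec:pfs}.
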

\begin{proof}
The statement follows from formulae \eqref{U}\,--\,\eqref{Wef} by inspection.
\end{proof}

For $\,I\?\in\Ikn\,$, let $\,\si_I\<\in\<S_n\,$ be the permutation
of minimal length such that
\vvn.5>
\be
I\,=\,\{\:\si_I(1)\lc\si_I(k)\:\}\,.\kern-2em
\vv.5>
\ee
Recall $\,\ZZ_{\?I}=(Z_{\si_I(1)}\lc Z_{\si_I(k)})\,$.
Recall the function $\,E(\Tb;H)\,$ given by \eqref{ET}\:. Set
\vvn.2>
\beq
\label{RZ}
R(\<\ZZ)\,=\,\prod_{a=1}^k\:\prod_{b=k+1}^n\<(1-Z_{\:b}/Z_a)\,.\kern-2em
\vv.2>
\eeq

\begin{lem}
\label{lemWeIJ}
For any $\,\:I,J\<\in\Ikn\,$, we have
\vvn.6>
\begin{gather}
\label{WeIJ}
\Weo\<(\ZZ_{\?I}\:;\ZZ_{\?\si_{\<J}};H)\,=\,
\dl_{\IJ}\,\:E(\ZZ_{\?I}\:;H)\,R(\ZZ_{\?\si_I}\<)\,,\kern-1.6em
\\[7pt]
\notag
\Wef\<(H\?\ZZ_{\?I}\:;\ZZ_{\?\si_{\<J} };H)\,=\,
\dl_{\IJ}\,\:E(\ZZ_{\?I}\:;H)\,R(\ZZ_{\?\si_I}\<)\,.\kern-1.6em
\\[-12pt]
\notag
\end{gather}
\end{lem}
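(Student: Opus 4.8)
The plan is to compute the left-hand sides by explicitly evaluating the weight functions at the fixed-point substitution $\Tb=\ZZ_{\?I}$ (respectively $\Tb = H\ZZ_{\?I}$), using the product formula \eqref{U} for $U(\Tb;\ZZ;H)$ together with the symmetrization \eqref{We}--\eqref{Weo}. The key combinatorial fact, standard for such trigonometric weight functions, is \emph{triangularity}: when one substitutes $\Tb=\ZZ_{\?I}$ into $\We(\Tb;\ZZ_{\?\si_J};H)$, only the single permutation $\si\in S_k$ that matches the ordering of $I$ against that of $J$ can contribute a nonzero term in $\Sym_{\>\Tb}U$, and that term is nonzero precisely when $I=J$. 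First I would record that in the summand $U(T_{\si(1)}\lc T_{\si(k)};\ZZ_{\?\si_J};H)$, each factor $1-Z_b/T_i$ or $1-HZ_a/T_i$ (with the $Z$'s now being the entries $Z_{\si_J(1)}\lc Z_{\si_J(k)}$, i.e.\ indexed by $J$) vanishes when the substituted value of $T_i$ equals the corresponding $Z$; a short bookkeeping argument with the ranges $a\le i-1$, $b\ge i+1$ shows that for $I\ne J$ every term of the symmetrized sum has a vanishing factor, so both left-hand sides are zero, matching $\dl_{\IJ}$.

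Next, for $I=J$, I would isolate the surviving term. With $\Tb=\ZZ_{\?I}$ and $\si=\id$, the factors $\prod_{a=1}^{i-1}(1-HZ_a/T_i)\prod_{b=i+1}^k(1-Z_b/T_i)$ become products over entries of $\ZZ_{\?I}$, the ratios $\prod_{j=i+1}^k\frac{1-HT_j/T_i}{1-T_j/T_i}$ become $\prod_{j=i+1}^k\frac{1-HZ_{\si_I(j)}/Z_{\si_I(i)}}{1-Z_{\si_I(j)}/Z_{\si_I(i)}}$, and the extra product $\prod_{i=1}^k\prod_{a=k+1}^n(1-Z_a/T_i)$ in \eqref{Weo} becomes $\prod_{i=1}^k\prod_{a=k+1}^n(1-Z_{\si_I(a)}/Z_{\si_I(i)})=R(\ZZ_{\?\si_I})$ by definition \eqref{RZ}. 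It then remains to check that the product of the remaining $\Tb$-internal factors equals $E(\ZZ_{\?I};H)=\prod_{i=1}^k\prod_{j\ne i}(1-HZ_{\si_I(i)}/Z_{\si_I(j)})$ from \eqref{ET}; this is the identity
\be
\prod_{i<j}(1-HZ_{\si_I(j)}/Z_{\si_I(i)})\cdot\prod_{i<j}\frac{1}{1-Z_{\si_I(j)}/Z_{\si_I(i)}}\cdot(\text{the }a,b\text{ factors})\,=\,E(\ZZ_{\?I};H)\,,
\ee
which one verifies by rewriting each factor $1-Z_{\si_I(j)}/Z_{\si_I(i)}$ with $i<j$ as $-(Z_{\si_I(j)}/Z_{\si_I(i)})(1-Z_{\si_I(i)}/Z_{\si_I(j)})$ and pairing it with its transpose, and likewise for the $H$-factors; the Vandermonde-type cancellations then produce exactly $E(\ZZ_{\?I};H)$. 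For the second formula, the substitution $\Tb=H\ZZ_{\?I}$ into $\Wef$ as defined in \eqref{Wef} introduces the compensating prefactor $H^{k(k-1)/2}$, turns $\prod_{a=k+1}^n(1-HZ_a/T_i)$ into $R(\ZZ_{\?\si_I})$ again, and the internal ratios are invariant under $T_j/T_i\mapsto T_j/T_i$; tracking the powers of $H$ one finds the same right-hand side.

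I expect the main obstacle to be the bookkeeping in the $I=J$ case: correctly matching which factor of \eqref{U} produces which factor of $E(\ZZ_{\?I};H)$ and $R(\ZZ_{\?\si_I})$, keeping the ranges $a\in\{1\lc i-1\}$, $b\in\{i+1\lc k\}$, $a\in\{k+1\lc n\}$ straight under the permutation $\si_I$, and confirming that no spurious powers of $H$ or sign factors survive. The vanishing ($I\ne J$) direction and the reduction to a single symmetrization term are comparatively routine once the triangularity observation is set up; by contrast, the exact cancellation yielding $E(\ZZ_{\?I};H)$ requires care, and it is cleanest to phrase it via the restriction formalism of Proposition \ref{lemPEK}, i.e.\ to note that $\Weo\<(\Tb;\ZZ_{\?\si_J};H)\in\Pc$ by Proposition \ref{WWP} and then compare fixed-point values, so that \eqref{WeIJ} becomes the statement $\pio(\Weo(\cdot\,;\ZZ_{\?\si_J};H))|_I = \dl_{\IJ}R(\ZZ_{\?\si_I})$, which is exactly the normalization one needs downstream.
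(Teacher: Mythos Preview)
Your overall strategy --- direct inspection of \eqref{U}--\eqref{Wef}, reducing the symmetrization to a single term when $I=J$ and checking that term equals $E(\ZZ_{\?I};H)\,R(\ZZ_{\?\si_I})$ --- matches exactly what the paper does (its proof is the one line ``by inspection''). However, there is a real gap in your $I\ne J$ argument and an overcomplication in the $I=J$ case.

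For $I\ne J$: you claim the vanishing comes from the \emph{inner} factors $(1-Z_b/T_i)$, $(1-H\?Z_a/T_i)$ of $U$ with $a\le i-1$, $b\ge i+1$. This is false. Those $Z_a,Z_b$ lie in $J$, while the substituted $T_i$ lie in $I$; when $I\cap J$ is small (for instance $k=1$, $n=2$, $I=\{1\}$, $J=\{2\}$, or $k=2$, $n=3$, $I=\{1,2\}$, $J=\{2,3\}$) none of the inner factors vanish for any term of the symmetrization. What actually forces $\Weo(\ZZ_{\?I};\ZZ_{\?\si_J};H)=0$ is the \emph{outer} product $\prod_{i=1}^k\prod_{a=k+1}^n(1-Z_{\si_J(a)}/T_i)$ in \eqref{Weo}: since $I\ne J$ there is some $c\in I\setminus J$, hence $c=\si_J(a)$ for some $a>k$, and the factor with $T_i=Z_c$ vanishes. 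The same mechanism (after cancelling the $H$) handles $\Wef(H\ZZ_{\?I};\ZZ_{\?\si_J};H)$.

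For $I=J$: the reduction to the identity term in $\Sym_{\>\Tb}$ is correct, but the remaining computation is much simpler than you suggest. Writing $w_i=Z_{\si_I(i)}$, the $b$-factors $\prod_{b>i}(1-w_b/w_i)$ cancel \emph{literally} against the denominators $\prod_{j>i}(1-w_j/w_i)$ in the ratio, leaving $\prod_{a<i}(1-Hw_a/w_i)\prod_{j>i}(1-Hw_j/w_i)=\prod_{j\ne i}(1-Hw_j/w_i)=E(\ZZ_{\?I};H)$ after relabelling. No Vandermonde-type manipulation is needed. The $\Wef$ case is the same game: the $H$-shift turns the $a$-factors into $(1-w_a/w_i)$ (selecting $\si=\id$ via $\si(i)\ge i$ this time), these cancel the ratio denominators, and the $b$-factors together with the ratio numerators combine with the prefactor $H^{k(k-1)/2}$ to give $E(\ZZ_{\?I};H)$.
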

\begin{proof}
The statement follows from formulae \eqref{U}\,--\,\eqref{Wef} by inspection.
\end{proof}

For any $\,X\?\in\<\KT\,$, set
\vvn.2>
\begin{align}
\label{WofX}
\Weo_{\!X}\<(\Tb;\ZZ;H)\,&{}=\<\sum_{I\in\>\Ikn\!}
\frac{\Weo\<(\Tb;\ZZ_{\?\si_I};H)}{R(\ZZ_{\?\si_I}\<)}\;X|_I^{}(\ZZ;H)\,,
\kern-2em
\\[8pt]
\notag
\Wef_{\!X}\<(\Tb;\ZZ;H)\,&{}=\<\sum_{I\in\>\Ikn\!}
\frac{\Wef\<(\Tb;\ZZ_{\?\si_I};H)}{R(\ZZ_{\?\si_I}\<)}\;X|_I^{}(\ZZ;H)\,.
\kern-2em
\\[-12pt]
\notag
\end{align}
By the standard reasoning, the functions $\,\:\Weo_{\!X},\Wef_{\!X}\,$
\vv.1>
are Laurent polynomials in $\,\Tb,\ZZ,H\>$. Furthermore,
$\,\:\Weo_{\!X}\:,\Wef_{\!X}\<\in\Pc\,$ by Proposition \ref{WWP}.

\vsk.3>
Recall the maps $\,\:\pio\>$ and $\,\:\pif\:$ given by \eqref{pimap}\:.

\begin{prop}
\label{piWX}
We have $\,\:\pio\:\Weo_{\!X}=X\>$ and $\;\pif\:\Wef_{\!X}\<=X\>$
for any $\,X\?\in\<\KT\,$.
\end{prop}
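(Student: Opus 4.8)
The plan is to verify the two stated identities $\pio\:\Weo_{\!X}=X$ and $\pif\:\Wef_{\!X}=X$ by comparing both sides via the equivariant localization embedding $\Loc$ of Section~\ref{sec:equivK}. Since $\Loc$ is injective, it suffices to check that the restriction of $\pio\:\Weo_{\!X}$ (respectively $\pif\:\Wef_{\!X}$) to each fixed point $F_J$, $J\in\Ikn$, equals $X|_J^{}$. First I would unwind the definitions: by \eqref{pimap}, $\pio\:\Weo_{\!X}$ is the class $\Weo_{\!X}(\GG;\ZZ;H)/E(\GG;H)$, and its restriction to $F_J$ is obtained by the substitution $\Tb\mapsto\ZZ_{\?J}$ (via $\Tb\mapsto\GG$ then $\GG|_J$, which is the same as $T_i\mapsto Z_{\si_J(i)}$), giving $\Weo_{\!X}(\ZZ_{\?J}\:;\ZZ;H)/E(\ZZ_{\?J}\:;H)$.

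Next I would substitute the defining sum \eqref{WofX} for $\Weo_{\!X}$:
\[
\bigl(\pio\:\Weo_{\!X}\bigr)\big|_J^{}\,=\,
\frac1{E(\ZZ_{\?J}\:;H)}\sum_{I\in\>\Ikn\!}
\frac{\Weo\<(\ZZ_{\?J}\:;\ZZ_{\?\si_I};H)}{R(\ZZ_{\?\si_I}\<)}\;X|_I^{}(\ZZ;H)\,.
\]
Now Lemma~\ref{lemWeIJ} is exactly the tool needed: it gives $\Weo\<(\ZZ_{\?J}\:;\ZZ_{\?\si_I};H)=\dl_{\IJ}\,E(\ZZ_{\?J}\:;H)\,R(\ZZ_{\?\si_J}\<)$ — here one must be a little careful, since the lemma as stated reads $\Weo\<(\ZZ_{\?I}\:;\ZZ_{\?\si_J};H)=\dl_{\IJ}\,E(\ZZ_{\?I};H)R(\ZZ_{\?\si_I})$, so after the index match $I=J$ the factors $E(\ZZ_{\?I};H)$ and $E(\ZZ_{\?J};H)$, and $R(\ZZ_{\?\si_I})$ and $R(\ZZ_{\?\si_J})$, coincide. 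Only the $I=J$ term survives, and it contributes
\[
\frac1{E(\ZZ_{\?J}\:;H)}\cdot\frac{E(\ZZ_{\?J}\:;H)\,R(\ZZ_{\?\si_J}\<)}{R(\ZZ_{\?\si_J}\<)}\;X|_J^{}(\ZZ;H)\,=\,X|_J^{}(\ZZ;H)\,,
\]
which is precisely the restriction of $X$ to $F_J$. Since $J$ was arbitrary, injectivity of $\Loc$ gives $\pio\:\Weo_{\!X}=X$. The argument for $\pif\:\Wef_{\!X}=X$ is identical, using the second identity in Lemma~\ref{lemWeIJ}: restriction of $\pif\:\Wef_{\!X}$ to $F_J$ involves the substitution $\Tb\mapsto H\GG$ followed by $\GG|_J$, i.e.\ $\Tb\mapsto H\?\ZZ_{\?J}$, producing $\Wef\<(H\?\ZZ_{\?J}\:;\ZZ_{\?\si_I};H)=\dl_{\IJ}\,E(\ZZ_{\?J}\:;H)\,R(\ZZ_{\?\si_J}\<)$, and the same cancellation yields $X|_J^{}$.

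I anticipate the only genuine subtlety — the ``main obstacle'' — is bookkeeping the two distinct parametrizations of restriction to $F_I$: the paper uses $\si_I$ (minimal length permutation) to define $\ZZ_{\?I}$, while the localization substitution in Section~\ref{sec:equivK} sends $\{\Gm_i\}\mapsto\{Z_i\mid i\in I\}$ as \emph{sets}. One must confirm that evaluating a symmetric (in $T_1\lc T_k$) Laurent polynomial at $\Tb=\ZZ_{\?I}$ agrees with the set-theoretic substitution; this holds because $\Weo$, $\Wef$, and $E$ are all symmetric in $T_1\lc T_k$, so the particular ordering $\si_I$ is immaterial, and likewise $X|_I^{}$ is well defined independently of ordering. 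Once this compatibility is recorded, the proof is the short computation above; I would also remark that well-definedness of the classes $\pio\:\Weo_{\!X},\pif\:\Wef_{\!X}\in\KT$ is already guaranteed since $\Weo_{\!X},\Wef_{\!X}\in\Pc$ by Proposition~\ref{WWP} together with Proposition~\ref{lemPEK}.
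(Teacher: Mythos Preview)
Your proposal is correct and follows essentially the same approach as the paper: evaluate $\Weo_{\!X}$ (resp.\ $\Wef_{\!X}$) at $\Tb=\ZZ_{\?J}$ (resp.\ $\Tb=H\ZZ_{\?J}$), apply Lemma~\ref{lemWeIJ} to collapse the sum \eqref{WofX} to the single term $I=J$, and conclude by injectivity of localization. Your treatment of the $E(\ZZ_{\?J};H)$ factor and the remark on symmetry in $T_1\lc T_k$ are fine bookkeeping additions, but the argument is the same.
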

\begin{proof}
By formulae \eqref{WofX}\:, \eqref{WeIJ}\:, we have
\vvn.4>
\be
\Weo_{\!X}\<(\ZZ_{\<I}\:;\ZZ;H)=X|_I^{}(\ZZ;H)\,,\qquad
\Wef_{\!X}\<(H\?\ZZ_{\<I}\:;\ZZ;H)=X|_I^{}(\ZZ;H)\kern-2em
\vv.3>
\ee
for all $\,I\?\in\Ikn\,$. This proves the statement.
\end{proof}

\subsection{Proof of Propositions \ref{lempi}}
\label{pipfs}
The maps $\,\:\pio\>$ and $\,\:\pif\:$ are surjective
by Proposition \ref{piWX}. The proofs of injectivity of $\,\:\pio\>$ and
$\,\:\pif\:$ are similar. We will prove below that the map $\,\:\pio\>$
is injective.

\vsk.2>
The proof of injectivity of $\,\:\pio\>$ is by induction on $\,k\,$.
\vv.04>
If necessary, we will indicate the dependence on $\,k\,$ explicitly writing
$\,\:\pio_{\<k}\:,\>\Pc_k\:$, etc. To simplify writing, we will not show
the dependence on $\,\ZZ,H\,$ explicitly.

\vsk.3>
Let $\,k=1\,$. The space $\,\Pc_1\>$ consists of Laurent polynomials
\vv.06>
$\,P(T_1)\,$ that are polynomials in $\,T_1^{-1}\<$ of degree at most
$\,n-1\,$. If $\,\:\pio_1\:P=0\,$, then $\,P(Z_a)=0\,\:$ for all
$\,a=1\lc n\,$. Since $\,P\>$ vanishes at $\,n\,$ points, $\,P=0\,$
identically.

\vsk.2>
Let $\,k=2\,$. The space $\,\Pc_2\,$ consists of Laurent polynomials
\vv.07>
$\,P(T_1,T_2)\,$ that are symmetric polynomials in $\,T_1^{-1}\?,T_2^{-1}\<$
\vv.07>
of degree at most $\,n-1\,$ in each of $\,T_1^{-1}\?,T_2^{-1}\<$. In addition,
$\,P(Z_a\:,H\?Z_a)=0\,\:$ for all $\,a=1\lc n\,$. If $\,\:\pio_2\:P=0\,$,
then $\,P(Z_a\:,Z_{\:b})=0\,\:$ for all $\,a,b=1\lc n\,$, $\,a\ne b\,$.
Expand $\,P\>$ as a polynomial in $\,T_2^{-1}\<$,
\vvn.2>
\be
P(T_1,T_2)\,=\,\sum_{i=0}^{n-1}\,\Pt_{\?i}\:(T_1)\,T_2^{-\:i}\:.
\kern-1.6em
\vv.1>
\ee
Since $\,P\<\in\Pc_2\,$, we have $\,\Pt_{\?i}\<\in\Pc_1\>$
for all $\,i=1\lc n$. Then for any $\,a=1\lc n\,$, the conditions
\vvn-.2>
\be
P(Z_a\:,H\?Z_a)=0\,,\qquad P(Z_a\:,Z_{\:b})=0\,,\quad b\ne a\,,\kern-1em
\vv.4>
\ee
yield $\,\Pt_{\?i}\:(Z_a)=0\,\:$. Hence, $\,\Pt_{\?i}=0\,$ identically
for all $\,i=1\lc n$, and $\,P=0\,$ identically.

\vsk.3>
In general, any $\,P\<\in\Pc_k\,$ is a symmetric polynomial
in $\,\:T_1^{-1}\?\lc T_k^{-1}\:$ of degree at most $\,n-1\,$
in each of $\,\:T_1^{-1}\?\lc T_k^{-1}$.
Furthermore, the vanishing conditions \eqref{PHZ} yield
\vvn.5>
\beq
\label{PZZH}
P(\ZZ_{\<I}\:,H\?Z_a)\,=\,0\,,\qquad I\?\in\Ikon\,,\quad a\<\in\<I\,,
\kern-1.3em
\vv.2>
\eeq
If $\,\:\pio_k\:P=0\,$, then
\vvn.2>
\beq
\label{PZZ}
P(\ZZ_{\<I}\:,Z_{\:b})\,=\,0\,,\qquad I\?\in\Ikon\,,\quad b\not\in\<I\,.
\kern-1.2em
\vv.5>
\eeq
Expand $\,P\>$ as a polynomial in $\,T_k^{-1}\<$,
\vvn-.2>
\be
P(\Tb)\,=\,\sum_{i=0}^{n-1}\,\Pt_{\?i}\:(T_1\lc T_{k-1})\,T_k^{-\:i}\:.
\kern-1.6em
\vv.1>
\ee
Since $\,P\<\in\Pc_k\,$, we have $\,\Pt_{\?i}\in\Pc_{k-1}\>$
\vv.07>
for all $\,i=1\lc n$. Then for any $\,I\?\in\Ikon\,$,
conditions \eqref{PZZH}\:, \eqref{PZZ} \:yield $\,\Pt_{\?i}(\ZZ_{\<I})=0\,$.
\vv.07>
Hence, $\,\:\pio_{k-1}\:\Pt_{\?i}=0\,$, and $\,\Pt_{\?i}=0\,$
identically for all $\,i=1\lc n$, by the induction assumption.
Therefore, $\,P=0\,$ identically.
\qed

\subsection{Determinants}
\label{orthdet}
Using definition \eqref{tau} of
\vvn.06>
the transition map, ${\tau=\pif(\pio\<)^{-1}\:}$,
formulae \eqref{WofX}\:, and Proposition \ref{piWX}, we have
\vvn.3>
\begin{gather}
\label{tauXI}
(\tau X)|_I^{}(\ZZ;H)\,=\<\sum_{J\in\>\Ikn\!}
\frac{\Weo\<(H\?\ZZ_{\<I}\:;\ZZ_{\?\si_{\<J}};H)}
{E(\ZZ_{\<I}\:;H)\>R(\ZZ_{\?\si_{\<J}}\<)}\;X|_J^{}(\ZZ;H)\,,\kern-2em
\\[8pt]
\notag
(\tau^{-1}\<X)|_I^{}(\ZZ;H)\,=\<\sum_{J\in\>\Ikn\!}
\frac{\Wef\<(\ZZ_{\<I}\:;\ZZ_{\?\si_{\<J}};H)}
{E(\ZZ_{\<I}\:;H)\>R(\ZZ_{\?\si_{\<J}}\<)}\;X|_J^{}(\ZZ;H)\,.\kern-2em
\\[-12pt]
\notag
\end{gather}

\begin{cor}
\label{corth}
For any $\,\:I,J\<\in\Ikn\,$, we have
\vvn.5>
\begin{gather}
\label{Worth}
\sum_{K\?\in\:\Ikn\!}\!\frac{\Wef\<(\ZZ_{\<I}\:;\ZZ_{\?\si_{\<K}};H)\,
\Weo\<(H\?\ZZ_{\<K}\:;\ZZ_{\<\si_{\<J}};H)}
{E(\ZZ_{\<K}\:;H)\>R(\ZZ_{\?\si_{\<K}}\?)}\;=\,
\dl_{\IJ}\,\:E(\ZZ_{\?I}\:;H)\,R(\ZZ_{\?\si_I}\<)\,,\kern-1.4em
\\[7pt]
\notag
\sum_{K\?\in\:\Ikn\!}\!\frac{\Weo\<(H\?\ZZ_{\<I}\:;\ZZ_{\?\si_{\<K}};H)\,
\Wef\<(\ZZ_{\<K}\:;\ZZ_{\?\si_{\<J}};H)}
{E(\ZZ_{\<K}\:;H)\>R(\ZZ_{\?\si_{\<K}}\?)}\;=\,
\dl_{\IJ}\,\:E(\ZZ_{\?I}\:;H)\,R(\ZZ_{\?\si_I}\<)\,.\kern-1.4em
\\[-17pt]
\notag
\end{gather}
\end{cor}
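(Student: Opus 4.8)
The plan is to recognize the two identities of Corollary~\ref{corth} as the relations $\tau^{-1}\tau=\id$ and $\tau\:\tau^{-1}=\id$ evaluated at the torus-fixed points $F_I$, $I\in\Ikn$. Since the right-hand sides of the asserted identities are Laurent polynomials in $\ZZ,H$ while the left-hand sides are a priori rational functions of $\ZZ,H$, it suffices to prove the identities after extension of scalars from $\CZH$ to the field $\C(\ZZ;H)$. Over that field the equivariant localization map $\Loc$ of Section~\ref{sec:equivK} becomes an isomorphism onto $\tbigoplus_{J\in\Ikn}\C(\ZZ;H)$, so for each $J\in\Ikn$ there is a class $X_{\<J}\in\KT\ox\C(\ZZ;H)$ with $X_{\<J}|_K^{}=\dl_{\JK}$ for all $K\in\Ikn$. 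Because formulae~\eqref{tauXI} are $\CZH$-linear in $X$, they continue to hold for $X=X_{\<J}$.

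For the first identity I would compute $(\tau^{-1}\tau X_{\<J})|_I^{}$ in two ways. By Corollary~\ref{lemmu} the map $\tau$ is invertible, so this equals $X_{\<J}|_I^{}=\dl_{\IJ}$. On the other hand, using the first formula of~\eqref{tauXI} for $\tau$ to evaluate $(\tau X_{\<J})|_K^{}$, collapsing the inner sum by $X_{\<J}|_L^{}=\dl_{\<J,L}$, and then using the second formula of~\eqref{tauXI} for $\tau^{-1}$, one obtains precisely the left-hand side of the first identity divided by $E(\ZZ_I;H)\,R(\ZZ_{\si_J})$. Equating the two expressions and multiplying through by $E(\ZZ_I;H)\,R(\ZZ_{\si_J})$ yields the first identity; the factor $\dl_{\IJ}$ on the right makes the replacement of $R(\ZZ_{\si_J})$ by $R(\ZZ_{\si_I})$ harmless. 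The second identity follows in exactly the same way from $(\tau\:\tau^{-1}X_{\<J})|_I^{}=\dl_{\IJ}$, again using the first formula of~\eqref{tauXI} for $\tau$ and the second for $\tau^{-1}$, with the composition taken in the opposite order.

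I do not anticipate a genuine obstacle: once formulae~\eqref{tauXI} and the invertibility of $\tau$ are available, the argument is bookkeeping. The two points that need a little care are the passage to the fraction field $\C(\ZZ;H)$ (legitimate because the claimed right-hand sides are Laurent polynomials, so the identity of rational functions is the identity one wants) and tracking, through the composition of the two instances of~\eqref{tauXI}, which of $\ZZ_I$, $H\?\ZZ_K$, $\ZZ_{\si_J}$ is fed into which weight function and with which index.
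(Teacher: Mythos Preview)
Your proposal is correct and follows essentially the same approach as the paper: the identities in Corollary~\ref{corth} are precisely the matrix relations $(\tau^{-1}\tau)|_{\:I}=\dl_{\IJ}$ and $(\tau\>\tau^{-1})|_{\:I}=\dl_{\IJ}$ read off from formulae~\eqref{tauXI}, which is exactly what the paper's one-line proof invokes. Your additional care about passing to the fraction field and introducing the localized classes $X_{\<J}$ is a legitimate way to make that passage explicit.
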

\begin{proof}
The statement follows from formulae \eqref{tauXI}
\end{proof}

\vsk.2>
Denote $\,\:\ZZ^{-1}\?=(Z_1^{-1}\?\lc Z_n^{-1})\,$. Set
\vvn.3>
\begin{gather}
\label{DDD}
D(\<\ZZ)\,=\,\prod_{a=2}^n\,\prod_{b\:=1}^{a-1}\,(1-Z_a/Z_{\:b})
^{\<\binom{\:n-2\:}{k-1}}_{\vp1}\:,\qquad
\Dt(\ZZ\:;H)\,=\,\prod_{a=1}^n\>\prod_{\satop{b\:=1}{b\ne a}}^n\,
(1-H\?Z_a/Z_{\:b})^{\?\binom{\:n-2\:}{k-2}}_{\vp1}\:,\kern-1em
\\[5pt]
\notag
\Dh(\ZZ\:;H)\,=\,D(\<\ZZ)\,D(\<\ZZ^{-1})\,\Dt(\ZZ\:;H)\,.\kern-.8em
\\[-15pt]
\notag
\end{gather}

\begin{prop}
\label{lemdet}
We have
\vvn-.3>
\begin{align}
\label{Wofdet}
\det\:\bigl(\:
\Weo\<(H\?\ZZ_{\<I}\:;\ZZ_{\?\si_{\<J}};H)\<\bigr)_{\<\IJ\:\in\>\Ikn}\?
&{}=\,H^{\tsize\<-\frac{n\:(n-1)}2\<\binom{\:n-1\:}{k-1}}_{\vp1}\>
\Dh(\ZZ\:;H)\,,\kern-2em
\\[3pt]
\notag
\det\:\bigl(\:
\Wef\<(\ZZ_{\<I}\:;\ZZ_{\?\si_{\<J}};H)\<\bigr)_{\<\IJ\:\in\>\Ikn}\?
&{}=\,H^{\tsize\:\frac{n\:(n-1)}2\<\binom{\:n-1\:}{k-1}}_{\vp1}\>
\Dh(\ZZ\:;H)\,.\kern-2em
\\[-12pt]
\notag
\end{align}
\end{prop}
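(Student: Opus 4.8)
The plan is to compute the two determinants by relating the matrices $\bigl(\Weo(H\ZZ_{\<I};\ZZ_{\?\si_{\<J}};H)\bigr)$ and $\bigl(\Wef(\ZZ_{\<I};\ZZ_{\?\si_{\<J}};H)\bigr)$ to each other and to a matrix whose determinant is accessible. First I would observe that, by definition \eqref{Weo} and \eqref{Wef}, the opposite weight function is obtained from the weight function by the substitution $Z_a\mapsto H Z_a$ combined with multiplication by $H^{k(k-1)/2}$, and more precisely $\Wef(\Tb;\ZZ;H)=H^{k(k-1)/2}\We(\Tb;\ZZ;H)\prod_{i,a}(1-HZ_a/T_i)$ while $\Weo(\Tb;\ZZ;H)=\We(\Tb;\ZZ;H)\prod_{i,a}(1-Z_a/T_i)$. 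Setting $\Tb=H\ZZ_{\<I}$ in the first and $\Tb=\ZZ_{\<I}$ in the second, the factor $\We$ transforms by a known scaling; tracking the powers of $H$ this produces will account for the $H^{\mp n(n-1)/2\binom{n-1}{k-1}}$ prefactors (note the opposite signs in the two formulae \eqref{Wofdet}, which is exactly what such a substitution predicts). So the two determinants differ only by that monomial in $H$, and it suffices to compute one of them, say the second.

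Next I would use Lemma \ref{lemWeIJ}, which gives $\Weo(\ZZ_{\?I};\ZZ_{\?\si_{\<J}};H)=\dl_{\IJ}E(\ZZ_{\?I};H)R(\ZZ_{\?\si_I})$ and the analogue for $\Wef$. This says the \emph{diagonal} values at $\Tb=\ZZ_{\?I}$ (resp. $H\ZZ_{\?I}$) are already known; the determinant \eqref{Wofdet} involves instead the mixed evaluations $\Wef(\ZZ_{\<I};\ZZ_{\?\si_{\<J}};H)$, which for $I\ne J$ are genuinely off-diagonal and nonzero. The orthogonality relations of Corollary \ref{corth}, however, say that the matrix $M=\bigl(\Wef(\ZZ_{\<I};\ZZ_{\?\si_{\<J}};H)\bigr)$ and the matrix $N=\bigl(\Weo(H\ZZ_{\<K};\ZZ_{\?\si_{\<J}};H)/(E(\ZZ_{\<K};H)R(\ZZ_{\?\si_{\<K}}))\bigr)$ multiply to a diagonal matrix with entries $E(\ZZ_{\?I};H)R(\ZZ_{\?\si_I})$. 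Combining this with \eqref{tauXI} and Proposition \ref{dettau} — whose proof, per the excerpt, is to be \emph{deduced from} \eqref{WofERdet} (a formula yet to appear) — shows that the two determinants in \eqref{Wofdet} are reciprocal up to the factor $\prod_I E(\ZZ_{\?I};H)R(\ZZ_{\?\si_I})$. Thus I get one scalar equation relating $\det M\cdot\det M'$, where $M'=\bigl(\Weo(H\ZZ_{\<I};\ZZ_{\?\si_{\<J}};H)\bigr)$, to an explicit product, and together with the $H$-scaling relation from the first paragraph this pins down each determinant up to sign and up to identifying $\prod_{I\in\Ikn}E(\ZZ_{\?I};H)R(\ZZ_{\?\si_I})$ with $H^{\text{something}}\Dh(\ZZ;H)^2$ or similar.

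The substantive computation is therefore the evaluation of $\prod_{I\in\Ikn}E(\ZZ_{\?I};H)$ and $\prod_{I\in\Ikn}R(\ZZ_{\?\si_I})$. For $R$: by \eqref{RZ}, $R(\ZZ_{\?\si_I})=\prod_{a\in I}\prod_{b\notin I}(1-Z_b/Z_a)$; as $I$ ranges over all $k$-subsets, a given ordered pair $(a,b)$ with $a\ne b$ occurs as "$a\in I,\ b\notin I$" in exactly $\binom{n-2}{k-1}$ subsets, so $\prod_I R(\ZZ_{\?\si_I})=\prod_{a\ne b}(1-Z_b/Z_a)^{\binom{n-2}{k-1}}=D(\ZZ)D(\ZZ^{-1})$ after pairing $(a,b)$ with $(b,a)$ — matching the first two factors of $\Dh$ in \eqref{DDD}. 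For $E$: by \eqref{ET}, $E(\ZZ_{\?I};H)=\prod_{i\ne j,\ i,j\in I}(1-HZ_i/Z_j)$, and an ordered pair $(i,j)$ of distinct indices lies in a common $k$-subset in exactly $\binom{n-2}{k-2}$ of them, so $\prod_I E(\ZZ_{\?I};H)=\prod_{a\ne b}(1-HZ_a/Z_b)^{\binom{n-2}{k-2}}=\Dt(\ZZ;H)$ — the third factor of $\Dh$. Hence $\prod_{I\in\Ikn}E(\ZZ_{\?I};H)R(\ZZ_{\?\si_I})=\Dh(\ZZ;H)$ exactly. Feeding this back, one determinant times the other equals $\Dh(\ZZ;H)^{?}$ and the $H$-weights separate them; the final bookkeeping gives both formulae in \eqref{Wofdet}.

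The main obstacle, I expect, is \textbf{fixing the overall sign and the precise power of $H$}, i.e.\ making the reductions above quantitatively exact rather than "up to a monomial." Concretely one must: (i) show the mixed-evaluation matrices have the right triangular/unipotent structure so that $\det M$ equals the product of its diagonal entries — this should follow by ordering $\Ikn$ so that $\Wef(\ZZ_{\<I};\ZZ_{\?\si_{\<J}};H)$ vanishes whenever $J$ precedes $I$ in that order, which in turn comes from the vanishing conditions \eqref{PHZ}/\eqref{PZZH} defining $\Pc$ together with Proposition \ref{WWP}; and (ii) track the homogeneity of $\We$ under $\ZZ\mapsto H\ZZ$ carefully. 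Granting (i), $\det M=\prod_I\Wef(\ZZ_{\<I};\ZZ_{\?\si_I};H)=\prod_I E(\ZZ_{\?I};H)R(\ZZ_{\?\si_I})\cdot H^{?}=\Dh(\ZZ;H)\cdot H^{?}$ with $?=\tfrac{n(n-1)}{2}\binom{n-1}{k-1}$ read off from the degree count, and \eqref{Wofdet} follows; the $\Weo$ determinant is then forced by Corollary \ref{corth} or obtained by the identical argument with the substitution $\Tb\mapsto H\Tb$.
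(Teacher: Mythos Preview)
Your preliminary observations are correct and are in fact used by the paper: the combinatorial count showing $\prod_{I}E(\ZZ_{I};H)=\Dt(\ZZ;H)$ and $\prod_{I}R(\ZZ_{\si_I})=D(\ZZ)D(\ZZ^{-1})$ is right, and Corollary~\ref{corth} does give $\det M\cdot\det M'=\Dh(\ZZ;H)^{2}$, where $M=\bigl(\Wef(\ZZ_{I};\ZZ_{\si_J};H)\bigr)$ and $M'=\bigl(\Weo(H\ZZ_{I};\ZZ_{\si_J};H)\bigr)$. Your awareness that Proposition~\ref{dettau} cannot be invoked here (it is downstream of the result) is also correct.

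The fatal gap is step~(i). The matrices $M$ and $M'$ are \emph{not} triangular in any ordering of $\Ikn$, and their diagonal entries are not $E(\ZZ_I;H)R(\ZZ_{\si_I})$ times a power of $H$. Already for $n=2$, $k=1$ one has $\Wef(T_1;Z_1,Z_2;H)=1-HZ_2/T_1$, so
\[
M\,=\,\begin{pmatrix}1-HZ_2/Z_1 & 1-H\\[2pt] 1-H & 1-HZ_1/Z_2\end{pmatrix},
\]
which has no vanishing off-diagonal entry, and the diagonal entry $1-HZ_2/Z_1$ is not $(1-Z_2/Z_1)$ times a monomial in $H$. The triangularity you have in mind holds for the \emph{unshifted} evaluations $\Weo(\ZZ_{I};\ZZ_{\si_J};H)$ (stable-envelope triangularity), but the matrices in \eqref{Wofdet} are the shifted ones $\Weo(H\ZZ_{I};\cdot)$ and $\Wef(\ZZ_{I};\cdot)$, for which no such vanishing holds. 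Likewise, your first-paragraph claim that the two matrices are related entrywise by a monomial in $H$ fails in this example; only the determinants are so related.

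The paper's argument bypasses any structural form of the matrix. One shows directly that $\det M'$ is \emph{divisible} by $\Dh(\ZZ;H)$: introducing auxiliary variables $\YY$ in the column argument, equal columns (resp.\ rows) at $Y_i=Y_j$ (resp.\ $Z_i=Z_j$) force divisibility by $D(\ZZ)D(\ZZ^{-1})$, while Proposition~\ref{WWP} forces each row to be divisible by $E(\ZZ_I;H)$, hence $\det M'$ by $\Dt(\ZZ;H)$. A homogeneous-degree count then gives $\det M'=\Co(H)\,\Dh(\ZZ;H)$ with $\Co(H)$ a Laurent polynomial in $H$ alone, and similarly $\det M=\Cf(H)\,\Dh(\ZZ;H)$. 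Now your orthogonality relation yields $\Co(H)\Cf(H)=1$, so $\Co(H)=\Co(1)H^{d}$; specializing $H=1$ (where Lemma~\ref{lemWeIJ} makes the matrix diagonal) gives $\Co(1)=1$; finally the symmetry of $\Weo$ under $(\ZZ,H)\mapsto(\ZZ^{-1},H^{-1})$ determines $d=-\tfrac{n(n-1)}{2}\binom{n-1}{k-1}$. This is what replaces your step~(i).
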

\begin{proof}
Denote $\,\Do\<(\ZZ\:;H)=\det\:\bigl(\:
\Weo\<(H\?\ZZ_{\<I}\:;\ZZ_{\?\si_{\<J}};H)\<\bigr)\,$.
\vvn.04>
Consider extra variables $\>\YY\<=(Y_1\lc Y_n)\,$. The determinant
$\;\det\:\bigl(\:\Weo\<(H\?\ZZ_{\<I}\:;\YY_{\!\!\<\si_{\<J}};H)\<\bigr)\,$
\vv.04>
has $\,\binom{\:n-2\:}{k-1}\,$ pairs of equal columns if $\>Y_i\<=\<Y_j\,$ and
$\,\binom{\:n-2\>}{k-1}\,$ pairs of coinciding rows if $\>Z_i\<=Z_j\,$. Hence,
$\;\det\:\bigl(\:\Weo\<(H\?\ZZ_{\<I}\:;\YY_{\!\!\<\si_{\<J}};H)\<\bigr)\,$
\vv.04>
is divisible by $\,D(\YY)\>D(\<\ZZ^{-1})\,$ as a Laurent polynomial
in $\,\YY\!,\ZZ,H\,\:$. Therefore, $\,\Do\<(\ZZ\:;H)\,$ is divisible
by $\,D(\<\ZZ)\>D(\<\ZZ^{-1})\,$.

\vsk.2>
By Proposition \ref{WWP}, $\,\:\Weo\<(H\?\ZZ_{\<I}\:;\ZZ_{\?\si_{\<J}};H)\,$
is divisible by $\,E(\ZZ_{\?I}\:;H)\,$ for every $\,I\?\in\Ikn\,$.
\vvn.08>
Hence, $\,\Do\<(\ZZ\:;H)\,$, is divisible by
\be
\prod_{I\in\>\Ikn\!}\?E(\ZZ_{\?I}\:;H)\,=\,\Dt\<(\ZZ\:;H)\,,\kern-1.5em
\vv.2>
\ee
and thus by $\,\Dh(\ZZ\:;H)\,$, because $\,D(\<\ZZ)\>D(\<\ZZ^{-1})\,$
and $\,\Dt\<(\ZZ\:;H)\,$ are coprime. Since multiplying
both $\,\Do\<(\ZZ\:;H)\,$ and $\,\Dh\<(\ZZ\:;H)\,\:$
by $\,(Z_1\ldots Z_n)^{\:(n-1)\binom{\:n-1\:}{k-1}}\,$
\vv.08>
give polynomials in $\ZZ$ of the same homogeneous degree, we have
\vvn-.4>
\be
\Do\<(\ZZ\:;H)\>=\,\Co\<(H)\>\Dh\<(\ZZ\:;H)\,,\kern-2em
\vv.2>
\ee
where $\,\Co\<(H)\,$ is a Laurent polynomials in $\,H\,$.
In a similar way, we have
\vvn.4>
\be
\det\:\bigl(\:
\Wef\<(\ZZ_{\<I}\:;\ZZ_{\?\si_{\<J}};H)\<\bigr)\,=\,
\Cf\<(H)\>\Dh\<(\ZZ\:;H)\,,\kern-2em
\vv.3>
\ee
where $\,\Cf\<(H)\,$ is a Laurent polynomials in $\,H\,$.

\vsk.3>
Since by Corollary \ref{corth}, $\;\Co\<(H)\,\Cf\<(H)\,=\,1\,$,
\vv.07>
we have $\,\:\Co\<(H)=\>\Co\<(1)\>H^d\>$ for some integer $\,d\,$.
Moreover, $\,\:\Co\<(1)=1\,$ by Lemma \ref{lemWeIJ}.

\vsk.1>
Finally, by inspection of the equality
\vvn.13>
$\,\:\Do\<(\ZZ^{-1};H^{-1})=\:\Co\<(H^{-1})\>\Dh\<(\ZZ^{-1};H^{-1})\,$
we obtain that
$\;\Co(H^{-1})=\Co(H)\>H^{\:n(n-1)\binom{\:n-1\:}{k-1}}\>$. Therefore,
\vvn.2>
\be
d\,=\>-\>\frac{n\>(n-1)}2\>\binom{n-1\:}{k-1}\,.
\vv-.4>
\ee
Proposition \ref{lemdet} is proved.
\end{proof}

\begin{cor}
We have
\vvn-.2>
\begin{align}
\label{WofERdet}
\det\:\biggl(\:
\frac{\Weo\<(H\?\ZZ_{\<I}\:;\ZZ_{\<\si_{\<J}};H)}
{E(\ZZ_{\<I}\:;H)\>R(\ZZ_{\?\si_{\<J}}\<)}\:\biggr)_{\!\?\IJ\:\in\>\Ikn}\!\<
&{}=\,H^{\tsize\<-\frac{n\:(n-1)}2\<\binom{\:n-1\:}{k-1}}_{\vp1},\kern-2em
\\[6pt]
\notag
\det\:\biggl(
\frac{\Wef\<(\ZZ_{\<I}\:;\ZZ_{\?\si_{\<J}};H)}
{E(\ZZ_{\<I}\:;H)\>R(\ZZ_{\?\si_{\<J}}\<)}\biggr)_{\!\?\IJ\:\in\>\Ikn}\!\<
&{}=\,H^{\tsize\:\frac{n\:(n-1)}2\<\binom{\:n-1\:}{k-1}}_{\vp1}.\kern-2em
\\[-14pt]
\notag
\end{align}
\end{cor}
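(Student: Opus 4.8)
The plan is to derive this corollary directly from Proposition~\ref{lemdet}, the only additional work being to track the effect of the row and column normalizations on the determinant. First I would observe that the matrix $\,\bigl(\Weo\<(H\?\ZZ_{\<I}\:;\ZZ_{\si_J};H)/(E(\ZZ_{\<I}\:;H)\>R(\ZZ_{\si_J}))\bigr)_{\IJ\in\Ikn}\,$ is obtained from the matrix $\,\bigl(\Weo\<(H\?\ZZ_{\<I}\:;\ZZ_{\si_J};H)\bigr)_{\IJ\in\Ikn}\,$ appearing in \eqref{Wofdet} by dividing the $\,I$-th row by $\,E(\ZZ_{\<I}\:;H)\,$ and the $\,J$-th column by $\,R(\ZZ_{\si_J})\,$; hence its determinant equals $\,\det\bigl(\Weo\<(H\?\ZZ_{\<I}\:;\ZZ_{\si_J};H)\bigr)\,$ divided by $\,\prod_{I\in\Ikn}E(\ZZ_{\<I}\:;H)\,$ and by $\,\prod_{J\in\Ikn}R(\ZZ_{\si_J})\,$, and the same reduction applies with $\,\Wef\,$ in place of $\,\Weo\,$.

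Next I would evaluate these two products of normalizing factors. For the first, $\,\prod_{I\in\Ikn}E(\ZZ_{\<I}\:;H)=\Dt(\ZZ\:;H)\,$, which is exactly the identity already used inside the proof of Proposition~\ref{lemdet}: by \eqref{ET} and \eqref{DDD}, each ordered pair $\,(a,b)\,$ with $\,a\ne b\,$ contributes the factor $\,(1-H\?Z_a/Z_{\:b})\,$ once for every $\,I\in\Ikn\,$ with $\,\{a,b\}\subset I\,$, i.e.\ $\,\binom{\:n-2\:}{k-2}\,$ times. For the second, a parallel count from \eqref{RZ} shows that $\,(1-Z_{\:b}/Z_a)\,$ occurs in $\,R(\ZZ_{\si_I})\,$ exactly when $\,a\in I\,$ and $\,b\not\in I\,$, which happens for $\,\binom{\:n-2\:}{k-1}\,$ sets $\,I\,$; comparing the resulting product with \eqref{DDD} yields $\,\prod_{I\in\Ikn}R(\ZZ_{\si_I})=D(\<\ZZ)\>D(\<\ZZ^{-1})\,$.

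Finally I would substitute these values together with \eqref{Wofdet}: the determinant in the first line of \eqref{WofERdet} equals $\,H^{\tsize\<-\frac{n\:(n-1)}2\<\binom{\:n-1\:}{k-1}}_{\vp1}\,\Dh(\ZZ\:;H)\big/\bigl(\Dt(\ZZ\:;H)\>D(\<\ZZ)\>D(\<\ZZ^{-1})\bigr)\,$, and the trailing fraction equals $\,1\,$ because $\,\Dh(\ZZ\:;H)=D(\<\ZZ)\>D(\<\ZZ^{-1})\>\Dt(\ZZ\:;H)\,$ by \eqref{DDD}; the second line of \eqref{WofERdet} follows identically from the second line of \eqref{Wofdet}. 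There is no real obstacle here, since all the analytic content already sits in Proposition~\ref{lemdet}; the only points needing care are the multiplicity counts $\,\binom{\:n-2\:}{k-2}\,$ and $\,\binom{\:n-2\:}{k-1}\,$ and the check that $\,D(\<\ZZ)\>D(\<\ZZ^{-1})\,$ really does reconstitute every ordered pair $\,(a,b)\,$ with exactly the exponent $\,\binom{\:n-2\:}{k-1}\,$ prescribed in \eqref{DDD}.
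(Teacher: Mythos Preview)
Your proposal is correct and follows precisely the approach intended by the paper, whose own proof simply reads ``Straightforward from formulae \eqref{ET}, \eqref{RZ}, \eqref{DDD}, \eqref{Wofdet}''; you have merely spelled out the row/column normalization and the two multiplicity counts that make this straightforward.
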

\begin{proof}
Straightforward from formulae \eqref{ET}\:, \eqref{RZ}\:, \eqref{DDD}\:,
\eqref{Wofdet}\:.
\end{proof}

\section{Solutions of quantum differential equation}
\label{sec Int Rep}
In this section, we remind the construction of solutions of quantum
differential equation \eqref{qde} given in \cite{TV3}\:, \cite{TV4}\:.
In the notation of those papers, we consider the case $\,N=2\,$,
$\bla=(k,n-k)\,$, $\,q_1\<=p^{-1}$, $\,q_2=1\,$, $\,\ka=\:-\:1\,$,
$\,\qti_1\<=p$,
\vv.07>
$\,\qti_2=1\,$, $\,\kat=1\,$. We adapt formulae from \cite{TV3}\:, \cite{TV4}
\vv.06>
to the notation used in this paper. We also eliminate or modify certain factors
that are irrelevant for our consideration.

\subsection{Weight function and master function}
\label{prelim}
Introduce new variables $\,\:\TT=\{t_1,\dots,t_k\}\,$.
For a function $\,f(\TT)\,$, denote
\vvn-.3>
\be
\Sym_{\>\TT}\<f(\TT)\,=\,\sum_{\si\in S_k}f(t_{\si(1)}\lc t_{\si(k)})\,.
\kern-2em
\vv-.5>
\ee
Define the {\it reduced weight function\/}
\vvn.3>
\begin{align}
\label{W}
& W\<(\TT\:;\gmb\:;\gmbb\:;h)\,={}
\\[4pt]
\notag
&\!{}=\,\Sym_{\>\TT}\biggl(\;\prod_{i=1}^k\,\biggl(\;
\prod_{l=1}^{i-1}\;(t_i\<-\gm_l\<-h)\?\prod_{m=i+1}^{\:k}\?(t_i\<-\gm_m)
\<\prod_{j=i+1}^k\<\frac{t_i\<-t_j\<-h}{t_i-t_j}\,\biggr)\?\biggr)\>
\prod_{s=1}^{n-k}\,(t_i\<-\gmbr_{\?s})\,.\kern-1em
\\[-15pt]
\notag
\end{align}
Clearly, $\,W\<(\TT\:;\gmb\:;\gmbb\:;h)\,$
\vv.2>
is a polynomial in $\,\:\TT\,$.

\begin{lem}
\label{symW}
The function $\,W\<(\TT\:;\gmb\:;\gmbb\:;h)$ is symmetric
in $\gm_1\lc \gm_k\,$.
\end{lem}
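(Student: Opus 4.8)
The plan is to prove the symmetry of $W\<(\TT\:;\gmb\:;\gmbb\:;h)$ in the variables $\gm_1\lc\gm_k$ by checking invariance under each adjacent transposition $\gm_i\leftrightarrow\gm_{i+1}$, which generate $S_k$. The factor $\prod_{s=1}^{n-k}(t_i-\gmbr_s)$ involves only the $\gmbb$-variables and the $\TT$-variables, so it plays no role; the whole question reduces to the symmetry of the expression
\be
V(\TT\:;\gmb\:;h)\,=\,\Sym_{\>\TT}\biggl(\;\prod_{i=1}^k\,\biggl(\;
\prod_{l=1}^{i-1}\;(t_i\<-\gm_l\<-h)\?\prod_{m=i+1}^{\:k}\?(t_i\<-\gm_m)
\<\prod_{j=i+1}^k\<\frac{t_i\<-t_j\<-h}{t_i-t_j}\,\biggr)\?\biggr)\>.
\ee

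First I would fix $i$ and track where $\gm_i$ and $\gm_{i+1}$ appear inside a single summand $U_\si$ of the symmetrization (the summand indexed by $\si\in S_k$, with $t_j$ replaced by $t_{\si(j)}$). The variable $\gm_i$ occurs in the factors $(t_m-\gm_i-h)$ for $m<i$ (coming from the $l$-product of the $m$-th block, when $l=i$) and in the factor $(t_i-\gm_i)$ would be absent — rather, $\gm_i$ occurs in $(t_m-\gm_i)$ for $m>i$ is also absent; let me restate: in block $m$ the factor $\prod_{l=1}^{m-1}(t_m-\gm_l-h)$ contributes $(t_m-\gm_i-h)$ precisely when $m>i$, and the factor $\prod_{l'=m+1}^{k}(t_m-\gm_{l'})$ contributes $(t_m-\gm_i)$ precisely when $m<i$. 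So $\gm_i$ enters as $(t_m-\gm_i)$ for $m<i$ and as $(t_m-\gm_i-h)$ for $m>i$; similarly $\gm_{i+1}$ enters as $(t_m-\gm_{i+1})$ for $m<i+1$ and $(t_m-\gm_{i+1}-h)$ for $m>i+1$. The only block where the roles of $\gm_i$ and $\gm_{i+1}$ differ is block $m=i$ (where $\gm_{i+1}$ appears as $(t_i-\gm_{i+1})$ but $\gm_i$ appears in no factor of that block's $\gm$-products) and block $m=i+1$ (where $\gm_i$ appears as $(t_{i+1}-\gm_i-h)$ but $\gm_{i+1}$ appears in no such factor). Thus the discrepancy between $U_\si$ and its image under $\gm_i\leftrightarrow\gm_{i+1}$ is localized to the two adjacent blocks $i$ and $i+1$, with the contribution
\be
(t_i-\gm_{i+1})\,(t_{i+1}-\gm_i-h)\,\frac{t_i-t_{i+1}-h}{t_i-t_{i+1}}
\ee
against its swapped version $(t_i-\gm_i)\,(t_{i+1}-\gm_{i+1}-h)\,\frac{t_i-t_{i+1}-h}{t_i-t_{i+1}}$, everything else in $U_\si$ being symmetric in $\gm_i,\gm_{i+1}$.

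The key step is then to pair the summand $U_\si$ with $U_{\si\cdot s_{\ii}}$, where $s_{\ii}$ swaps the two slots (the ones whose $t$-arguments are $t_i$ and $t_{i+1}$), i.e.\ pair the term with arguments $(\dots,t_i,t_{i+1},\dots)$ in those two blocks with the term where they are exchanged to $(\dots,t_{i+1},t_i,\dots)$. For such a pair, the sum of the two local factors, after applying $\gm_i\leftrightarrow\gm_{i+1}$, must equal the sum of the two original local factors; equivalently one needs the identity
\be
\frac{(t_i-\gm_i)(t_{i+1}-\gm_{i+1}-h)(t_i-t_{i+1}-h)-(t_{i+1}-\gm_i)(t_i-\gm_{i+1}-h)(t_{i+1}-t_i-h)}{t_i-t_{i+1}}
\ee
to be invariant under $\gm_i\leftrightarrow\gm_{i+1}$. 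This is a short polynomial check: the numerator is divisible by $t_i-t_{i+1}$ (the two terms agree at $t_i=t_{i+1}$), and after dividing one verifies the resulting polynomial in $\gm_i,\gm_{i+1}$ is symmetric — this is precisely the $N=2$ specialization of the standard $R$-matrix-type identity that makes weight functions well-defined, cf.\ the identity displayed in the proof of Lemma \ref{symWtr}. I expect this elementary identity to be the only real content; the bookkeeping of which $\gm$'s sit in which blocks is the part most prone to sign and index slips, so that is where I would be most careful, but it is entirely mechanical. Summing the verified pairwise identity over all $\si$ (the pairing $\si\leftrightarrow\si\cdot s_{\ii}$ is a fixed-point-free involution on $S_k$) gives $V(\TT\:;\gm_1,\dots,\gm_{i+1},\gm_i,\dots;h)=V(\TT\:;\gmb\:;h)$, and since adjacent transpositions generate $S_k$, full symmetry in $\gm_1\lc\gm_k$ follows, proving the lemma.
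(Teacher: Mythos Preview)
Your proof is correct and follows essentially the same approach as the paper's: both reduce the claim to the polynomial identity
\[
(t_i-\gm_m)(t_j-\gm_l-h)(t_i-t_j-h)-(t_j-\gm_m)(t_i-\gm_l-h)(t_j-t_i-h)
=(t_i-\gm_l)(t_j-\gm_m-h)(t_i-t_j-h)-(t_j-\gm_l)(t_i-\gm_m-h)(t_j-t_i-h),
\]
which is exactly your displayed symmetry condition after relabeling. The paper simply states this identity and stops, whereas you spell out the surrounding combinatorics --- localizing the $\gm_i,\gm_{i+1}$ dependence to blocks $i$ and $i+1$, and pairing $\si$ with $\si\,s_{i,i+1}$ in the symmetrization. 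One point you leave implicit but should check is that the \emph{non\:-local} factor $R'$ (everything in $U$ besides $(t_i-\gm_{i+1})(t_{i+1}-\gm_i-h)\,\dfrac{t_i-t_{i+1}-h}{t_i-t_{i+1}}$) is not only symmetric in $\gm_i,\gm_{i+1}$ but also invariant under swapping the $\TT$\:-values in slots $i$ and $i+1$; this holds because the remaining $\gm$\:-products in blocks $i$ and $i+1$ have identical shape, and the remaining $j$\:-product factors involving slot $i$ or $i+1$ swap in pairs. With that verified, your pairing argument goes through.
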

\begin{proof}
The statement follows from the identity
\vvn.2>
\begin{align*}
& (t_i\<-\gm_m)\>(t_j\<-\gm_l\<-h)\>(t_i\<-t_j\<-h)-
(t_j\<-\gm_m)\>(t_i\<-\gm_l\<-h)\>(t_j\<-t_i\<-h)\,={}
\\[4pt]
&\!{}=\,(t_i\<-\gm_l)\>(t_j\<-\gm_m\<-h)\>(t_i\<-t_j\<-h)-
(t_j\<-\gm_l)\>(t_i\<-\gm_m\<-h)\>(t_j\<-t_i\<-h)\,.\kern-1em
\\[-34pt]
\end{align*}
\end{proof}
\vsk-.4>
Set
\vvn-.7>
\beq
\label{QG}
Q(\gmb\:;\gmbb\:;h)\,=\,
\prod_{i=1}^k\,\prod_{j=1}^{n-k}\,(\gm_i\<-\gmbr_{\?j}\<-h)\,.\kern-1em
\eeq
The product $\,W\<(\TT\:;\gmb\:;\gmbb\:;h)\>Q(\gmb\:;\gmbb\:;h)\,$
is called the {\it cohomological weight function}.

\vsk.4>
Define the {\it reduced master function}
\vvn.4>
\begin{align}
\label{Phr}
\Phr\< &{}(\TT\:;\zz\:;h)\,=\,e^{\:n\:\pii\>\sum_{j=1}^{\:k}\?t_j}\<\times{}
\\[4pt]
\notag
&{}\!\!\times\>\prod_{i=1}^k\,\biggl(\,\prod_{\satop{j=1}{j\ne i}}^k
\,\frac1{\Gm(t_j\<-t_i)\,\Gm(1+t_i\<-t_j\<-h)}\,\:
\prod_{a=1}^n\,\Gm(z_a\<-t_i)\,\Gm(t_i\<-z_a\<-h)\<\biggr)\,,\kern-1.15em
\\[-16pt]
\notag
\end{align}
where $\,\:\Gm\>$ is the Gamma\:-\:function, and the {\it master function}
\vvn.4>
\beq
\Phi(\TT\:;\zz\:;h\:;p)\,=\,
p^{\>\sum_{j=1}^{\:k}\?t_j}\>\Phr\<(\TT\:;\zz\:;h)\,.\kern-1.8em
\vv.3>
\eeq

\begin{example}
Let $\,k=1\,$. Then
\begin{gather*}
W\<(\TT\:;\gmb\:;\gmbb\:;h)\>=\>\prod_{i=1}^{n-k}\,(t_1\<-\gmbr_i)\,,
\qquad
Q(\gmb\:;\gmbb\:;h)\>=\>\prod_{i=1}^{n-k}\,(\gm_1\<-\gmbr_i\<-h)\,,
\kern-1.4em
\\[1pt]
\Phr\<(\TT\:;\zz\:;h)\>=\>e^{\:n\:\pii\,t_1}\:
\prod_{a=1}^n\,\Gm(z_a\<-t_1)\,\Gm(t_1\<-z_a\<-h)\,,\kern1.6em
\Phi(\TT\:;\zz\:;h\:;p)\>=\>p^{\>t_1}\:\Phr\<(\TT\:;\zz\:;h)\,.
\end{gather*}
\end{example}

\subsection{Solutions $\,\:\Pso_{\?I}(\zz\:;h\:;p)\,$
and $\;\Psf_{\?I}\<(\zz\:;h\:;p)\,$}
\label{secsolI}
For a function $\,f(\TT)\,$ and
\vv.07>
a point $\bss=(s_1\lc s_k)\,$,
\vvn.2>
let $\,\:\Res_{\>\TT\:=\:\bss}f(\TT)\,$ be the iterated residue,
\beq
\label{Res}
\Res_{\>\TT\:=\:\bss}f(\TT)\,=\,
\Res_{\>t_1=\:s_1}\ldots\>\Res_{\>t_k=\:s_k}f(\TT)\,.\kern-2em
\vv.4>
\eeq
For a subset $\,I=\{i_1\?\lsym<i_k\}\<\in\Ikn\,$, denote
\vvn.4>
\be
\zz_I\>=\,(z_{i_1}\lc z_{i_k})\,,\qquad
\zz_I\<+\hb\,=\,(z_{i_1}\?+h\lc z_{i_k}\?+h)\,.\kern-2em
\vv.3>
\ee
Consider the series
\vvn.4>
\begin{gather}
\label{Fo}
\Fo_{\?I}(\zz\:;h\:;p)\,=\,\bigl(\<-\:\Gm(\<-\:h)\<\bigr)^{-k}
\sum_{\lb\in\:\Z_{\ge 0}^k\!}\Res_{\>\TT\:=\zz_I\<+\:\lb}
\bigl(\Phr\<(\TT\:;\zz\:;h)\>W\<(\TT\:;\gmb\:;\gmbb\:;h)\<\bigr)\,
p^{\>\sum_{i=1}^{\:k}\<l_i}\:,\kern-2em
\\[4pt]
\label{Ff}
\Ff_{\?I}(\zz\:;h\:;p)\,=\:\bigl(\:\Gm(\<-\:h)\<\bigr)^{-k}
\sum_{\lb\in\:\Z_{\ge 0}^k\!}\Res_{\>\TT\:=\zz_I\<+\hb-\:\lb}
\bigl(\Phr\<(\TT\:;\zz\:;h)\>W\<(\TT\:;\gmb\:;\gmbb\:;h)\<\bigr)\,
p^{-\!\sum_{i=1}^{\:k}\<l_i}\:.\kern-2em
\end{gather}

\vsk.2>
Let $\,L\,$ be the complement in $\,\:\C^n\!\times\C\,$ of the hyperplanes
\vvn.4>
\beq
\label{L}
z_a\<-z_b\<\in\Z\,,\qquad z_a\<-z_b\<-h\in\Z_{\le 0}\,,\qquad a\ne b\,.
\kern-2em
\vv.3>
\eeq

\begin{prop}
\label{p<>1}
For any $\,I\?\in\Ikn\,$, the function $\,\Fo_{\?I}(\zz\:;h\:;p)\:$ is
\vv.07>
holomorphic in $\,\:p\:$ for $\,|\:p\:|\<<1\,$, and the function
$\,\Ff_{\?I}(\zz\:;h\:;p)\:$ is holomorphic in $\,\:p\:$ for $\,|\:p\:|>1\,$.
\vv.07>
The functions $\,\Fo_{\?I}(\zz\:;h\:;p)\:$ and $\,\Ff_{\?I}(\zz\:;h\:;p)\:$ are
holomorphic in $\,\zz,h\>$ for $\,(\zz,h)\in L\,$.
\end{prop}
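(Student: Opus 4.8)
The plan is to evaluate each residue term in \eqref{Fo} and \eqref{Ff} explicitly, bound its growth in the summation index $\lb$, and conclude locally uniform convergence of the series. First I would locate the poles of the integrand: regarded as a function of one variable $t_i$ with the others held generic, $\Phr(\TT;\zz;h)$ has poles only along $t_i\in z_a+\Z_{\ge0}$ (from $\Gm(z_a-t_i)$) and along $t_i\in z_a+h-\Z_{\ge0}$ (from $\Gm(t_i-z_a-h)$), all reciprocal-Gamma factors being entire. Hence on the locus $\{h\notin\Z\}\cap L$ the point $\TT=\zz_I+\lb$ is a simple pole in each variable, and the iterated residue picks up from each $\Gm(z_{i_j}-t_j)$ the factor $\Res_{\,w=-l_j}\Gm(w)=(-1)^{l_j}/l_j!$, multiplied by the remaining factors evaluated at $t_j=z_{i_j}+l_j$. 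Using $\Gm(x+l)=\Gm(x)(x)_l$ and $\Gm(x-l)=(-1)^l\Gm(x)/(1-x)_l$, the coefficient $C_\lb(\zz;h)$ of $p^{|\lb|}$ in $\Fo_I$ equals, up to a factor not depending on $\lb$,
\[
\prod_{j=1}^k\frac{(-h)_{l_j}\,\prod_{a\ne i_j}(z_{i_j}-z_a-h)_{l_j}}{l_j!\,\prod_{a\ne i_j}(1+z_{i_j}-z_a)_{l_j}}\ \cdot\ \bigl(\text{cross terms coming from the }\Gm(t_j-t_{j'})^{-1}\Gm(1+t_{j'}-t_j-h)^{-1}\bigr)\ \cdot\ W(\zz_I+\lb;\gmb;\gmbb;h)\,,
\]
where the $k$ factors $\Gm(l_j-h)=\Gm(-h)(-h)_{l_j}$ have cancelled the prefactor $(-\Gm(-h))^{-k}$, so that no Gamma function is left in a denominator. (For $k=1$ this is, up to a factor and the polynomial $W$, the $p^l$-coefficient of a ${}_nF_{n-1}$ series, of radius of convergence $1$.)

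Next, each $C_\lb$ is holomorphic on $L$: every factor above is either a polynomial in $\lb,\zz,h$ and the Chern roots $\gmb,\gmbb$ (the numerator Pochhammers and $W$), or an entire reciprocal Gamma (the cross terms), or a reciprocal Pochhammer $(1+z_{i_j}-z_a)_{l_j}^{-1}$ with $a\ne i_j$, which is singular only when $z_a-z_{i_j}\in\Z_{\ge1}$, excluded on $L$. Since $C_\lb$ carries no Gamma function in a denominator, it moreover extends holomorphically across $h\in\Z$, so the restriction $h\notin\Z$ made above costs nothing and \eqref{Fo} on all of $L$ is to be read via this extension.

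The main step is the growth estimate. Fix a compact $K\subset L$. The crucial point is a hypergeometric ``balance'': for each $j$ the numerator of $C_\lb$ carries the $n$ Pochhammers $(z_{i_j}-z_a-h)_{l_j}$ (one of them $(-h)_{l_j}$), each of size $\asymp l_j!$ by Stirling, whereas the denominator carries $n$ comparable factors, namely $l_j!$ together with the $n-1$ Pochhammers $(1+z_{i_j}-z_a)_{l_j}$; the factorial growths cancel and only a bounded power of $1+l_j$ survives. The cross terms, each a pair $\Gm(v)^{-1}\Gm(1-h-v)^{-1}$ with $v$ affine in $\lb$ and in $\zz$, are converted by the reflection formula into $\pi^{-1}\sin(\pi v)\,\Gm(1-v)/\Gm(1-h-v)$, a ratio of Gamma functions at large arguments that stays polynomially bounded uniformly on $K$ precisely because $z_a-z_b\notin\Z$ keeps $v$ away from $\Z$, hence the $\sin$ factor away from zero. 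Altogether $|C_\lb(\zz;h)|\le c\,(1+|\lb|)^{M}$ on $K$ for some $c,M$, so $\sum_\lb|C_\lb|\,|p|^{|\lb|}\le c\sum_\lb(1+|\lb|)^M|p|^{|\lb|}<\infty$ uniformly for $|p|\le r<1$, and by the Weierstrass theorem $\Fo_I$ is holomorphic on $\{|p|<1\}\times L$. The function $\Ff_I$ is treated identically, the reflected residues at $\TT=\zz_I+\hb-\lb$ producing after the same cancellations coefficients holomorphic on $L$ with polynomial growth, whence $\Ff_I$ is holomorphic on $\{|p|>1\}\times L$. I expect the main obstacle to be exactly this balancing of the factorial growths together with the uniform control of the cross terms via $z_a-z_b\notin\Z$; the residue bookkeeping of the first step is long but routine.
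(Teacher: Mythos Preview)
Your approach is essentially the same as the paper's: bound the residue terms by converting the reciprocal Gamma factors via the reflection formula and then invoking the standard polynomial bound $|\Gm(\al+k)/\Gm(\bt+k)|\le C(k+1)^{\Re(\al-\bt)}$ to see that the coefficients grow at most polynomially in $\lb$, whence the series converges locally uniformly for $|p|<1$ (resp.\ $|p|>1$) and on compacta in $L$. The paper is more streamlined in that it bounds $\Res_{\TT=\zz_I+\lb}\Phr$ directly, without first unfolding into Pochhammer symbols, but the ingredients are identical.

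One small correction: in your treatment of the cross terms, the phrase ``keeps the $\sin$ factor away from zero'' is not the relevant point for an \emph{upper} bound. What actually matters is that in the pair $\Gm(v)^{-1}\Gm(1-h-v)^{-1}$ with $v=z_{i_j}-z_{i_{j'}}+(l_j-l_{j'})$, one reciprocal Gamma grows factorially and the other decays factorially as $|l_j-l_{j'}|\to\infty$, and their product has only polynomial growth by the Gamma-ratio estimate; the condition $z_a-z_b\notin\Z$ on $L$ is needed so that these arguments stay a fixed distance from the integers, making the estimate uniform on compacta. With that clarification your argument is correct.
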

\noindent
The statement is proved in Section \ref{sec:props}.

\vsk.4>
Given $\,I\?\in\Ikn\,$, set
\begin{gather}
\label{Psof}
\Pso_{\?I}(\zz\:;h\:;p)\,=\,p^{\>\sum_{i\in I}\<z_i}\>
\Fo_{\?I}(\zz\:;h\:;p)\,Q(\gmb\:;\gmbb\:;h)\,,\kern-1.8em
\\[6pt]
\notag
\Psf_{\?I}\<(\zz\:;h\:;p)\,=\,p^{\>kh\:+\sum_{i\in I}\<z_i}\>
\Ff_{\?I}(\zz\:;h\:;p)\,Q(\gmb\:;\gmbb\:;h)\,.\kern-1.8em
\\[-13pt]
\notag
\end{gather}
By Proposition \ref{p<>1}, the function $\,\:\Pso_{\?I}(\zz\:;h\:;p)\,$
is holomorphic on the universal cover of the punctured unit disk
$\,\:0<|\:p\:|\<<1\,$, and the function $\,\:\Psf_{\?I}(\zz\:;h\:;p)\,$
is holomorphic on the universal cover of the domain $\,\:|\:p\:|>1\,$.

\begin{thm}
\label{PsIsol}
For any $\,I\?\in\Ikn\,$, the functions $\,\:\Pso_{\?I}(\zz\:;h\:;p)\:$
and $\;\Psf_{\?I}\<(\zz\:;h\:;p)\:$ are multivalued solutions of differential
equation \eqref{mqde} for $\,\:0<|\:p\:|\<<1\:$ and $\;|\:p\:|>1\,$,
respectively.
\end{thm}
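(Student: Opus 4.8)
The plan is to deduce the theorem from the hypergeometric construction of \cite{TV3, TV4} rather than to prove it by a self-contained calculation. Under the dictionary recorded at the start of this section ($N=2$, $\bla=(k,n-k)$, $q_1=p^{-1}$, $q_2=1$, $\ka=-1$, $\qti_1=p$, $\qti_2=1$, $\kat=1$), the equivariant quantum differential equation \eqref{qde} for $\TGkn$ coincides with the rational $\gl_2$ dynamical differential equation of \cite{GRTV, MO, RTV1}, for which \cite{TV3, TV4} exhibit a basis of solutions as sums over $\lb\in\Z_{\ge 0}^k$ of iterated residues of the master function times the cohomological weight function (the product $W\,Q$). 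So the first step is to match the functions $\Pso_{\?I}$ and $\Psf_{\?I}$ defined by \eqref{Fo}, \eqref{Ff}, \eqref{Psof} with those known solutions, up to factors independent of $\TT$; once this is done, the solution property is inherited.

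Two discrepancies must be reconciled. First, the equation: since $f$ solves \eqref{mqde} if and only if $(1-p)^{h\min(0,n-2k)}f$ solves \eqref{qde}, the functions $\Pso_{\?I}$, $\Psf_{\?I}$ are the corresponding solutions of \cite{TV3, TV4} multiplied by the $\TT$-independent power $(1-p)^{-h\min(0,n-2k)}$, which is exactly one of the ``irrelevant factors'' alluded to above. Second, the normalization: the prefactors $\bigl(-\Gm(-h)\bigr)^{-k}$, $\bigl(\Gm(-h)\bigr)^{-k}$, $p^{\sum_{i\in I}z_i}$, $p^{kh+\sum_{i\in I}z_i}$ and the class $Q(\gmb;\gmbb;h)$ are all independent of $\TT$ --- the last because $Q$ involves only the Chern roots $\gm,\gmbr$ and is precisely the discrepancy between the reduced weight function $W$ of \eqref{W} and the cohomological weight function $W\,Q$ of \cite{TV3, TV4} --- so they commute with the iterated residue \eqref{Res} and only rescale the solution, and a direct comparison pins them down against the conventions of \cite{TV3, TV4}. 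Proposition \ref{p<>1} ensures that $\Fo_{\?I}$ is holomorphic for $|p|<1$ and $\Ff_{\?I}$ for $|p|>1$, with holomorphic dependence on $(\zz,h)\in L$; this legitimizes the identifications on the respective domains and shows that the only multivaluedness of $\Pso_{\?I}$, $\Psf_{\?I}$ is carried by $p^{\sum_{i\in I}z_i}$ and $p^{kh+\sum_{i\in I}z_i}$ on the universal covers of $0<|p|<1$ and $|p|>1$.

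For a direct check I would proceed as follows. Combining \eqref{Fo} and \eqref{Psof}, $\Pso_{\?I}$ equals, up to the $\TT$-independent constant $Q(\gmb;\gmbb;h)\bigl(-\Gm(-h)\bigr)^{-k}$, the sum $\sum_{\lb}\Res_{\TT=\zz_I+\lb}\bigl(p^{\sum_{j}t_j}\,\Phr(\TT;\zz;h)\,W(\TT;\gmb;\gmbb;h)\bigr)$. Since $p\>\der/\der p$ multiplies $p^{\sum_j t_j}$ by $\sum_j t_j$, applying $p\>\der/\der p$ to $\Pso_{\?I}$ inserts $\sum_j t_j$ inside each residue; at $\TT=\zz_I+\lb$ this equals $\sum_{i\in I}z_i+\sum_j l_j$. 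The term $\sum_{i\in I}z_i$ is the restriction to the fixed point $F_I$ of the classical multiplication by $c_1(E)=\gm_1+\dots+\gm_k$, hence is reproduced by the classical part of $c_1(E)*_p$. The term $\sum_j l_j$ is matched, by reindexing $\lb\mapsto\lb-e_j$ and using the contiguity relation of the master function (under a unit shift of its argument each factor $\Gm(z_a-t_i)$ acquires an explicit rational multiplier, since $\Gm(x+1)=x\,\Gm(x)$) together with the symmetry of $W$ in $\gm_1,\dots,\gm_k$ (Lemma \ref{symW}), against the $p$-dependent off-diagonal part of $c_1(E)*_p$ given in \cite[Theorem 10.2.1, Proposition 11.2.2]{MO}; the boundary terms of this telescoping are singular at $p=1$ and are precisely those absorbed by the extra term $\frac{hp}{1-p}\min(0,n-2k)f$ of \eqref{mqde}. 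Running the same computation with $\zz_I+\hb-\lb$ in place of $\zz_I+\lb$ and the opposite sign of $\sum_i l_i$ in the exponent of $p$ gives the statement for $\Psf_{\?I}$.

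The hard part is the bookkeeping of normalizations: identifying exactly which $\TT$-independent factors convert a solution of \eqref{qde} into $\Pso_{\?I}$, $\Psf_{\?I}$, and matching the present conventions with those of \cite{TV3, TV4} --- or, in the direct approach, carrying out the residue-reindexing identities with the explicit matrix of $c_1(E)*_p$ from \cite{MO} and verifying that the residual contribution at $p=1$ is precisely the $\frac{hp}{1-p}\min(0,n-2k)$ term, which is what motivated passing from \eqref{qde} to \eqref{mqde} in the first place. Everything else --- differentiation under the sum, commutation of $\TT$-independent factors past the residue, and the holomorphy on $0<|p|<1$ and $|p|>1$ --- is routine once Proposition \ref{p<>1} is in hand, and the multivaluedness statement then follows immediately.
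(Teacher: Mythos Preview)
Your approach is essentially the same as the paper's: both deduce the statement by identifying $\Pso_{\?I}$ with the hypergeometric solutions already constructed in \cite{TV4}. The paper's proof is in fact even terser than yours --- it simply cites \cite[Theorem~9.5]{TV4} for $\Pso_{\?I}$ and asserts that the argument for $\Psf_{\?I}$ is similar, without spelling out the dictionary of normalizations or the $(1-p)^{h\min(0,n-2k)}$ adjustment between \eqref{qde} and \eqref{mqde}; those adjustments are absorbed into the remark at the head of Section~\ref{sec Int Rep} that ``we eliminate or modify certain factors that are irrelevant for our consideration.'' Your additional sketch of a direct residue-reindexing verification against the explicit $c_1(E)*_p$ of \cite{MO} is not in the paper and would constitute an independent check, though carrying it through in full would require the detailed formula for the quantum multiplication that the paper never writes out.
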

\begin{proof}
For the function $\,\:\Pso_{\?I}(\zz\:;h\:;p)\,$, the statement follows from
\cite[Theorem 9.5\:]{TV4}\:. The proof of the statement for the function
$\,\:\Psf_{\?I}(\zz\:;h\:;p)\,$ is similar.
\end{proof}

\begin{cor}
\label{corPsI}
For any $\,(\zz,h)\in L\:$ and $\,I\?\in\Ikn\,$, the functions
\vv.06>
$\,\:\Pso_{\?I}(\zz\:;h\:;p)\:$ and $\;\Psf_{\?I}\<(\zz\:;h\:;p)\:$ can be
analytically continued as functions of $\,\:p\:$ to holomorphic functions
on the universal cover of $\,\:\Czon\,$.
\end{cor}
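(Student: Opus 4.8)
The plan is to deduce the statement from the standard Cauchy theory of analytic continuation for linear ordinary differential equations with holomorphic coefficients. By Theorem \ref{PsIsol}, $\Pso_{\?I}(\zz\:;h\:;p)$ solves the modified quantum differential equation \eqref{mqde} on the punctured disk $\,0<|\:p\:|<1\,$, and $\Psf_{\?I}(\zz\:;h\:;p)$ solves \eqref{mqde} on the domain $\,|\:p\:|>1\,$. Writing \eqref{mqde} as the first-order linear system $\,\der f/\der p=\bigl(p^{-1}c_1(E)*_p-\tfrac{h}{1-p}\:\min\:(0,n-2k)\bigr)f\,$, its coefficient differs from that of \eqref{qde} only by the scalar term $\,\tfrac{h}{1-p}\:\min\:(0,n-2k)\,$, which is holomorphic on $\,\Czon\,$; and the coefficient of \eqref{qde} is holomorphic there by Lemma \ref{lem sing} (whose proof rests on \cite[Proposition~11.2.2]{MO}), since \eqref{qde} is Fuchsian with singular points only at $\,p=0\:,1,\infty\,$. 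Hence the coefficient of \eqref{mqde} is holomorphic on all of $\,\Czon\,$.

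Next I would record that $\Pso_{\?I}$ and $\Psf_{\?I}$ are single-valued holomorphic solutions on the universal covers of their respective domains, so that there is a bona fide germ to continue. By \eqref{Psof} and Proposition \ref{p<>1}, $\,\Pso_{\?I}(\zz\:;h\:;p)=p^{\>\sum_{i\in I}z_i}\Fo_{\?I}(\zz\:;h\:;p)\,Q(\gmb\:;\gmbb\:;h)\,$ with $\Fo_{\?I}$ single-valued holomorphic on $\,|\:p\:|<1\,$, so after fixing a branch of $\,\log p\,$ the function $\Pso_{\?I}$ is single-valued holomorphic on the universal cover of the punctured disk $\,0<|\:p\:|<1\,$; similarly $\,\Psf_{\?I}(\zz\:;h\:;p)=p^{\>kh+\sum_{i\in I}z_i}\Ff_{\?I}(\zz\:;h\:;p)\,Q(\gmb\:;\gmbb\:;h)\,$ is single-valued holomorphic on the universal cover of the domain $\,|\:p\:|>1\,$.

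Finally I would invoke the existence-and-uniqueness theorem for linear systems with holomorphic coefficients. Fix $(\zz,h)\in L$ and a base point $\,p_0\,$ with $\,0<|\:p_0\:|<1\,$; the germ of $\Pso_{\?I}$ at a lift of $\,p_0\,$ continues uniquely along every path in $\,\Czon\,$, the continuation remaining a solution of \eqref{mqde} (in particular with values in the same finite-dimensional space), so it defines a single-valued holomorphic function on the universal cover of $\,\Czon\,$ that agrees with $\Pso_{\?I}$ near $\,p_0\,$. The same argument with $\,p_0\,$ chosen in $\,|\:p_0\:|>1\,$ handles $\Psf_{\?I}$. I do not anticipate a real obstacle: the one substantive input is the holomorphy of the coefficient of \eqref{mqde} on $\,\Czon\,$, supplied by Lemma \ref{lem sing}, and everything else is the routine Cauchy theory of linear ordinary differential equations.
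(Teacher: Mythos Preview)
Your proof is correct and follows the same approach as the paper: the paper's proof simply reads ``The statement follows from Theorem \ref{PsIsol} and Lemma \ref{lem sing},'' and you have unpacked precisely what that entails, namely that a solution of a linear ODE whose coefficients are holomorphic on $\Czon$ continues along any path there. Your additional remarks about the scalar modification in \eqref{mqde} and about the holomorphy of the initial germs (via Proposition \ref{p<>1} and \eqref{Psof}) are the right supporting details.
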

\begin{proof}
The statement follows from Theorem \ref{PsIsol} and Lemma \ref{lem sing}.
\end{proof}

\subsection{Solutions and $\:K\?$-theory}
\label{secsolK}
For $\,\:X\?\in\<\KT\,$, recall its restrictions
\vvn.16>
$\,\:X|_I(\ZZ;H)\,$, $\,I\?\in\Ikn\,$, to the fixed points. Set
\vvn.2>
\be
\Xdd|_I(\zz\:;h)\,=\,X|_I\bigl(e^{\:2\pii\,z_1}\?\lc e^{\:2\pii\,z_n}\:;
e^{\:2\pii\,h}\:\bigr)\,.\kern-2em
\vv.33>
\ee
Define the functions $\,\:\Pso_{\?X}(\zz\:;h\:;p)\,$
and $\,\Psf_{\?X}\<(\zz\:;h\:;p)\,$ by the rule
\vvn.5>
\begin{gather}
\label{PsofX}
\Pso_{\?X}(\zz\:;h\:;p)\,=\!\sum_{I\in\>\Ikn\!}\?
\Xdd|_I(\zz\:;h)\>\Pso_{\?I}(\zz\:;h\:;p)\,,\kern-1.5em
\\[4pt]
\notag
\Psf_{\?X}\<(\zz\:;h\:;p)\,=\!\sum_{I\in\>\Ikn\!}\?
\Xdd|_I(\zz\:;h)\>\Psf_{\?I}(\zz\:;h\:;p)\,.\kern-1.5em
\\[-15pt]
\notag
\end{gather}
By Corollary \ref{corPsI}, for any $\,(\zz,h)\in L\,$, both
\vv.07>
$\,\:\Pso_{\?X}(\zz\:;h\:;p)\:$ and $\,\:\Psf_{\?X}\<(\zz\:;h\:;p)\:$
as functions of $\,\:p\:$ are holomorphic functions on the universal cover of
$\,\:\C\<\setminus\?\{0,1\}\,$.

\vsk.3>
Let $\,\Lp\:$ be the complement in $\,\:\C^n\!\times\C\,$ of the hyperplanes
\vv.3>
\beq
\label{Lp}
z_a\<-z_b\<-h\in\Z_{\le 0}\,,\qquad a\ne b\,.\kern-2em
\vv.4>
\eeq

\begin{thm}
\label{regz=z}
For any $\,(\zz,h)\in\Lp\!$ and $\,\:X\?\in\<\KT\,$, the functions
\vv.07>
$\,\:\Pso_{\?X}(\zz\:;h\:;p)\:$ and $\;\Psf_{\?X}\<(\zz\:;h\:;p)\:$
\vv.07>
are solutions of differential equation \eqref{mqde}\:. Moreover,
$\,\:\Pso_{\?X}(\zz\:;h\:;p)\:$ and $\,\:\Psf_{\?X}\<(\zz\:;h\:;p)\:$
are holomorphic functions of $\,\:\zz,h\:$ in $\,\Lp\<$.
\end{thm}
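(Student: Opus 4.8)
The plan is to establish Theorem~\ref{regz=z} by first reducing to the individual solutions $\,\Pso_{\?I}\,$ and $\,\Psf_{\?I}\,$ of Theorem~\ref{PsIsol}, which are already known to solve \eqref{mqde}, and then analyzing what happens at the hyperplanes $\,z_a-z_b\in\Z\,$ (with $\,a\ne b\,$) that lie in $\,\Lp\,$ but not in $\,L\,$. By Proposition~\ref{p<>1} and Corollary~\ref{corPsI}, for $\,(\zz,h)\in L\,$ each $\,\Pso_{\?I}\,$ (resp.\ $\,\Psf_{\?I}\,$) is holomorphic in $\,\zz,h\,$ and solves \eqref{mqde}; hence the linear combinations $\,\Pso_{\?X}\,$ and $\,\Psf_{\?X}\,$ defined by \eqref{PsofX} are solutions on $\,L\,$, holomorphic there. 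Since \eqref{mqde} depends holomorphically on $\,\zz,h\,$ by Lemma~\ref{lem sing}, the only issue is to show that the combination $\,\sum_I\Xdd|_I(\zz;h)\,\Pso_{\?I}(\zz;h;p)\,$ — although each summand may have a pole along $\,z_a-z_b\in\Z\,$ — has all poles cancelling, so that $\,\Pso_{\?X}\,$ extends holomorphically across these hyperplanes to all of $\,\Lp\,$; once holomorphicity on $\,\Lp\,$ is known, the property of solving \eqref{mqde} propagates by analytic continuation from $\,L\,$.

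The key step is therefore the pole cancellation at a hyperplane $\,z_a-z_b=m\in\Z\,$, $\,a\ne b\,$. First I would reduce to the case $\,m=0\,$, i.e.\ $\,z_a=z_b\,$: the other integer shifts are handled by the same mechanism because the residue-sum definitions \eqref{Fo},\eqref{Ff} are built from the $\,\Z\,$-periodic (in the relevant sense) combination $\,\Phr\,W\,$, and because the $\Z$-lattice shifts only permute the summation index $\,\lb\,$ and produce compensating factors of $\,p\,$; the genuinely new phenomenon occurs only modulo $\Z$. At $\,z_a=z_b\,$, the poles in the individual $\,\Fo_{\?I}\,$ come from collisions of residue locations $\,\zz_I+\lb\,$, and I would pair up the subsets $\,I\,$ and $\,s_{ab}(I)\,$: the restriction data satisfy $\,X|_I-X|_{s_{ab}(I)}\,$ divisible by $\,Z_a-Z_b\,$ (the equivariant-localization image condition recalled in Section~\ref{sec:equivK}, transcribed to exponential variables via $\,\Xdd\,$), and the weight functions $\,W\,$ and the factor $\,Q\,$ have a matching antisymmetry/divisibility in $\,z_a-z_b\,$ coming from Lemma~\ref{symW} and the explicit product \eqref{QG}. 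The residue at $\,\TT=\zz_I+\lb\,$ of $\,\Phr\,W\,$, when summed over the pair $\,\{I,s_{ab}(I)\}\,$ and weighted by the $\,\Xdd|_I\,$, should be manifestly regular at $\,z_a=z_b\,$ because the simple pole of $\,\Phr\,$ (from the $\,\Gm\,$-factors) along $\,z_a-z_b\in\Z\,$ is cancelled by the numerator vanishing of $\,(X|_I-X|_{s_{ab}(I)})W\,$. This is essentially the same cancellation argument used in the hypergeometric-solution literature (\cite{TV4}, \cite{TV5}), and I would cite the relevant regularity statement there and then adapt it to the present normalization.

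The main obstacle I anticipate is bookkeeping the interaction between the residue operation $\,\Res_{\TT=\zz_I+\lb}\,$ and the collision of poles: when two of the points $\,z_{i_r}+l_r\,$ and $\,z_{i_s}+l_s\,$ coincide (which is exactly what $\,z_a-z_b\in\Z\,$ forces for suitable $\,\lb\,$), the iterated residue \eqref{Res} is a priori not well defined term by term, and one must argue that the \emph{sum} over $\,\lb\in\Z_{\ge0}^k\,$ (together with the pairing over $\,I\leftrightarrow s_{ab}(I)\,$) reorganizes into residues at genuinely simple poles. The clean way to do this is to note that on $\,\Lp\,$ the forbidden collisions $\,z_a-z_b-h\in\Z_{\le0}\,$ are excluded, so the only coincidences are of the ``removable'' type, and to invoke the holomorphy of $\,\Fo_{\?I}\,$ in $\,p\,$ from Proposition~\ref{p<>1} to justify interchanging the $\,\lb\,$-sum with the limit $\,z_a\to z_b\,$. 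I would then conclude: $\,\Pso_{\?X}\,$ and $\,\Psf_{\?X}\,$ extend holomorphically to $\,\Lp\times\{\,0<|p|<1\,\}\,$ and $\,\Lp\times\{\,|p|>1\,\}\,$ respectively, hence by Corollary~\ref{corPsI} to the universal cover of $\,\Czon\,$, and being solutions of \eqref{mqde} on the dense open set $\,L\,$ they remain solutions on all of $\,\Lp\,$ by uniqueness of analytic continuation. This completes the proof.
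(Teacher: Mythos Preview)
Your approach is essentially the same as the paper's: reduce to $L$ via Proposition~\ref{p<>1} and Theorem~\ref{PsIsol}, then establish regularity across the hyperplanes $z_a-z_b\in\Z$ using the divisibility $X|_I-X|_{s_{ab}(I)}\equiv 0\pmod{Z_a-Z_b}$ from equivariant localization. The paper dispatches the pole-cancellation step by citing \cite[Proposition~4.13]{TV5} (observing that \eqref{PsofX} matches \cite[(4.41)]{TV5}) rather than working it out in place, and handles $\Psf_{\?X}$ by remarking that the proof is analogous; the mechanism you sketch is precisely what underlies that cited result.
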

\begin{proof}
By Proposition \ref{p<>1} and Theorem \ref{PsIsol}, the functions
\vv.05>
$\,\:\Pso_{\?X}(\zz\:;h\:;p)\,$ and $\;\Psf_{\?X}\<(\zz\:;h\:;p)\,$ are
solutions of differential equation \eqref{mqde} holomorphic in $\,L\,$.
\vv.07>
The fact that $\,\:\Pso_{\?X}(\zz\:;h\:;p)\,$ is regular at the hyperplanes
\vv.07>
$\,z_a\<-z_b\in\Z\,\:$ for all $\,a\ne b\,$, and hence, extends to $\,\Lp\<$,
follows from \cite[Proposition~4.13\:]{TV5} since the first of formulae
\vv.04>
\eqref{PsofX} matches formula \cite[(4.41)\:]{TV5}\:. The proof of
the regularity of $\;\Psf_{\?X}\<(\zz\:;h\:;p)\,$ at the hyperplanes
\vv.07>
$\,z_a\<-z_b\in\Z\,\:$ for all $\,a\ne b\,$, is similar to the proof
of \cite[Proposition~4.13\:]{TV5}\:.
\end{proof}

Consider the polynomials $\,V_{\<I}\,$, $\,I\?\in\Ikn\,$,
defined by formula \eqref{Sch}\:.
\begin{prop}
\label{VK}
The classes $\,\:V_{\<I}(\GG)\,$, $\>I\?\in\Ikn\,$, \>give a basis
\vv.04>
of $\,\KT$ as a free module over $\,\C[\:\ZZ^{\pm1};H^{\pm1}\:]\,$.
\end{prop}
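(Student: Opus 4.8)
The plan is to show that the classes $V_{\<I}(\GG)$, $I\?\in\Ikn$, form a basis of $\KT$ over $\C[\:\ZZ^{\pm1};H^{\pm1}\:]$ by the standard triangularity-and-determinant argument using the equivariant localization map $\Loc$ of Section~\ref{sec:equivK}. Since $\KT$ is a free $\C[\:\ZZ^{\pm1};H^{\pm1}\:]$-module of rank $\binom nk=\#\Ikn$ and $\Loc$ is an algebra embedding into $\bigoplus_{I\in\Ikn}K_T(F_I)$, it suffices to prove that the matrix $\bigl(V_{\<I}|_J\bigr)_{I,J\in\Ikn}$, whose $(I,J)$-entry is the restriction $V_{\<I}(\GG)|_J=V_{\<I}(\ZZ_{\<J})$ (the substitution $\{\Gm_1\lc\Gm_k\}\mapsto\{Z_j\mid j\in J\}$), has determinant equal to a unit in $\C[\:\ZZ^{\pm1};H^{\pm1}\:]$, i.e.\ a monomial $c\,\ZZ^{\,\mb}H^{\,d}$ with $c\in\C^\times$. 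Indeed, if that determinant is a unit, then $\{V_{\<I}(\GG)\}$ maps under $\Loc$ to a $\C[\:\ZZ^{\pm1};H^{\pm1}\:]$-basis of $\Im\Loc$, hence is a basis of $\KT$.

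First I would recall the definition of the polynomials $V_{\<I}$ from formula \eqref{Sch} and the facts collected in the Appendix about Schur polynomials. The key structural input is that, with $I=\{i_1<\dots<i_k\}$ associated to the partition $\la_I=(i_k-k\lc i_1-1)$ contained in the $k\times(n-k)$ rectangle, the polynomial $V_{\<I}$ is (up to a normalization by a power of $H$ and of the $Z_a$'s) a factorial/shifted Schur-type polynomial built so that its restrictions satisfy the triangularity property
\beq
\notag
V_{\<I}(\ZZ_{\<J})\,=\,0\quad\text{unless}\quad \la_J\supseteq\la_I,\qquad
V_{\<I}(\ZZ_{\<I})\,=\,\text{(explicit monomial in }\ZZ,H),
\eeq
where the partial order is the componentwise (containment) order on partitions in the rectangle. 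I would verify this triangularity directly from the product/Schur formula \eqref{Sch}: the nonsymmetric building blocks of $V_{\<I}$ vanish exactly at the $Z_a$'s indexed by the "wrong" coordinates, which forces $V_{\<I}(\ZZ_{\<J})=0$ whenever $J$ is not $\geqslant I$ in the dominance/containment order, and a short computation gives the diagonal value $V_{\<I}(\ZZ_{\<I})$ as an explicit monomial. This is the classical triangularity of (equivariant) Schubert-type classes and should follow "by inspection" of \eqref{Sch} in the same spirit as Lemma~\ref{lemWeIJ}.

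With triangularity in hand, choose any linear extension of the containment order on $\Ikn$; then the matrix $\bigl(V_{\<I}|_J\bigr)$ is triangular with respect to that ordering, so
\beq
\notag
\det\bigl(V_{\<I}|_J\bigr)_{I,J\in\Ikn}\,=\,\prod_{I\in\Ikn}V_{\<I}(\ZZ_{\<I})\,,
\eeq
which is a product of monomials in $\ZZ,H$, hence a unit in $\C[\:\ZZ^{\pm1};H^{\pm1}\:]$ with a nonzero complex coefficient. (If the chosen normalization of $V_{\<I}$ makes the diagonal values equal to $1$, this product is simply $1$.) Since $\KT$ is free of rank $\#\Ikn$ and $\Loc$ embeds it into $\bigoplus_I K_T(F_I)\cong\C[\:\ZZ^{\pm1};H^{\pm1}\:]^{\#\Ikn}$, a set of $\#\Ikn$ elements whose matrix of restrictions has unit determinant is automatically a basis; this proves Proposition~\ref{VK}.

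The main obstacle, and the only place real work is needed, is pinning down the precise triangularity and diagonal-value statements from the explicit formula \eqref{Sch} in the Appendix — i.e.\ matching the combinatorics of the subset $I$, the associated partition inside the $k\times(n-k)$ rectangle, and the vanishing/leading behavior of the (possibly $H$-deformed, "factorial") Schur polynomial $V_{\<I}$ under the substitutions $\{\Gm_i\}\mapsto\{Z_j\mid j\in J\}$. Everything after that (choosing a compatible linear order, reading off the determinant, invoking freeness and the localization embedding) is formal. An alternative, if one prefers to avoid the combinatorial bookkeeping, is to note that the $K$-theoretic classes $V_{\<I}$ reduce, after specializing $H\to 1$ (or passing to the associated graded with respect to the $H$-filtration), to the ordinary Schubert classes of $\Gkn$, which are known to form a $K_A(\mathrm{pt})$-basis; upper-triangularity plus a Nakayama-type lifting argument over $\C[\:\ZZ^{\pm1};H^{\pm1}\:]$ then yields the claim. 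But the direct determinant computation via \eqref{Sch} is cleaner and self-contained, so that is the route I would take.
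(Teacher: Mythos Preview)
Your argument rests on a misreading of formula \eqref{Sch}. The polynomials $V_{\<I}$ defined there are \emph{ordinary} Schur polynomials,
\[
V_{\<I}(x_1\lc x_m)\,=\,\frac{\det(x_a^{\,i_b})}{\det(x_a^{\,b})}\,,
\]
with no dependence on $\ZZ$ or $H$ whatsoever; they are not factorial or shifted Schur polynomials. Consequently the triangularity you invoke is false. Already for $k=1$, $n=2$ one has $V_{\{1\}}=1$, $V_{\{2\}}(x_1)=x_1$, so $V_{\{2\}}(\ZZ_{\{1\}})=Z_1\neq0$. More to the point, Lemma~\ref{lemdetV} in the Appendix computes the determinant you want explicitly:
\[
\det\bigl(V_{\<I}(\ZZ_{\<J})\bigr)_{I,J\in\Ikn}\,=\,\prod_{a<b}(Z_b-Z_a)^{\binom{n-2}{k-1}},
\]
which is \emph{not} a unit in $\C[\ZZ^{\pm1};H^{\pm1}]$. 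Hence neither the triangularity nor the unit-determinant step goes through, and the argument collapses. Your alternative via $H\to1$ is likewise empty, since the $V_{\<I}$ do not involve $H$.

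The paper's route is different. Corollary~\ref{lemfV} gives, for any $f\in\C[x_1^{\pm1}\lc x_n^{\pm1}]^{S_k\times S_{n-k}}$, an expansion $f=\sum_I(-1)^{\ell(I)}V_{\<I}(x_1\lc x_k)\,C_I(\xxx)$, where the coefficient $C_I$ is a Laurent polynomial that is \emph{symmetric in all of} $x_1\lc x_n$. In $\KT$ the defining relation forces every $S_n$-symmetric Laurent polynomial in $(\GG,\GGb)$ to equal the corresponding Laurent polynomial in $\ZZ$; thus $C_I(\GG,\GGb)\in\C[\ZZ^{\pm1};H^{\pm1}]$ and the $V_{\<I}(\GG)$ span. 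Linear independence then follows from the orthogonality relation \eqref{Sch2}, which exhibits an explicit left inverse to the restriction matrix over the fraction field, or equivalently from the nonvanishing of the determinant in Lemma~\ref{lemdetV}. The non-unit determinant reflects precisely that $\Loc$ is not surjective, so a basis of $\KT$ need not (and here does not) map to a basis of the free module $\bigoplus_I K_T(F_I)$.
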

\begin{proof}
The statement follows from formula \eqref{Sch2} and Corollary \ref{lemfV}.
\end{proof}

\vsk.1>
Denote $\,\:\Hc_{\zz_0,\:h_0}\?=\HT\big/\bra\:\zz=\zz_0\,,\,h=h_0\:\ket\,$.

\begin{thm}
\label{KSiso}
For any $\,(\zz_0\:,h_0)\in\Lp$, the functions
\vv.04>
$\,\:\Pso_{\?V_{\?I}\<(\<\GG\<)}\?(\zz_0\:;h_0\:;p)\,$, $\,\:I\?\in\Ikn\,$,
form a basis of solutions of quantum differential equation \eqref{mqde}\:.
\vv.04>
Similarly, for any $\,(\zz_0\:,h_0)\in\Lp$, the functions
$\,\:\Psf_{\?V_{\?I}\<(\<\GG\<)}\?(\zz_0\:;h_0\:;p)\,$, $\,\:I\?\in\Ikn\,$,
\vv-.09>
form a basis of solutions of quantum differential equation \eqref{mqde}\:.
\end{thm}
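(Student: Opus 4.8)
The plan is to prove that the $\binom{n}{k}$ solutions $\Pso_{V_I(\GG)}$, indexed by $I\in\Ikn$, are linearly independent over $\C$ (and likewise for $\Psf_{V_I(\GG)}$). Since equation \eqref{mqde} is a Fuchsian system of rank $\binom{n}{k}=\dim\Hc_{\zz_0,h_0}$ by Lemma \ref{lem sing}, linear independence of $\binom{n}{k}$ solutions is equivalent to their forming a basis of the solution space. By Theorem \ref{regz=z}, each $\Pso_{V_I(\GG)}(\zz_0;h_0;p)$ is a genuine solution for $(\zz_0,h_0)\in\Lp$, so only the independence needs to be established.

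The key step is to examine the leading asymptotics of $\Pso_{V_I(\GG)}$ as $p\to0$. From \eqref{PsofX}, \eqref{Psof}, \eqref{Fo}, each $\Pso_J(\zz;h;p)$ has the form $p^{\sum_{i\in J}z_i}$ times a power series in $p$ with nonzero constant term (the $\lb=0$ term of \eqref{Fo}, times $Q$). So $\Pso_{V_I(\GG)}=\sum_J \Xdd_{V_I}|_J(\zz;h)\,\Pso_J$ has, for generic $\zz$, a well-defined set of exponents $\{\sum_{i\in J}z_i\}$ with matrix of leading coefficients $\bigl(\Xdd_{V_I}|_J(\zz_0;h_0)\cdot c_J(\zz_0;h_0)\bigr)_{I,J}$, where $c_J\ne0$ is the constant term of the relevant series. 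Thus the Wronskian-type determinant of the $\binom{n}{k}$ solutions factors, up to nonvanishing scalars, as $\det\bigl(\Xdd_{V_I}|_J(\zz_0;h_0)\bigr)_{I,J}$. By Proposition \ref{VK}, the classes $V_I(\GG)$ form a basis of $\KT$ over $\C[\ZZ^{\pm1};H^{\pm1}]$, hence by the equivariant localization embedding (Section \ref{sec:equivK}) the matrix $\bigl(V_I|_J(\ZZ;H)\bigr)_{I,J}$ has determinant a unit in $\C[\ZZ^{\pm1};H^{\pm1}]$, in particular nonzero after the substitution $Z_a\mapsto e^{2\pii z_a}$, $H\mapsto e^{2\pii h}$ at a generic point of $\Lp$. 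Therefore the solutions are linearly independent at generic $(\zz_0,h_0)$, and since the Wronskian is holomorphic and its vanishing locus is a proper analytic subset, one deduces independence at every $(\zz_0,h_0)\in\Lp$ (or, more carefully, one argues that the determinant of leading data is exactly $\prod_I$ of units times $\det(V_I|_J)$, which is a unit, so it never vanishes on $\Lp$).

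For the statement about $\Psf_{V_I(\GG)}$, the argument is entirely parallel, using the expansion \eqref{Ff} near $p=\infty$: each $\Psf_J$ is $p^{kh+\sum_{i\in J}z_i}$ times a power series in $p^{-1}$ with nonzero constant term, so the asymptotics as $p\to\infty$ again separate the solutions by the distinct exponents $kh+\sum_{i\in J}z_i$ (generic $\zz$), and the same determinant $\det(V_I|_J)$ being a unit finishes the proof.

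The main obstacle I anticipate is handling the exponents $\sum_{i\in J}z_i$ that are \emph{not} distinct, and more seriously the resonances: when $\sum_{i\in J}z_i - \sum_{i\in J'}z_i \in \Z$, the naive "compare leading terms" argument breaks because the Frobenius expansions of different $\Pso_J$ interact, and one cannot simply read off a block-triangular leading matrix. At a generic point of $\Lp$ these coincidences can be avoided, so the cleanest route is: prove independence at one generic point by the determinant computation above, then invoke that the solution space of a Fuchsian equation depending holomorphically on parameters has locally constant dimension and that a maximal set of everywhere-holomorphic solutions which is a basis at one point remains a basis wherever they stay holomorphic — here on all of $\Lp$ by Theorem \ref{regz=z} — by a connectedness/analytic-continuation argument for the Wronskian. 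Making this last deduction rigorous (that the Wronskian, a priori only defined up to the ambiguity of the multivalued solutions, has no zeros on $\Lp$) is the delicate point, and I would likely pin it down by showing the full matrix of leading asymptotic data, computed along a generic line where resonances are absent, equals $\det(V_I|_J)$ up to an explicit nonvanishing product of Gamma-factors from \eqref{Phr}.
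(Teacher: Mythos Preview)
Your approach contains a concrete error and a genuine gap.

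The error: your claim that $\det\bigl(V_I|_J(\ZZ;H)\bigr)$ is a unit in $\C[\ZZ^{\pm1};H^{\pm1}]$ is false. By Lemma~\ref{lemdetV},
\[
\det\bigl(V_I(\ZZ_J)\bigr)_{I,J\in\Ikn}\,=\,\prod_{a<b}(Z_b-Z_a)^{\binom{n-2}{k-1}}\,,
\]
which is not a unit. The localization map $\Loc$ is only an embedding, not an isomorphism, so a basis of $\KT$ need not map to a basis of $\bigoplus_I K_T(F_I)$ over $\C[\ZZ^{\pm1};H^{\pm1}]$. After the substitution $Z_a\mapsto e^{2\pii z_a}$ this determinant vanishes precisely on the hyperplanes $z_a-z_b\in\Z$, which \emph{do} lie in $\Lp$ (the definition of $\Lp$ excludes only $z_a-z_b-h\in\Z_{\le0}$).

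The gap: your leading-asymptotics argument, even when the exponents $\sum_{i\in J}z_i$ are nonresonant, therefore only yields linear independence on the smaller set $L$, cf.~\eqref{L}. Your proposed remedy --- analytic continuation of the Wronskian in the parameters $(\zz,h)$ --- does not close the gap: the Wronskian of a holomorphic family of solutions is holomorphic in the parameters, but nothing prevents it from vanishing on a hypersurface in $\Lp$ unless you actually compute it. Your final paragraph correctly identifies that what is needed is an explicit evaluation ``up to a nonvanishing product of Gamma-factors'', but this is precisely the content you have not supplied.

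The paper's proof takes the direct route you gesture at: it invokes Proposition~\ref{lemdetM}, which gives the explicit formula
\[
\det\bigl(\Mo_{\IJ}(\zz;h;p)\bigr)\,=\,
(\text{nonzero elementary factors})\cdot
\prod_{a\ne b}\Gamma(z_a-z_b-h)^{\binom{n-2}{k-1}}\,,
\]
with $\Mo_{\IJ}$ the coordinates of $\Pso_{V_I(\GG)}$ in the basis $V_J(\gmb)$ of $\Hc_{\zz_0,h_0}$ (Lemma~\ref{VH}). Since $\Gamma$ has no zeros and its poles at $\Z_{\le0}$ are exactly what $\Lp$ excludes, this is manifestly nonvanishing on all of $\Lp$. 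The formula for $\Mo$ is imported from \cite[Theorem~4.10]{TV5}; the $\Psf$ case then follows via Theorem~\ref{trans} and Proposition~\ref{dettau}. In short, the paper bypasses the asymptotic bookkeeping and the genericity-plus-continuation maneuver by computing the Wronskian outright.
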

\begin{proof}
The statement follows from Propositions \ref{VK}, \ref{lemdetM}, and
Lemma \ref{VH}.
\end{proof}

Informally, Theorem \ref{KSiso} means that each of the assignments
\vv.04>
$\,X\mapsto\Pso_{\?X}(\zz\:;h\:;p)\,$ and $\,X\mapsto\Psf_{\?X}(\zz\:;h\:;p)\,$
\vv.07>
parametrize solutions of quantum differential equation \eqref{mqde} by elements
of $\,\KT\,$.

\subsection{Monodromy of solutions}
\label{secmono}
Denote by $\,\:\Ub\,$ be the universal cover of $\,\:\Czon\,$. For a function
$\,f\>$ on $\,\:\Ub\,$, we define below functions $\,\:\muo\?f\>$ and
$\,\:\muf\?f\>$ on $\,\:\Ub\,$ called the monodromies of $\,f\>$ around zero
and infinity, respectively.

\vsk.2>
For a point $\,p\<\in\<\Ub\,$, denote by $\,p_*\:$ its projection to $\,\:\C\,$.
Let $\,\cho_p\,$ and $\,\chf_p\:$ be the paths on $\,\:\Ub\,$ starting at
$\,p\,$ that have the following properties. The projection of $\,\cho_p\,$ to
$\,\:\C\,$ is a simple loop that starts and ends at $\,p_*\>$, encircles zero
counterclockwise, and avoids the ray $\,\:[\:1,\infty)\<\subset\R\,$.
The projection of $\,\chf_p\:$ to $\,\:\C\,$ is a simple loop that starts
and ends at $\,p_*\>$, and encircles the segment $\,\:[\:0,1\:]\subset\R\,$
counterclockwise. Denote by $\,\ioo p\,$ and $\,\iof p\,$ the end points of
the paths $\,\cho_p\,$ and $\,\chf_p\:$, respectively. The points $\,\ioo p\,$
and $\,\iof p\,$ do not depend on the choice of the paths $\,\cho_p\:$ and
$\,\chf_p\:$.

\vsk.2>
For a function $\,f\>$ on $\,\:\Ub\,$, define the functions $\,\:\muo\?f\>$
and $\,\:\muf\?f\>$ by the rule
\vvn.2>
\be
\muo\?f(p)\>=\>f(\ioo p)\,,\qquad \muf\?f(p)\>=\>f(\iof p)\,.\kern-2em
\vv.3>
\ee
The maps $\,\:\muo\>$ and $\,\:\muf\:$ generate the monodromy group of
quantum differential equation \eqref{mqde}\:. We will describe the action
of $\,\:\muo\>$ and $\,\:\muf\:$ on solutions $\,\:\Pso_{\?X}\,$ and
$\,\:\Psf_{\?X}\>$ labeled by elements of $\,\KT\,$, see Theorems \ref{monoP},
\ref{trans}. According to Theorem \ref{KSiso}, this completely describes
the monodromy of differential equation \eqref{mqde}\:.

\vsk.2>
We begin with a specification of the definition of the solutions
$\,\Pso_{\?I}\>$ and $\;\Psf_{\?I}\<$, see \eqref{Psof}\:, as functions of
$\,p\<\in\<\Ub\,$. Fix a section of $\,\:\Ub\,$ continuous on $\,\:\CRp\,$.
Denote by $\,\:\BB\,$ the image of $\,\:\CRp\>$ under this section. For
\vv.04>
$\,p\in\BB\,$, we define $\,\:\arg\:p\,$ to be the value of $\,\:\arg\:p_*\:$
such that $\,\:-\:2\:\pi\<<\<\arg\:p<0\,$ and define fractional powers
$\,\dsize p^{\:s}\<=|\:p\:|^s e^{\:s\arg p\>\sqrt{\<-1}}$ accordingly. Set
\vvn.4>
\be
\BBl\:=\>\{\:p\in\BB\,,\;\;0<|\:p\:|\<<1\:\}\,,\qquad
\BBg\:=\>\{\:p\in\BB\,,\;\;|\:p\:|>1\:\}\,.\kern-2em
\vv.3>
\ee
Formulae \eqref{Psof}\:, \eqref{Fo}\:, \eqref{Ff} define the functions
\vv.04>
$\,\:\Pso_{\?I}\>$ for $\,p\in\BBl\:$ and the functions $\;\Psf_{\?I}\:$
for $\,p\in\BBg\>$. Then $\,\:\Pso_{\?I}\>$ and $\;\Psf_{\?I}\<$ are extended
as functions for $\,p\<\in\<\Ub\,$ by analytic continuation.
The solutions $\,\:\Pso_{\?X}\,$ and $\,\:\Psf_{\?X}\>$ for $\,X\<\in\KT\,$
are defined as functions of $\,p\<\in\<\Ub\,$ by formulae \eqref{PsofX}\:.

\begin{prop}
\label{monoI}
For any $\,I\?\in\Ikn\,$, we have
\vvn.4>
\be
\muo\Pso_{\?I}=\:\Pso_{\?I}\,\:\prod_{i\in I}\>e^{\:2\pii\,z_i}
\quad\text{and}\quad\,\muf\Psf_{\?I}\<=\:
\Psf_{\?I}\>e^{\:2\pii\,kh}\>\prod_{i\in I}\>e^{\:2\pii\,z_i}\:.\kern-2em
\ee
\end{prop}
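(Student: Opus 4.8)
The plan is to compute directly how the series defining $\Fo_{\?I}$ and $\Ff_{\?I}$ behave under the monodromy maps $\muo$ and $\muf$, since everything in $\Pso_{\?I}$ and $\Psf_{\?I}$ is built from elementary powers of $p$ multiplied by these series. First I would recall that, by definition \eqref{Psof}, $\Pso_{\?I}(\zz;h;p)=p^{\sum_{i\in I}z_i}\Fo_{\?I}(\zz;h;p)\,Q(\gmb;\gmbb;h)$, and that by \eqref{Fo} the function $\Fo_{\?I}$ is a power series in $p$ with $\Z_{\ge0}$ exponents: $\Fo_{\?I}(\zz;h;p)=\sum_{\lb\in\Z_{\ge0}^k}c_{\lb}(\zz;h)\,p^{\sum l_i}$. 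Such a series is single-valued and holomorphic on $\{0<|p|<1\}$, hence its analytic continuation along the loop $\cho_p$ (which projects to a small loop around $0$ avoiding $[1,\infty)$, so it stays inside the punctured unit disk for suitable choice of loop, and in any case $\Fo_{\?I}$ extends single-valuedly to the whole punctured disk) returns to the same value. Thus $\muo$ acts trivially on the factor $\Fo_{\?I}$, and of course trivially on the $\zz,h$-dependent factor $Q(\gmb;\gmbb;h)$. The only nontrivial contribution comes from the factor $p^{\sum_{i\in I}z_i}$: encircling $0$ counterclockwise once sends $p^{\al}\mapsto e^{2\pii\,\al}p^{\al}$, so $p^{\sum_{i\in I}z_i}\mapsto e^{2\pii\sum_{i\in I}z_i}\,p^{\sum_{i\in I}z_i}=\bigl(\prod_{i\in I}e^{2\pii\,z_i}\bigr)p^{\sum_{i\in I}z_i}$. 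This gives $\muo\Pso_{\?I}=\Pso_{\?I}\prod_{i\in I}e^{2\pii\,z_i}$.

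The argument for $\muf\Psf_{\?I}$ is parallel. By \eqref{Psof}, $\Psf_{\?I}(\zz;h;p)=p^{kh+\sum_{i\in I}z_i}\Ff_{\?I}(\zz;h;p)\,Q(\gmb;\gmbb;h)$, and by \eqref{Ff} the function $\Ff_{\?I}$ is a power series in $p^{-1}$ with $\Z_{\ge0}$ exponents, hence single-valued and holomorphic on $\{|p|>1\}$ (here one uses the chart at $\infty$, i.e. $\Ff_{\?I}$ is a single-valued holomorphic function of $1/p$ near $1/p=0$). The loop $\chf_p$ encircles the segment $[0,1]$ counterclockwise; viewed at $\infty$, this is (homotopic in $\Czon$ to) a small loop around $\infty$, traversed so that in the local coordinate $w=1/p$ it winds once around $w=0$. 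Along such a loop $\Ff_{\?I}$, being single-valued in $w$, returns to its value, and $Q(\gmb;\gmbb;h)$ is unchanged. The prefactor $p^{kh+\sum_{i\in I}z_i}$ picks up the monodromy $e^{2\pii(kh+\sum_{i\in I}z_i)}=e^{2\pii\,kh}\prod_{i\in I}e^{2\pii\,z_i}$. This yields $\muf\Psf_{\?I}=\Psf_{\?I}\,e^{2\pii\,kh}\prod_{i\in I}e^{2\pii\,z_i}$, as claimed.

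The step requiring the most care is the topological claim that continuation of $\Fo_{\?I}$ (resp. $\Ff_{\?I}$) along $\cho_p$ (resp. $\chf_p$) is trivial. For $\cho_p$, the loop is only specified to avoid $[1,\infty)$ and to encircle $0$ once; it need not literally stay inside $|p|<1$. But $\Fo_{\?I}$ has, a priori, only been shown holomorphic for $|p|<1$. The resolution is that by Corollary \ref{corPsI} the solution $\Pso_{\?I}$ continues to the universal cover of $\Czon$, so $\Fo_{\?I}=p^{-\sum_{i\in I}z_i}Q^{-1}\Pso_{\?I}$ continues to a (possibly multivalued) function there, and along any loop contractible in $\C\setminus\{0\}$ — in particular any loop encircling $0$ but not $1$, which can be homotoped within $\Czon$ to a small loop in the punctured unit disk where $\Fo_{\?I}$ is manifestly single-valued — the continuation is trivial. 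I would phrase this as: the loop $\cho_p$ is homotopic rel basepoint, within $\Czon$, to a small counterclockwise loop around $0$ contained in $\BBl$, and on $\BBl$ the function $\Fo_{\?I}$ is given by the convergent single-valued series \eqref{Fo}, hence unchanged; likewise $\chf_p$ is homotopic within $\Czon$ to a small counterclockwise loop around $\infty$ contained in $\BBg$, on which $\Ff_{\?I}$ is the single-valued series \eqref{Ff}. With that homotopy invariance in hand (guaranteed because $\Fo_{\?I}$, $\Ff_{\?I}$ continue holomorphically over all of $\Ub$ by Corollary \ref{corPsI}), the computation reduces to tracking the elementary power-of-$p$ prefactors, and both formulae follow. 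Finally, one should check the orientation/branch conventions fixed just before the proposition (the section $\BB$ over $\CRp$, $-2\pi<\arg p<0$) are consistent with reading off $e^{2\pii\,\al}$ as the factor for one counterclockwise turn, which they are.
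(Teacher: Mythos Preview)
Your proof is correct and follows the same approach as the paper: the paper's proof simply says that formulae \eqref{Psof} and Proposition \ref{p<>1} yield the equalities on $\BBl$ and $\BBg$, and then invokes analytic continuation to $\Ub$. Your version is more explicit about why the series factors $\Fo_{\?I}$ and $\Ff_{\?I}$ are single-valued on the relevant domains and about the homotopy of the loops into those domains, but the underlying argument is identical.
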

\begin{proof}
Formulae \eqref{Psof} and Proposition \ref{p<>1} yield the first equality
on $\,\:\BBl\:$ and the second equality on $\,\:\BBg\>$. The statement holds
on the universal cover $\,\:\Ub\,$ by analytic continuation.
\end{proof}

\begin{thm}
\label{monoP}
For any $\,X\?\in\<\KT\,$, we have
\vvn.3>
\be
\muo\Pso_{\?X}=\:\Pso_{\Gm_{\!1}\<\ldots\:\Gm_{\!k}X}\quad\text{and}\quad\,
\muf\Psf_{\?X}=\:\Psf_{\?H^k\Gm_{\!1}\<\ldots\:\Gm_{\!k}X}\,.
\vv.2>
\ee
\end{thm}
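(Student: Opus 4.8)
The plan is to derive Theorem \ref{monoP} from Proposition \ref{monoI} by linearity, reducing everything to an identity in the equivariant $K$-theory algebra $\KT$. Recall that by definition \eqref{PsofX},
\[
\Pso_{\?X}(\zz\:;h\:;p)\,=\!\sum_{I\in\>\Ikn\!}\Xdd|_I(\zz\:;h)\>\Pso_{\?I}(\zz\:;h\:;p)\,,
\]
so applying $\muo$ and using the first equality of Proposition \ref{monoI} gives
\[
\muo\Pso_{\?X}\,=\!\sum_{I\in\>\Ikn\!}\Xdd|_I(\zz\:;h)\>\Bigl(\prod_{i\in I}e^{\:2\pii\,z_i}\Bigr)\Pso_{\?I}(\zz\:;h\:;p)\,.
\]
On the other hand, the solution attached to $\Gm_{\!1}\ldots\Gm_{\!k}X$ is $\sum_I (\Gm_{\!1}\ldots\Gm_{\!k}X)\acute{}\,|_I(\zz\:;h)\,\Pso_{\?I}$. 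So the whole theorem amounts to the restriction identity: for every $I\in\Ikn$,
\[
(\Gm_{\!1}\<\ldots\:\Gm_{\!k}X)|_I^{}\,=\,\Bigl(\prod_{i\in I}Z_i\Bigr)\,X|_I^{}\,,
\]
after the substitution $Z_a=e^{\:2\pii\,z_a}$, $H=e^{\:2\pii\,h}$ that defines $\acute{}$. This identity is immediate from the description of the localization map in Section \ref{sec:equivK}: restricting to $F_I$ sends $\{\Gm_{\?1}\lc\Gm_{\<k}\}\mapsto\{Z_i\mid i\in I\}$, hence $\Gm_{\!1}\ldots\Gm_{\!k}$ restricts to $\prod_{i\in I}Z_i$, and restriction is multiplicative. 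The case of $\muf$ is identical: the second equality of Proposition \ref{monoI} produces the extra factor $e^{\:2\pii\,kh}\prod_{i\in I}e^{\:2\pii\,z_i}$, which matches the restriction of $H^k\Gm_{\!1}\ldots\Gm_{\!k}$ to $F_I$, namely $H^k\prod_{i\in I}Z_i$.

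Concretely the steps are: (1) write out $\muo\Pso_{\?X}$ via \eqref{PsofX} and Proposition \ref{monoI}; (2) observe that $e^{\:2\pii\,z_i}=\Zdd_i$ under the defining substitution, so $\prod_{i\in I}e^{\:2\pii\,z_i}=(\Gm_{\!1}\ldots\Gm_{\!k})|_I^{}$ evaluated at the exponentiated parameters; (3) use multiplicativity of the localization map $\Loc$ (an algebra embedding, as recalled after its definition) to rewrite $(\Gm_{\!1}\ldots\Gm_{\!k})|_I^{}\,X|_I^{}=(\Gm_{\!1}\ldots\Gm_{\!k}X)|_I^{}$; (4) recognize the resulting sum $\sum_I (\Gm_{\!1}\ldots\Gm_{\!k}X)\acute{}\,|_I\,\Pso_{\?I}$ as $\Pso_{\Gm_{\!1}\ldots\Gm_{\!k}X}$ by \eqref{PsofX}; (5) repeat verbatim for $\muf$, $\Psf$, with the extra $H^k$ absorbed in exactly the same way.

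There is essentially no obstacle here — the theorem is a bookkeeping translation of Proposition \ref{monoI} into $K$-theoretic language, the only substantive input being the already-established multiplicativity and injectivity of the localization map together with the explicit form of restriction to fixed points. The one point deserving a sentence of care is that multiplying each $\Pso_{\?I}$ by the scalar $\prod_{i\in I}e^{\:2\pii\,z_i}$ (a factor depending on $I$) is compatible with the sum over $I$ precisely because that scalar equals the fixed-point restriction of a single global class $\Gm_{\!1}\ldots\Gm_{\!k}$; this is what makes the reassembly into $\Pso_{(\cdots)X}$ legitimate. Everything is then an identity of holomorphic functions first verified on $\BBl$ (resp.\ $\BBg$) and propagated to all of $\Ub$ by analytic continuation, exactly as in the proof of Proposition \ref{monoI}.
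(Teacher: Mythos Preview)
Your proof is correct and follows exactly the same approach as the paper, which simply states that the theorem follows from formulae \eqref{PsofX} and Proposition \ref{monoI}. You have spelled out in full the bookkeeping that the paper leaves implicit, namely that the monodromy factor $\prod_{i\in I}e^{2\pii\,z_i}$ (resp.\ $e^{2\pii\,kh}\prod_{i\in I}e^{2\pii\,z_i}$) is precisely the restriction of $\Gm_{\!1}\ldots\Gm_{\!k}$ (resp.\ $H^k\Gm_{\!1}\ldots\Gm_{\!k}$) to the fixed point $F_I$, after exponentiation.
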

\begin{proof}
The statement follows from formulae \eqref{PsofX} and Proposition \ref{monoI}.
\end{proof}

\vsk.5>
Recall the transition map $\,\:\tau:\KT\to\:\KT\,$, see \eqref{tau}\:.

\begin{thm}
\label{trans}
We have $\;\Psf_{\?\tau X}(\zz\:;h\:;p)=\>\Pso_{\?X}(\zz\:;h\:;p)\,$
for any $\,X\?\in\<\KT\,$.
\end{thm}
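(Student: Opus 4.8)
The plan is to deduce Theorem \ref{trans} from an integral representation --- the forward-referenced Theorem \ref{thmPsin} --- that manufactures both families of solutions out of one and the same multidimensional hypergeometric integral; I would establish Theorem \ref{thmPsin} first, and then Theorem \ref{trans} follows in one line. Recall that for $X\in\KT$ the trigonometric weight function $\Weo_{\!X}\in\Pc$ of \eqref{WofX} satisfies $\pio\Weo_{\!X}=X$ (Proposition \ref{piWX}), hence $\Weo_{\!X}=(\pio)^{-1}X$ by injectivity (Proposition \ref{lempi}) and $\pif\Weo_{\!X}=\tau X$ by \eqref{tau}.

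First I would write down the integral. Specializing the $q$-hypergeometric integral solutions of \cite{TV3,TV4} to the case $N=2$, $\bla=(k,n-k)$, $q_1=p^{-1}$, $q_2=1$ recorded at the start of Section \ref{sec Int Rep}, one obtains for each $X\in\KT$ an integral
\[
\Psin_{X}(\zz;h;p)\,=\,c_k\int_\gamma\Phi(\TT;\zz;h;p)\,\Wt(\TT;\gmb;\gmbb;h)\,G_X(\TT;\zz;h)\,dt_1\ldots dt_k\,,
\]
where $\Phi$ is the master function and $\Wt=WQ$ the cohomological weight function of Section \ref{prelim}, $c_k$ is a normalization constant, $G_X(\TT;\zz;h)$ is obtained from $\Weo_{\!X}(\Tb;\ZZ;H)/E(\Tb;H)$ by the substitution $T_i\mapsto e^{2\pii t_i}$, $Z_a\mapsto e^{2\pii z_a}$, $H\mapsto e^{2\pii h}$, and $\gamma$ is the middle-dimensional cycle of \cite{TV3,TV4} which separates the two families of poles of $\Phr$ in the variables $\TT$: the progressions $t_i\in z_a+\Z_{\ge0}$ coming from the factors $\Gm(z_a-t_i)$, and the progressions $t_i\in z_a+h-\Z_{\ge0}$ coming from the factors $\Gm(t_i-z_a-h)$. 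By \cite{TV3,TV4} together with Lemma \ref{lem sing}, $\Psin_X$ is a solution of the modified quantum differential equation \eqref{mqde}, holomorphic on the universal cover $\Ub$ of $\Czon$ and holomorphic in $\zz,h$ on an appropriate open set.

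Next I would evaluate $\Psin_X$ by moving $\gamma$ in the two directions in which the factor $p^{\sum_j t_j}$ of $\Phi$ decays. For $|\:p\:|<1$, pushing $\gamma$ so that $\Re\:t_i\to+\infty$ collects the residues at $\TT=\zz_I+\lb$, $I\in\Ikn$, $\lb\in\Z_{\ge0}^k$, the integral over the pushed-off contour tending to zero. Since $G_X$ is a Laurent polynomial in the $e^{2\pii t_i}$, it is invariant under the shift $\TT\mapsto\TT+\lb$ and factors out of the $\lb$-sum as $G_X(\zz_I;\zz;h)$; comparison with \eqref{Fo}, \eqref{Psof}, \eqref{PsofX} then gives, after fixing $c_k$,
\[
\Psin_X(\zz;h;p)\,=\,\sum_{I\in\Ikn}G_X(\zz_I;\zz;h)\,\Pso_{I}(\zz;h;p)\,=\,\Pso_X(\zz;h;p)\qquad\text{for }p\in\BBl\,,
\]
because $G_X(\zz_I;\zz;h)=\Xdd|_I(\zz;h)$ --- the latter being the identity $\Weo_{\!X}(\ZZ_I;\ZZ;H)=E(\ZZ_I;H)\,X|_I(\ZZ;H)$ that follows from \eqref{WofX} and the orthogonality \eqref{WeIJ}. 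Symmetrically, for $|\:p\:|>1$ one pushes $\gamma$ so that $\Re\:t_i\to-\infty$, collecting the residues at $\TT=\zz_I+\hb-\lb$; here $G_X$ factors out as $G_X(\zz_I+\hb;\zz;h)=\Weo_{\!X}(H\ZZ_I;\ZZ;H)/E(H\ZZ_I;H)$ in exponentiated variables, which equals $(\tau X)|_I$ by \eqref{tauXI} (using $E(H\ZZ_I;H)=E(\ZZ_I;H)$), so that \eqref{Ff}, \eqref{Psof}, \eqref{PsofX} give $\Psin_X=\Psf_{\tau X}$ for $p\in\BBg$. This proves Theorem \ref{thmPsin}.

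Since $\Pso_X$ and $\Psf_{\tau X}$ are then the analytic continuations to $\Ub$ of the single function $\Psin_X$ (Corollary \ref{corPsI}, Theorem \ref{regz=z}), they agree on all of $\Ub$, which is Theorem \ref{trans}. I expect the main obstacle to be the contour analysis in the middle step: showing that $\gamma$ genuinely separates the two pole families, that the integral over the pushed-off contour vanishes in the limit, and that the resulting residue series converge on $\BBl$ respectively $\BBg$, together with the precise bookkeeping that matches the residue of $\Phr\,\Wt$ at $\TT=\zz_I+\lb$ (respectively $\TT=\zz_I+\hb-\lb$) with the series defining $\Pso_I$ in \eqref{Fo}, \eqref{Psof} (respectively $\Psf_I$ in \eqref{Ff}, \eqref{Psof}) --- this is what pins down $c_k$, and one must check that the same constant $c_k$ serves both limits, the relative sign being exactly the orientation change already built into the normalizations $(-\Gm(-h))^{-k}$ in \eqref{Fo} versus $(\Gm(-h))^{-k}$ in \eqref{Ff}. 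Everything else is formal once Theorem \ref{thmPsin} is in hand.
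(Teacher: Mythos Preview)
Your proposal is correct and follows the same strategy as the paper: construct a single hypergeometric integral that, when its contour is pushed to $\Re t_i\to+\infty$ (for $|p|<1$) or $\Re t_i\to-\infty$ (for $|p|>1$), reproduces the series $\Pso$ respectively $\Psf$, and then read off Theorem~\ref{trans} as the statement that these are two expansions of the same analytic function. Your final deduction and your identification of the residue bookkeeping as the main obstacle are both right.

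The organizational difference is that you put the $K$\nobreakdash-theory class $X$ into the integrand via $G_X=\Weo_{\!X}/E$, whereas the paper fixes the integrand once and for all using the \emph{non\nobreakdash-symmetric} function $\Ue$ built from $U$ (not from the symmetrized $\We$), see \eqref{Ue}, \eqref{Ie}; the $K$\nobreakdash-theory parameter enters instead through a permutation $\sigma\in S_n$ of the equivariant variables, giving the classes $\Yo_\sigma$, $\Yf_\sigma$ of \eqref{Yofpi}, and general $X$ is reached by linearity at the very end (Section~\ref{secpf}). The paper's choice buys a concrete contour $\Re t_i=\Re(z_i+h/2)$ on which the integrand simplifies to the product \eqref{Igd}, making the pole locations \eqref{poles}, the convergence estimate, and the iterated residue computation (Propositions~\ref{lemIeResF}, \ref{lemIeResFh}, with the cancellations of Lemmas~\ref{lemF2}, \ref{lemFh2} and the regularity Lemmas~\ref{lemF3}, \ref{lemFh3}) completely explicit; in particular the $|p|<1$ expansion collapses to the single term $I=\{1,\dots,k\}$. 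Your variant would, once the analysis is carried out, unwind into a sum of such non\nobreakdash-symmetric pieces and lead back to the same lemmas. One small correction: $G_X$ is not a Laurent polynomial in the $e^{2\pii t_i}$ (the denominator $E$ has zeros at $t_i-t_j+h\in\Z$); what you actually use, and what holds, is that $G_X$ is $1$\nobreakdash-periodic in each $t_i$ and regular at the residue points $\TT=\zz_I+\lb$ and $\TT=\zz_I+\hb-\lb$ thanks to the vanishing \eqref{PHZ}.
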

\noindent
Theorem \ref{trans} is proved in Section \ref{secpf}.

\vsk.3>
Consider the classes $\,\:V_{\<I}(\GG)\,$, $\>I\?\in\Ikn\,$. Define
\vvn-.1>
the matrices $\,\:\Me(\<\ZZ)=\bigl(\:\Me_{\IJ}(\<\ZZ)\bigr)_{\<\IJ\in\>\Ikn}\?$
and $\,\:\Te(\<\ZZ\:;H)=\bigl(\:\Te_{\!\IJ}(\<\ZZ\:;H)\bigr)_{\<\IJ\in\>\Ikn}$
as follows
\vvn.3>
\beq
\label{MeTe}
\Gm_{\?1}\?\ldots\Gm_{\<k}\,V_{\<I}(\GG)\,=\<
\sum_{J\in\>\Ikn\!}\?\Me_{\IJ}(\<\ZZ)\>V_{\!J}(\GG)\,,\qquad
\tau\bigl(V_{\<I}(\GG)\bigr)\,=\<
\sum_{J\in\>\Ikn\!}\?\Te_{\!\IJ}(\<\ZZ\:;H)\>V_{\!J}(\GG)\,.
\vv.1>
\eeq

\begin{prop}
\label{lemMeTe}
The matrix $\,\:\Me(\<\ZZ)$ is a Laurent polynomial in $\,\:\ZZ$ with
\vvn.03>
integer coefficients. The matrices $\,\:\Te(\<\ZZ\:;H)$ and
$\dsize\,\:\Te^{\:-1}\<(\<\ZZ\:;H)$ are Laurent polynomials in $\,\:\ZZ,H$
with integer coefficients.
\end{prop}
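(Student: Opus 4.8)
The two matrices are handled by different arguments.

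\emph{The matrix $\,\Me(\<\ZZ)$.} The element $\,\Gm_{\?1}\ldots\Gm_{\<k}$ is the top elementary symmetric function $\,e_k(\GG)$ of the Chern roots of $\,E$. In presentation \eqref{Krel} the defining relation has integer coefficients and does not involve $\,H$, so the quotient of $\,\Z[\:\GG^{\pm1}\:]^{S_k}\!\ox\Z[\:\GGb^{\pm1}\:]^{S_{n-k}}\!\ox\Z[\:\ZZ^{\pm1}\:]$ by that relation is a free $\,\Z[\:\ZZ^{\pm1}\:]$-module with basis $\,\{V_{\<I}(\GG)\}_{I\in\Ikn}$ --- the integral form of Proposition \ref{VK}, obtained by eliminating the $\,\GGb$-variables through the integer relations and using the Schur-polynomial identities of the Appendix over $\,\Z$. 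Multiplying $\,V_{\<I}(\GG)$ by $\,e_k(\GG)$ and reducing modulo \eqref{Krel} expresses the product as an integer Laurent-polynomial combination in $\,\ZZ$ of the $\,V_{\<J}(\GG)$; hence $\,\Me(\<\ZZ)$ has entries in $\,\Z[\:\ZZ^{\pm1}\:]$, and it involves no $\,H$ since neither $\,e_k(\GG)$ nor \eqref{Krel} does.

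\emph{The matrix $\,\Te(\<\ZZ\:;H)$.} Let $\,L=\{\,Y\in\KT\ :\ Y|_I^{}\in\Z[\:\ZZ^{\pm1};H^{\pm1}\:]\ \text{for all}\ I\in\Ikn\,\}$. The determinant of the restriction matrix $\,\bigl(V_{\<J}(\GG)|_I^{}\bigr)_{\<\IJ\in\Ikn}$ is, up to a monomial, a product of Vandermonde-type determinants in $\,\ZZ$ and therefore has content one; by Gauss's lemma $\,L=\tbigoplus_{I\in\Ikn}\Z[\:\ZZ^{\pm1};H^{\pm1}\:]\,V_{\<I}(\GG)$. Thus it suffices to prove $\,\tau(L)=L$: this gives $\,\tau(V_{\<I}(\GG))\in L$, so $\,\Te(\<\ZZ\:;H)$ has entries in $\,\Z[\:\ZZ^{\pm1};H^{\pm1}\:]$, and since $\,\tau$ is a bijection of $\,\KT$ (Corollary \ref{lemmu}) it also gives $\,\tau^{-1}(L)=L$, whence the same for $\,\Te^{-1}\<(\<\ZZ\:;H)$. (Consistently, $\,\det\Te=\det\tau$ is a unit in $\,\Z[\:\ZZ^{\pm1};H^{\pm1}\:]$ by Proposition \ref{dettau}.)

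Since $\,\tau=\pif(\pio)^{-1}$, the equality $\,\tau(L)=L$ will follow once $\,\pio$ and $\,\pif$ are each shown to restrict to a bijection between $\,\Pc_{\Z}$ and $\,L$, where $\,\Pc_{\Z}\subset\Pc$ denotes the subgroup of Laurent polynomials with integer coefficients; indeed then $\,\tau(L)=\pif\bigl((\pio)^{-1}L\bigr)=\pif(\Pc_{\Z})=L$. The inclusions $\,\pio(\Pc_{\Z})\subseteq L$ and $\,\pif(\Pc_{\Z})\subseteq L$: by Proposition \ref{lemPEK} the fixed-point restrictions of $\,\pio(P)$ and $\,\pif(P)$ equal $\,P(\ZZ_{\<I}\:;\ZZ;H)/E(\ZZ_{\<I}\:;H)$ and $\,P(H\?\ZZ_{\<I}\:;\ZZ;H)/E(\ZZ_{\<I}\:;H)$; for $\,P\in\Pc_{\Z}$ the numerators are integer Laurent polynomials which, by \eqref{PHZ} and the symmetry of $\,P$ in $\,\Tb$, vanish on every hypersurface $\,Z_{i'}\?=H\?Z_i$ with $\,i\ne i'\in I$, and division by the pairwise-coprime linear factors $\,1-H\?Z_i/Z_{i'}$ of $\,E(\ZZ_{\<I}\:;H)$ preserves integrality. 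The reverse inclusions $\,L\subseteq\pio(\Pc_{\Z})$ and $\,L\subseteq\pif(\Pc_{\Z})$: by Proposition \ref{piWX} and the injectivity of Proposition \ref{lempi} one has $\,(\pio)^{-1}V_{\<I}(\GG)=\Weo_{V_{\<I}(\GG)}$ and $\,(\pif)^{-1}V_{\<I}(\GG)=\Wef_{V_{\<I}(\GG)}$, so it is enough that these weight functions lie in $\,\Pc_{\Z}$.

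\emph{The main obstacle} is precisely this last point: that $\,\Weo_{V_{\<I}(\GG)}$ and $\,\Wef_{V_{\<I}(\GG)}$ have integer coefficients. By \eqref{U}--\eqref{Wef} the functions $\,\Weo$ and $\,\Wef$ themselves have integer coefficients (the symmetrization in \eqref{We} turns the only non-polynomial factor, $\,\prod_{i<j}(1-HT_j/T_i)/(1-T_j/T_i)$, into an integer polynomial), and the restrictions $\,V_{\<I}(\GG)|_J^{}$ are integer Laurent polynomials in $\,\ZZ$; but in \eqref{WofX} they are combined with the rational coefficients $\,R(\ZZ_{\si_J}\<)^{-1}$, and one must verify that after the cancellation of the $\,\ZZ$-denominators --- the ``standard reasoning'' already invoked for polynomiality --- the coefficients remain \emph{integers}. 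I expect this to be the bulk of the work. The natural route is to recover the $\,\Tb$-coefficients of $\,\Weo_{V_{\<I}(\GG)}$ as iterated residues at $\,t_i\in\{z_1,\dots,z_n\}$, each residue of an integer Laurent expression being integral, while tracking the divisibilities by $\,E$ and $\,R$ exactly as in the proof of Proposition \ref{lemdet}; alternatively, the integrality can be extracted from the explicit Schur-type expansion of the $\,V_{\<I}$ in the Appendix. With that in hand, the remaining steps are formal manipulations with $\,\pio,\pif,\tau$.
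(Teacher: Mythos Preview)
Your overall strategy is sound and essentially parallel to the paper's, but you have misjudged where the difficulty lies. The step you label ``the main obstacle'' --- integrality of $\Weo_{V_{\<I}(\GG)}$ and $\Wef_{V_{\<I}(\GG)}$ --- is in fact resolved by the very Gauss's-lemma argument you already deployed for the lattice $L$. By \eqref{WofX} the function $\Weo_{V_{\<I}(\GG)}$ is a sum of terms $\Weo(\Tb;\ZZ_{\si_J};H)\,V_{\<I}(\ZZ_J)/R(\ZZ_{\si_J})$; the numerator is an integer Laurent polynomial (from \eqref{U}--\eqref{Weo} and the Schur formula), and the common denominator is a product of the primitive factors $1-Z_b/Z_a$. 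The ``standard reasoning'' you cite shows the sum is a Laurent polynomial over $\C$; your own content argument then forces it to be a Laurent polynomial over $\Z$. No residue computation or Schur expansion is needed. With that gap closed your proof is complete.

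The paper's route is shorter and bypasses the lattice $L$ and the space $\Pc_\Z$ entirely. It combines \eqref{tauXI} with Corollary~\ref{lemfV} to write $\Te_{\IJ}$ (and $\Te^{-1}_{\IJ}$, using the $\Wef$ line of \eqref{tauXI}) directly as a double sum whose numerator is an integer Laurent polynomial by \eqref{U}--\eqref{Weo} and whose denominator is a product of primitive linear factors $1-Z_b/Z_a$ and $1-HZ_i/Z_j$. Since $\tau$ is a $\C[\ZZ^{\pm1};H^{\pm1}]$-module automorphism (Corollary~\ref{lemmu}), each entry is already a Laurent polynomial, and the same primitive-denominator argument gives integrality. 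Your abstract packaging via $L$ and $\Pc_\Z$ is correct but adds a layer that the direct formula avoids.
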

\begin{proof}
The first statement follows from formula \eqref{Sch2} and Corollary \ref{lemfV}.
For the second statement, we use in addition formulae \eqref{U}\,--\,\eqref{Weo}
and \eqref{tauXI}\:.
\end{proof}

\vsk.2>
\noindent
Set $\,\:\Med(\zz)=\Me\bigl(e^{\:2\pii\,z_1}\?\lc e^{\:2\pii\,z_n}\:\bigr)\,$
and $\,\:\Ted(\zz\:;h)\,=\,\Te\bigl(e^{\:2\pii\,z_1}\?\lc e^{\:2\pii\,z_n}\:;
e^{\:2\pii\,h}\:\bigr)\,$.

\begin{thm}
\label{monoint}
For any $\,I\?\in\Ikn\>$, we have
\vvn.3>
\begin{gather*}
\muo\Pso_{\?V_{\?I}\<(\<\GG\<)}\>=\<
\sum_{J\in\>\Ikn\!}\?\Med_{\IJ}(\<\zz)\,\Pso_{\?V_{\?J}\<(\<\GG\<)}\,,\kern1.4em
\muf\Pso_{\?V_{\?I}\<(\<\GG\<)}\>=\,e^{\:2\pii\,kh}\!\sum_{J\in\>\Ikn\!}\?
(\Ted\>\Med\>\Ted^{\:-1}\:)_{\IJ}(\<\zz\:;h)\,\Pso_{\?V_{\?J}\<(\<\GG\<)}\,,
\kern-.2em
\\[6pt]
\muo\Psf_{\?V_{\?I}\<(\<\GG\<)}\>=\<
\sum_{J\in\>\Ikn\!}\?(\Ted^{\:-1}\<\Med\>\Ted\>)_{\IJ}(\<\zz\:;h)\,
\Psf_{\?V_{\?J}\<(\<\GG\<)}\,,\kern1.4em
\muf\Psf_{\?V_{\?I}\<(\<\GG\<)}\>=\,e^{\:2\pii\,kh}\!
\sum_{J\in\>\Ikn\!}\?\Med_{\IJ}(\<\zz)\,\Psf_{\?V_{\?J}\<(\<\GG\<)}\,.\kern-.2em
\\[-16pt]
\end{gather*}
In particular, the entries of the monodromy matrices are Laurent polynomials in
\vv.07>
the variables $\,\:e^{\:2\pii\,z_1}\?\lc e^{\:2\pii\,z_n}\:;e^{\:2\pii\,h}\>$
with integer coefficients.
\end{thm}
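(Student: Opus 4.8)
The plan is to assemble the four monodromy formulae from Theorem \ref{monoP}, which already computes $\,\muo\,$ on the solutions $\,\Pso_{\?X}\,$ and $\,\muf\,$ on the solutions $\,\Psf_{\?X}\,$, and from Theorem \ref{trans}, which identifies $\,\Pso_{\?X}=\Psf_{\?\tau X}\,$ and therefore lets one pass freely between the two families of solutions, and then to read off the result in the bases $\,\{\Pso_{\?V_{\?I}(\GG)}\}\,$ and $\,\{\Psf_{\?V_{\?I}(\GG)}\}\,$ using the matrices $\,\Me\,$ and $\,\Te\,$ of \eqref{MeTe}\:. The one preliminary observation I would record is that each of the assignments $\,X\mapsto\Pso_{\?X}\,$ and $\,X\mapsto\Psf_{\?X}\,$ defined by \eqref{PsofX} is $\,\CZH$-linear up to the substitution $\,Z_a\mapsto e^{\:2\pii\,z_a}$, $\,H\mapsto e^{\:2\pii\,h}\,$, which is immediate from $\,(fX)|_I=f(\ZZ;H)\,X|_I\,$ and the definition of $\,\Xdd|_I\,$; in particular this substitution is exactly what turns $\,\Me\,$ into $\,\Med\,$ and $\,\Te\,$ into $\,\Ted\,$. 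I would also note that $\,\muo\,$ and $\,\muf\,$ act only in the variable $\,p\,$ and hence commute with multiplication by any function of $\,\zz,h\,$.

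With these in hand, the first formula comes from applying $\,\muo\,$ as in Theorem \ref{monoP} to $\,X=V_{\<I}(\GG)\,$ and rewriting $\,\Gm_{\?1}\ldots\Gm_{\<k}\,V_{\<I}(\GG)\,$ by the first equation of \eqref{MeTe}\:, while the fourth formula is the identical computation applied to $\,\muf\Psf_{\?V_{\?I}(\GG)}\,$, where the extra factor $\,H^k\,$ in Theorem \ref{monoP} produces the scalar $\,e^{\:2\pii\,kh}\,$. For the second and third formulae I would first use Theorem \ref{trans} together with the second equation of \eqref{MeTe} to record the change of basis $\,\Pso_{\?V_{\?I}(\GG)}=\sum_J\Ted_{\IJ}(\zz;h)\,\Psf_{\?V_{\?J}(\GG)}\,$ and its inverse, and then compute $\,\muf\Pso_{\?V_{\?I}(\GG)}\,$ by transporting to the $\,\Psf$-basis, applying the already-established action of $\,\muf\,$ there, and transporting back; this conjugates $\,\Med\,$ by $\,\Ted\,$ and yields $\,e^{\:2\pii\,kh}\,\Ted\,\Med\,\Ted^{\:-1}\,$. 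Symmetrically, computing $\,\muo\Psf_{\?V_{\?I}(\GG)}\,$ by transporting to the $\,\Pso$-basis, applying $\,\muo\,$, and transporting back gives $\,\Ted^{\:-1}\Med\,\Ted\,$. The final assertion then follows from Proposition \ref{lemMeTe}: the matrices $\,\Me\,$, $\,\Te\,$ and $\,\Te^{-1}\,$ have entries that are Laurent polynomials with integer coefficients, this class is closed under matrix multiplication, and $\,e^{\:2\pii\,kh}\,$ is a Laurent monomial in $\,e^{\:2\pii\,h}\,$; by Theorem \ref{KSiso} the same conclusion then extends to every element of the monodromy group.

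I do not expect a genuine obstacle here: the statement is essentially a formal consequence of results already proved. The only care needed is bookkeeping --- keeping the exponential substitution $\,\ZZ\mapsto e^{\:2\pii\,\zz}$, $\,H\mapsto e^{\:2\pii\,h}\,$ (equivalently, the distinction between $\,\Me,\Te\,$ and $\,\Med,\Ted\,$) consistent throughout, and tracking the order of the two changes of basis relative to the monodromy so that the conjugations emerge exactly as $\,\Ted\,\Med\,\Ted^{\:-1}\,$ and $\,\Ted^{\:-1}\Med\,\Ted\,$ rather than a transpose or inverse of these.
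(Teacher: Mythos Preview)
Your proposal is correct and follows exactly the paper's approach: the paper's proof is the single sentence ``The statement follows from Theorems \ref{monoP}, \ref{trans}, formulae \eqref{MeTe}, and Proposition \ref{lemMeTe},'' and you have accurately unpacked precisely this derivation, including the conjugation bookkeeping and the integrality conclusion.
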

\begin{proof}
The statement follows from Theorems \ref{monoP}, \ref{trans},
formulae \eqref{MeTe}\:, and Proposition \ref{lemMeTe}.
\end{proof}

\section{Integral formula for solutions of quantum differential equation}
\label{secint}
In this section we present integral formulae for solutions of
differential equation \eqref{mqde}\:. We use the integral formulae
to prove Theorem \ref{trans} in Section \ref{secpf}.

\subsection{Integral formula for solutions}
\label{secintf}
Recall the function $\,E(\Tb;H)\,$ given by \eqref{ET}\:,
\vvn.07>
the function $\,\:U(\Tb;\ZZ;H)\,$, see \eqref{U}\:, the reduced weight
\vv.07>
function $\,W\<(\TT\:;\gmb\:;\gmbb\:;h)\,$, see \eqref{W}\:, and the reduced
master function $\,\:\Phr\<(\TT\:;\zz\:;h)\,$, see \eqref{Phr}\:.
For any function $\,F(\Tb;\ZZ;H)\,$, set
\vvn.25>
\beq
\label{Fdd}
\Fdd(\TT\:;\zz\:;h)\,=\,F\bigl(e^{\:2\pii\,t_1}\?\lc e^{\:2\pii\,t_k}\:;
e^{\:2\pii\,z_1}\?\lc e^{\:2\pii\,z_n}\:;e^{\:2\pii\,h}\:\bigr)\,.\kern-2.5em
\vv.2>
\eeq
Denote
\vvn-.4>
\beq
\label{Ue}
\Ue(\TT\:;\zz\:;h)\,=\,\Udd(\TT\:;\zz\:;h)\,
\prod_{i=1}^k\:\prod_{a=k+1}^n\?\bigl(1-e^{\:2\pii\,(z_a\<-\:t_i)}\bigr)\,.
\kern-2em
\eeq

\vsk.1>
Let $\,\Lin\!\subset\C^n\!\times\C\,$ be the
following domain,
\vvn.3>
\beq
\label{Lin}
\Lin=\,\{(\zz,h)\ |\ \Re\:h\<<0\,,\;\;\Re\:(z_i\<-z_{i+1}\<+h)>0\,,\,\;
i=1\lc n-1\:\}\,.\kern-1em
\vv.3>
\eeq
For $\,(\zz,h)\<\in\Lin\<$ and $\,\:p\in\CRp\,$, consider the integral
\vvn.4>
\beq
\label{Ie}
\Ie(\zz\:;h\:;p)\,=\,\frac1{\bigl(2\:\piit\>\bigr)^{\<k}}
\int\limits_{\Ibb(\zz;h)}\!\!\<p^{\>\sum_{j=1}^{\:k}\?t_j}\>
\frac{\Ue(\TT\:;\zz\:;h)\,\Phr\<(\TT\:;\zz\:;h)
\,W\<(\TT\:;\gmb\:;\gmbb\:;h)}{\Edd(\TT\:;h)}\;d^{\:k}\TT\,.\kern-.5em
\eeq
Here the branch of $\,\:p^{\>\sum_{j=1}^{\:k}\?t_j}$ is fixed by
\vv.07>
the inequalities $\,-\:2\:\pi\<<\<\arg\:p<0\,$, and the integration contour
$\,\:\Ibb(\zz\:;h)\,$ is such that $\,\:{\Re\:t_i\<=\Re\:(z_i\<+h/2)}\,$
and $\,\:\Im\:t_i\>$ runs from $\,-\:\infty\,$ to $\,\infty\,$ for all
$\,i=1\lc k\,$.

\begin{prop}
\label{conv}
For $\,(\zz,h)\<\in\Lin\?$ and $\,-\:2\:\pi\<<\<\arg\:p<0\,$,
\vv.06>
the integrand in formula \eqref{Ie} is regular on
$\,\:\Ibb(\zz\:;h)\:$ and the integral $\,\:\Ie(\zz\:;h\:;p)\:$ converges
to a holomorphic function of $\,\zz,h,p\,$.
\end{prop}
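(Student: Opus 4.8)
The plan is to verify the two assertions of Proposition \ref{conv} separately: first that the integrand is regular on the contour $\Ibb(\zz;h)$, and second that the resulting integral converges locally uniformly in $\zz,h,p$, from which holomorphy is immediate by Morera's theorem together with Fubini. For the regularity statement I would trace through the explicit factors of the integrand. The only possible singularities come from the denominator $\Edd(\TT;h)=\prod_{i\ne j}(1-e^{2\pii\,h}e^{2\pii\,(t_i-t_j)})$ coming from $E(\Tb;H)$ via \eqref{ET}, and from poles of $\Phr\<(\TT;\zz;h)$ in \eqref{Phr}, which come from the Gamma-functions $\Gm(z_a-t_i)$ and $\Gm(t_i-z_a-h)$, i.e. from $t_i-z_a\in\Z_{\le0}$ and $t_i-z_a-h\in\Z_{\le0}$. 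On the contour $\Re\:t_i=\Re(z_i+h/2)$ one has $\Re(t_i-z_a)=\Re(z_i-z_a)+\Re(h)/2$ and $\Re(t_i-z_a-h)=\Re(z_i-z_a)-\Re(h)/2$; I would check that for $(\zz,h)\in\Lin$ (so $\Re h<0$ and $\Re(z_i-z_{i+1}+h)>0$ for all $i$, hence $\Re(z_i-z_a)$ has a definite sign controlled by $-\Re h$ when $a$ and $i$ differ) none of these real parts can be a nonpositive integer — essentially the $\Lin$-inequalities are rigged precisely so that the vertical contour threads between the two families of Gamma-poles. Likewise $\Re(t_i-t_j)=\Re(z_i-z_j+h/2)$ together with $\Re h<0$ keeps $1-e^{2\pii\,h}e^{2\pii\,(t_i-t_j)}$ away from zero and also keeps the $\Gm(t_j-t_i)$, $\Gm(1+t_i-t_j-h)$ denominators in \eqref{Phr} pole-free; the factors $\Ue$ and $W$ are entire (polynomial/exponential-polynomial), so they contribute nothing.

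For convergence I would estimate the modulus of the integrand as $|\Im\:t_i|\to\infty$ along the contour. Writing $t_i=x_i+\sqrt{-1}\,y_i$ with $x_i$ fixed, the dominant asymptotics come from the Gamma-functions and the factor $p^{\sum t_j}$. The product of $2n$ Gamma-functions $\prod_a\Gm(z_a-t_i)\Gm(t_i-z_a-h)$ has, by Stirling, modulus decaying like $e^{-\pi(2n-\text{something})|y_i|}$ — more precisely, along a vertical line $|\Gm(c+\sqrt{-1}y)|\sim \sqrt{2\pi}\,|y|^{\Re c-1/2}e^{-\pi|y|/2}$, so each pair $\Gm(z_a-t_i)\Gm(t_i-z_a-h)$ decays like $e^{-\pi|y_i|}$, giving $e^{-n\pi|y_i|}$ from the numerator; the reciprocals $1/(\Gm(t_j-t_i)\Gm(1+t_i-t_j-h))$ in \eqref{Phr} grow like $e^{+\pi|y_i-y_j|}\le e^{\pi|y_i|+\pi|y_j|}$. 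The exponential prefactor $e^{\,n\pii\sum t_j}$ in \eqref{Phr} contributes $e^{-n\pi\sum y_j}$ in modulus; combined with $|p^{\sum t_j}|=|p|^{\sum x_j}e^{-\arg(p)\sum y_j}$ and the constraint $-2\pi<\arg p<0$, this last factor grows at worst like $e^{C\sum|y_j|}$ with $C<2\pi$ — actually like $e^{-(\arg p+n\pi)\sum y_j}$ in one half and bounded in the other; the point is that $|\arg p + n\pi|$ stays strictly below the total Gamma-decay rate. Adding up the exponents I would show the net exponent is strictly negative, of the form $-\delta\sum_i|y_i|$ with $\delta>0$ depending continuously on $(\zz,h,\arg p)$, while the polynomial (Stirling) corrections and the $\Edd$, $\Ue$, $W$ factors grow only polynomially or are bounded. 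This gives absolute convergence, and uniform convergence on compact subsets of $\{(\zz,h)\in\Lin\}\times\{-2\pi<\arg p<0\}$; holomorphy then follows since the integrand is jointly holomorphic in $(\zz,h,p)$ and one may differentiate under the integral sign (or invoke Morera + Fubini).

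The main obstacle I anticipate is the bookkeeping of exponential rates: one must check that the \emph{total} rate of decay coming from the $2nk$ numerator Gamma-functions and the $e^{n\pii\sum t_j}$ prefactor strictly dominates the \emph{combined} growth from the $k(k-1)$ reciprocal-Gamma factors and from $|p^{\sum t_j}|$ across the full range $-2\pi<\arg p<0$, uniformly on compacta — and simultaneously that on the contour the argument of every $\Gm$ avoids $\Z_{\le0}$, which is where the specific inequalities defining $\Lin$ in \eqref{Lin} get used. In other words, the real work is showing that $\Lin$ is exactly the region where the vertical contour $\Re t_i=\Re(z_i+h/2)$ is admissible and the integral converges; once the sign of $\Re h$ and the ordering inequalities are plugged into Stirling's estimates, the rest is routine. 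The regularity of the factor $\Ue$ and the polynomial $W$, and the fact that $1/\Edd$ is bounded on the contour, are immediate from their definitions \eqref{U}–\eqref{Ue}, \eqref{W}, \eqref{ET} and require no further argument.
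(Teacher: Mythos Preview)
Your overall strategy --- check regularity on the contour, then use Stirling to get exponential decay --- is the right shape, but as written it has a genuine gap, and the missing idea is precisely what the paper's proof supplies.

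First, the regularity claim that ``$\Ue$ and $W$ are entire \dots\ so they contribute nothing'' is not correct for $\Ue$. From \eqref{U} and \eqref{Ue}, $\Ue$ contains the factors $\bigl(1-e^{2\pii(t_j-t_i)}\bigr)^{-1}$ for $j>i$, so it has genuine poles at $t_i-t_j\in\Z$. These poles do cancel against zeros of $\Phr$ (coming from $1/\Gm(t_j-t_i)$ and $1/\Gm(t_i-t_j)$), but you have not said so; conversely, many of the Gamma-poles you list from $\Phr$ (namely those of $\Gm(z_a-t_i)$ with $a>i$, and of $\Gm(t_i-z_a-h)$ with $a<i$) are cancelled by zeros of $\Ue$ and are \emph{not} poles of the full integrand. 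Without tracking these cancellations, your contour check against the raw list of Gamma-poles can fail: for instance, $\Re(t_i-z_a)+\tfrac12\Re h$ with $a>i$ can hit a nonnegative integer even when $(\zz,h)\in\Lin$, so you would wrongly conclude a pole lies on $\Ibb(\zz;h)$.

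Second, your convergence estimate is too crude to close. The bound $|y_i-y_j|\le|y_i|+|y_j|$ applied to the $k(k-1)$ reciprocal-Gamma factors produces growth of order $e^{2\pi(k-1)\sum_i|y_i|}$; combined with the prefactor $e^{n\pii\sum t_j}$ and $p^{\sum t_j}$, the net exponent in the direction $y_j\to-\infty$ is $-y_j\bigl(2\pi(k-1)+\arg p\bigr)$, which fails to be negative once $k\ge2$ and $\arg p>-2\pi$. You have also not accounted for the genuinely exponential growth of $\Ue$: each factor $\bigl(1-e^{2\pii(z_a-t_i)}\bigr)$ grows like $e^{2\pi y_i}$ as $y_i\to+\infty$, and there are $n-1$ such factors per $i$.

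The paper's remedy for both issues is to first \emph{simplify the integrand}, combining $\Ue/\Edd$ with the Gamma products in $\Phr$ via the reflection formula $\Gm(x)\Gm(1-x)=\pi/\sin(\pi x)$ and the identity $1-e^{2\pii x}=-2ie^{\pii x}\sin(\pi x)$. The result is the compact expression \eqref{Igd}, in which the only ``bare'' Gamma products are $\Gm(z_i-t_i)\Gm(t_i-z_i-h)$ (one pair per $i$), all remaining Gamma-functions appear as \emph{ratios}, and the pole set reduces to the explicit list \eqref{poles}, which the $\Lin$-inequalities are visibly designed to separate from the contour. Convergence then follows from the polynomial bound $\bigl|\Gm(\alpha+ix)/\Gm(\beta+ix)\bigr|\le C(1+|x|)^{\Re(\alpha-\beta)}$ of \eqref{ineq} for the ratio factors, together with the single-pair Stirling decay $e^{-\pi|y_i|}$ balanced against $|p^{\,t_i}e^{\pii t_i}|$, for which the window $-2\pi<\arg p<0$ is exactly what is needed. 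In short, the simplification to \eqref{Igd} is the key step you are missing; once it is in hand, both regularity and convergence are short.
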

\begin{proof}
By formulae \eqref{ET}\:, \eqref{U}\:, \eqref{Phr}\:,
the integrand in \eqref{Ie} equals
\vvn.2>
\begin{align}
\label{Igd}
C(\zz\:;h)\;p^{\>\sum_{j=1}^{\:k}\?t_j}\>e^{\:\pii\>\sum_{j=1}^{\:k}\?t_j}
\,\:W\<(\TT\:;\gmb\:;\gmbb\:;h)\,\:
\prod_{i=1}^k\,\biggl(\:\Gm(z_i\<-t_i)\,\Gm(t_i\<-z_i\<-h)\,\times{}\<&
\kern-1em
\\[2pt]
\notag
{}\times\,\prod_{j=1}^{i-1}\>
\frac{(t_i\<-t_j)\,\Gm(t_j\<-t_i\<+h)\,\Gm(z_j\<-t_i)}
{\Gm(1+t_j\<-t_i\<-h)\,\Gm(1+z_j\<-t_i\<+h)}
\,\prod_{j=i+1}^n\:\frac{\Gm(t_i\<-z_j\<-h)}{\Gm(1+t_i\<-z_j)}
\>\biggr)\>&,\kern-1em
\\[-15pt]
\notag
\end{align}
where $\,\:C(\zz\:;h)\,$ does not depend on $\,\:\TT\:,p\,$.
The poles of the integrand are at the hyperplanes
\vvn.4>
\beq
\label{poles}
t_i\<-z_j\<\in\Z_{\ge 0}\,,\;\;\, i\ge j\,,\qquad\!
t_i\<-z_j\<-h\in\Z_{\le 0}\,,\;\;\, i\le j\,,\qquad\!
t_i\<-t_j\<-h\in\Z_{\ge 0}\,,\;\;\, i>j\,.\kern-.8em
\vv.3>
\eeq
Hence, the integrand is regular on $\,\:\Ibb(\zz,h)\,$ under the assumption
$\,(\zz,h)\<\in\Lin\:$.
An estimate of the integrand
\vv.02>
showing the convergence of the integral follows from the inequality
\vvn.4>
\beq
\label{ineq}
\biggl|\>\frac{\Gm\bigl(\al+x\>\sqrt{\<-1}\>\bigr)}
{\Gm\bigl(\:\bt+x\>\sqrt{\<-1}\>\bigr)}\:\biggr|\,\le\,
C_1\:\bigl(1+|\:x\:|\:\bigr)^{\Re\:(\al-\bt)}\kern-2em
\vv.2>
\eeq
that hold for any real $\,\:x\,$, fixed $\,\al\:,\:\bt\,$ such that
$\,\:\Re\bt\not\in\<\Z_{\le 0}\,$,
and some positive constant $\,\:C_1\>$ depending on $\,\al\:,\:\bt\,$.
\end{proof}

Recall the class $\,\:Q(\gmb\:;\gmbb\:;h)\,$ given by \eqref{QG}\:.
Similarly to \eqref{Psof}\:, define
\vvn.4>
\beq
\label{Psin}
\Psin\<(\zz\:;h\:;p)\,=\,
\Ie(\zz\:;h\:;p)\,Q(\gmb\:;\gmbb\:;h)\,,\kern-1.9em
\eeq

\vsk.4>
Recall the trigonometric weight function $\,\:\Weo\<(\Tb;\ZZ;H)\,$,
\vvn.06>
see \eqref{Weo}\:, and the maps $\,\:\pio\>$ and $\,\:\pif\:$ given
by \eqref{pimap}\:. For a permutation $\,\:\si\<\in\<S_n\,$,
\vvn.4>
define the classes $\,\:\Yo_{\<\si},\:\Yf_{\<\si}\!\<\in\<\KT\,$,
\beq
\label{Yofpi}
\Yo_{\<\si}\?=\>\pio\:\bigl(\Weo\<(\Tb;\ZZ_{\?\si};H)\<\bigr)\,,\qquad
\Yf_{\<\si}\?=\>\pif\:\bigl(\Weo\<(\Tb;\ZZ_{\?\si};H)\<\bigr)\,.\kern-2em
\vv.4>
\eeq
Consider the solutions $\,\:\Pso_{\?Y^0_{\<\si}}(\zz\:;h\:;p)\,$ and
$\,\Psf_{\<Y^{\<\infty}_{\<\si}\vp1}\<(\zz\:;h\:;p)\,$ of differential
equation \eqref{mqde} given by \eqref{PsofX}\:. We regard them as functions
\vv.04>
of $\,p\in\CRp\,$ by identifying $\,\:\CRp\,$ and its image $\;\BB\>$
in the universal cover of $\,\:\Czon\,$, see Section \ref{secmono}.

\begin{thm}
\label{thmPsin}
Given $\,\si\<\in\<S_n\,$, let $\,(\zz_\si,h)\<\in\Lin\?$.
Assume that $\,\:p\in\CRp\,$. Then
\vvn.3>
\beq
\label{Psin08}
\Psin(\zz_\si;h\:;p)\,=\,\bigl(\:\Gm(\<-\:h)\<\bigr)^k
\>\Pso_{\?Y^0_{\<\si}}(\zz\:;h\:;p)\;\;\,\text{and}\,\;\;\,
\Psin(\zz_\si\:;h\:;p)\,=\,\bigl(\:\Gm(\<-\:h)\<\bigr)^k
\>\Psf_{\<Y^{\<\infty}_{\<\si\vp1}}\<(\zz\:;h\:;p)\,.\kern-.8em
\vv.1>
\eeq
\end{thm}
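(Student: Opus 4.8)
The plan is to evaluate the integral $\Ie(\zz_\si;h;p)$ by closing the contour $\Ibb(\zz_\si;h)$ and summing residues, matching the result against the residue–series definitions \eqref{Fo}, \eqref{Ff} of the solutions $\Fo_{\?I}$ and $\Ff_{\?I}$. First I would rewrite the integrand of \eqref{Ie} as $p^{\sum t_j}$ times $\Ue(\TT;\zz;h)\Phr(\TT;\zz;h)W(\TT;\gmb;\gmbb;h)/\Edd(\TT;h)$, and observe from the list of poles \eqref{poles} (together with the fact that $\Edd$ has no zeros on the contour and $\Ue$ is a trigonometric polynomial with no poles) that the poles in each variable $t_i$ lie on arithmetic-progression families $t_i-z_j\in\Z_{\ge0}$ (for $i\ge j$) on one side of the contour line $\Re t_i=\Re(z_i+h/2)$, and $t_i-z_j-h\in\Z_{\le0}$ (for $i\le j$) on the other side. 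The precise bookkeeping of which poles sit to the left and which to the right of each contour component is what makes the iterated-residue structure of \eqref{Fo} and \eqref{Ff} appear.

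For the first identity in \eqref{Psin08}, I would push each contour component to the left ($\Im t_i\to$ one side corresponding to $|p|<1$, i.e. exponential decay of $p^{t_i}$), picking up residues at $t_i=z_{j}+l_i$ with $l_i\in\Z_{\ge0}$. Because the weight function $W$ and the factor $\Ue$ force the only nonzero contributions to come from $\TT=\zz_I+\lb$ with $I\in\Ikn$ and $\lb\in\Z_{\ge0}^k$, the sum of residues reorganizes into $\sum_I(\text{coefficient})\cdot\Fo_{\?I}(\zz;h;p)\,p^{\sum_{i\in I}z_i}$. The coefficient in front of $\Fo_{\?I}$ is exactly the evaluation $\Ue/\Edd$ at $\TT=\zz_I$ up to the $(-\Gm(-h))^k$ normalization in \eqref{Fo}; by the definition \eqref{Ue} of $\Ue$ and \eqref{Weo} of $\Weo$, this evaluation equals $\Weo(\ZZ_{\?I};\ZZ_\si;H)/E(\ZZ_I;H)$ read through the exponential substitution \eqref{Fdd}. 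Multiplying by $Q$ to pass from $\Fo_{\?I}$ to $\Pso_{\?I}$ via \eqref{Psof}, and comparing with \eqref{WofX}, \eqref{PsofX}, \eqref{pimap} and the definition \eqref{Yofpi} of $\Yo_\si$, one recognizes the combination $\sum_I (\Weo(\ZZ_I;\ZZ_\si;H)/(E(\ZZ_I;H)R(\ZZ_{\si_I}))) X|_I \cdot\Pso_{\?I}$ with $X=\Yo_\si=\pio(\Weo(\Tb;\ZZ_\si;H))$; invoking $\pio\Weo_{\!X}=X$ from Proposition \ref{piWX} (equivalently Lemma \ref{lemWeIJ}) closes the identification, up to the overall constant $(\Gm(-h))^k$.

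For the second identity, I would instead push the contour to the right, toward the poles $t_i=z_j+h-l_i$, $l_i\in\Z_{\ge0}$, which matches the series \eqref{Ff} for $\Ff_{\?I}$ and hence $\Psf_{\?I}$ via \eqref{Psof}; the power $p^{kh+\sum_{i\in I}z_i}$ in \eqref{Psof} arises from the shift $t_i\mapsto z_i+h$ plus the $p^{\sum t_j}$ prefactor. Here the coefficient of $\Ff_{\?I}$ becomes the evaluation of $\Ue/\Edd$ at $\TT=H\ZZ_I$ (the shift by $h$ exponentiating to multiplication by $H$), which by \eqref{Ue}, \eqref{Weo} is $\Weo(H\ZZ_I;\ZZ_\si;H)/E(\ZZ_I;H)$ — exactly the quantity appearing in \eqref{tauXI} and, through Proposition \ref{piWX} applied to $\pif$, identifying the sum with $\Psf_{\?X}$ for $X=\pif(\Weo(\Tb;\ZZ_\si;H))=\Yf_\si$. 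A subtlety to handle carefully is that closing the contour is legitimate only because of the convergence/decay established in Proposition \ref{conv} together with the constraint $(\zz_\si,h)\in\Lin$; I would justify the vanishing of the arcs at infinity using the estimate \eqref{ineq} exactly as in that proof.

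The main obstacle will be the residue combinatorics: verifying that the iterated residue of the product $\Phr\cdot W$ at a non-sorted point $\TT=\bss$ vanishes unless $\bss=\zz_I+\lb$ (resp.\ $\zz_I+\hb-\lb$) with $I\in\Ikn$, and that for such points the iterated residue reproduces exactly the summand of \eqref{Fo} (resp.\ \eqref{Ff}) with the correct sign and the correct $\Weo$-coefficient after dividing by $\Edd$. This amounts to the same bookkeeping already implicit in \cite[Theorem 9.5]{TV4} and its analogue for $\Psf_{\?I}$; I would reduce to those references for the residue evaluation itself and concentrate the new work on identifying the prefactor $\Ue(\zz_I;\zz;h)/\Edd(\zz_I;h)$, respectively $\Ue(z_I+h\cdot\mathbf 1;\zz;h)/\Edd$, with $\Weo(\ZZ_I;\ZZ_\si;H)/E$ and $\Weo(H\ZZ_I;\ZZ_\si;H)/E$ under the substitution \eqref{Fdd}, and then quoting Proposition \ref{piWX}. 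A secondary check is that the factors of $\Edd$ in the denominator of \eqref{Ie} never cancel a residue contribution prematurely, i.e.\ that $\Edd(\TT;h)$ is nonzero at all the relevant evaluation points $\zz_I$ and $z_I+h\mathbf 1$ for $(\zz_\si,h)\in\Lin$ — this follows because $\Edd$ vanishes only when $e^{2\pii(t_i-t_j)}=e^{2\pii h}$, which is excluded on $\Lin$ for distinct coordinates of $\zz$.
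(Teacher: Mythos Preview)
Your overall strategy---evaluate $\Ie$ by summing residues and match against the series \eqref{Fo}, \eqref{Ff}---is exactly the paper's approach. However, the execution you outline contains a genuine gap and a few misidentifications that would derail the argument. First, the directions are reversed: for $|p|<1$ one closes to the \emph{right} (poles $t_i=z_j+l_i$, $j\le i$, lie there), for $|p|>1$ to the \emph{left}. More importantly, in the $|p|<1$ case the residues do \emph{not} organize as a sum over all $I\in\Ikn$: the only poles of \eqref{Igd} in $t_i$ to the right are at $z_j+\Z_{\ge0}$ with $j\le i$, so by a pigeonhole argument the only subset that can appear with distinct indices is $I=\{1,\ldots,k\}$. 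The ``cross'' residues with a repeated index $j$ do not vanish individually; they cancel in pairs via an antisymmetry (Lemma~\ref{lemF2}), and one must also rule out spurious poles of the partially evaluated integrand at $t_{m+1}=z_i+l_i+h+\Z_{\ge0}$ coming from the $\Gm(t_j-t_i+h)$ factors. That regularity uses the vanishing of $W|_{t_i=z_a,\,t_j=z_a+h}$ \emph{in the cohomology ring} (Lemmas~\ref{Wth}, \ref{lemF3}), not any pointwise vanishing. This inductive bookkeeping (Proposition~\ref{lemIeResF}) is the heart of the proof; it is carried out here, and you cannot offload it to \cite[Theorem~9.5]{TV4}, which only asserts that $\Pso_{\?I}$ solves the differential equation.

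Your identification of the coefficient is also off: $\Ue$ is built from the \emph{unsymmetrized} $\Udd$, so $\Ue(\zz_I;\zz;h)/\Edd(\zz_I;h)$ is not $\Weod(\zz_I;\zz;h)/\Edd(\zz_I;h)$ for general $I$. It happens to agree for $I=\{1,\ldots,k\}$ (the only $I$ that actually shows up) because all other terms in the symmetrization vanish there, and then one matches with $\Pso_{Y^0}$ via the single-term reduction forced by Lemma~\ref{lemWeIJ}. For $|p|>1$ the picture is genuinely different: the iterated residues are naturally indexed by permutations $\sigma\in S_n[k]$ (those with $\sigma(i)\ge i$ for $i\le k$), not by subsets $I$, and the passage to the $\Weo(H\ZZ_I;\ZZ;H)$ coefficients requires the combinatorial identity \eqref{WUsi}, which assembles the $\Ue(\zz^\star_\sigma+\hb;\ldots)$ with fixed $I_\sigma=I$ into the symmetrized weight function. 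The analogues of the cancellation and regularity lemmas (Lemmas~\ref{lemFh2}, \ref{lemFh3}, Proposition~\ref{lemIeResFh}) are again essential and cannot be quoted from elsewhere.
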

\vsk.2>
\noindent
Theorem \ref{thmPsin} is proved in the next section.

\subsection{Proof of Theorem \ref{thmPsin}}
\label{secpfth}
It suffices to prove the statement for the identity permutation
$\,\si\,$. To simplify writing, we will omit $\,\sigma\,$ from the notation.
We will prove the equality
\vvn.3>
\beq
\label{Psin0}
\Psin(\zz\:;h\:;p)\,=\,\bigl(\:\Gm(\<-\:h)\<\bigr)^k
\>\Pso_{\?Y^0}(\zz\:;h\:;p)\kern-2em
\vv.1>
\eeq
for $\,\:|\:p\:|\<<1\,$ and the equality
\beq
\label{Psin8}
\Psin(\zz\:;h\:;p)\,=\,\bigl(\:\Gm(\<-\:h)\<\bigr)^k
\>\Psf_{\<Y^{\<\infty}_{\vp1}}\<(\zz\:;h\:;p)\kern-2em
\vv.1>
\eeq
for $\,\:|\:p\:|>1\,$. Then for an arbitrary $\,p\,$, these equalities
will hold by analytic continuation.

\vsk.3>
Let $\,\:|\:p\:|\<<1\,$. Set $\,\zzst\?=(z_1\lc z_k)\,$. Recall
\vvn.4>
\begin{align}
\label{PsioR}
& \!\<\Pso_{\?Y^0}(\zz\:;h\:;p)\,={}
\\[5pt]
\notag
\kern-4em\!\<{}=\,Q(\gmb &{}\:;\gmbb\:;h)
\sum_{I\in\>\Ikn\!}\!\biggl(\>\frac{\Weod\<(\zz_I;\zz\:;h)}{\Edd(\zz_I;h)}
\>\sum_{\lb\in\:\Z_{\ge 0}^k\!}\Res_{\>\TT\:=\zz_I\<+\:\lb}
\bigl(\Phr\<(\TT\:;\zz\:;h)\>W\<(\TT\:;\gmb\:;\gmbb\:;h)\<\bigr)\,
p^{\>\sum_{i=1}^{\:k}\<\<(\<z_i\<+\:l_i\<)}\biggr)\:,\kern-1.8em
\\[-16pt]
\notag
\end{align}
see \eqref{Yofpi}\:, \eqref{pimap}\:, \eqref{PsofX}\:, \eqref{Psof}\:,
\eqref{Fo}\:. By Lemma \ref{lemWeIJ}, the sum over $\>I\?\in\Ikn\>$
\vv.07>
in \eqref{PsioR} reduces to a single term when $\>I=\onek\,$ and
\vv.08>
$\,\zz_I\<=\zzst$. Then using formulae \eqref{WeIJ}\:, \eqref{RZ}\:,
and the definition of $\,\:\Psin(\zz\:;h\:;p)\,$, see \eqref{Psin}\:,
we convert equality \eqref{Psin0} to the following form,
\vvn.2>
\begin{align}
\label{IeRes}
\Ie(\zz\:;h\:;p)\,&{}=\,(-1)^k\>\prod_{a=1}^k\:\prod_{b=k+1}^n\?
\bigl(1-e^{\:2\pii\,(z_b\<-\:z_a)}\bigr)\,\times{}
\\[4pt]
\notag
&{}\>\times\sum_{\lb\in\:\Z_{\ge 0}^k\!}\Res_{\>\TT\:=\zzss\!+\:\lb}
\bigl(\Phr\<(\TT\:;\zz\:;h)\>W\<(\TT\:;\gmb\:;\gmbb\:;h)\<\bigr)\,
p^{\>\sum_{i=1}^k\<(\<z_i\<+\:l_i\<)}\:,\kern-2em
\end{align}
where $\,\:\Ie(\zz\:;h\:;p)\,$ is integral \eqref{Ie}\:. To prove formula
\eqref{IeRes}\:, we evaluate the integral $\,\:\Ie(\zz\:;h\:;p)\,$
via the sum of residues ``\:to the right'' \<of the integration contour
$\,\:\Ibb(\zz\:;h)\,$.

\vsk.3>
Consider the $\>p\>$-independent part of the integrand in \eqref{Ie}\:,
\vvn.4>
\beq
\label{F}
F(\TT\:;\zz\:;h)\,=\,\frac{\Ue(\TT\:;\zz\:;h)\,\Phr\<(\TT\:;\zz\:;h)
\,W\<(\TT\:;\gmb\:;\gmbb\:;h)}{\Edd(\TT\:;h)}\;.
\vv.3>
\eeq

\begin{lem}
\label{lemF1}
For $\,\:\lb\in\<\Z_{\ge 0}\,$, we have
\vvn.4>
\begin{align}
\label{ResFk}
& \Res_{\>t_k=\:z_k\<+\>l_k}\ldots\>\Res_{\>t_1=\:z_1\<+\>l_1}\?
F(\TT\:;\zz\:;h)\,={}
\\[-1pt]
\notag
&\;{}=\,\Res_{\>\TT\:=\zzss\!+\:\lb}
\bigl(\Phr\<(\TT\:;\zz\:;h)\>W\<(\TT\:;\gmb\:;\gmbb\:;h)\<\bigr)\,
\prod_{a=1}^k\:\prod_{b=k+1}^n\?\bigl(1-e^{\:2\pii\,(z_b\<-\:z_a)}\bigr)\,.
\kern-2em
\\[-16pt]
\notag
\end{align}
\end{lem}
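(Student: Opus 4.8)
The plan is to compute the iterated residue on the left-hand side of \eqref{ResFk} one variable at a time, starting from $t_1$ and working up to $t_k$, and to track how the extra factors in $F$ that are not already present in $\Phr\,W$ — namely $\Ue/\Edd$ — behave under these successive residues. Recall from \eqref{Ue}, \eqref{Fdd}, \eqref{U}, \eqref{ET} that
\be
\frac{\Ue(\TT;\zz;h)}{\Edd(\TT;h)}\,=\,
\prod_{i=1}^k\,\biggl(\,\prod_{a=1}^{i-1}(1-e^{2\pii(z_a+h-t_i)})
\prod_{b=i+1}^k(1-e^{2\pii(z_b-t_i)})
\prod_{a=k+1}^n(1-e^{2\pii(z_a-t_i)})\,
\prod_{j=i+1}^k\frac{1}{1-e^{2\pii(t_j-t_i)}}\,\biggr)\,,
\ee
the two ``$1-e^{2\pii(t_j-t_i-h)}$'' type factors from $U$ and from $E$ having cancelled. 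This quotient is \emph{regular} at each hyperplane $t_i=z_j+\ell$, $\ell\in\Z_{\ge0}$, since its only singularities are along $t_i=t_j\ (\mathrm{mod}\ \Z)$ for $i<j$. So when we take $\Res_{t_1=z_1+l_1}$, the factor $\Phr\,W$ contributes the residue and $\Ue/\Edd$ simply gets evaluated at $t_1=z_1+l_1$.

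First I would set $t_1=z_1+l_1$ in $\Ue/\Edd$ and read off the surviving factors: the ``$\prod_{a=1}^{0}$'' product is empty; the factors $\prod_{b=2}^k(1-e^{2\pii(z_b-t_1)})$ and $\prod_{a=k+1}^n(1-e^{2\pii(z_a-t_1)})$ become $\prod_{b=2}^k(1-e^{2\pii(z_b-z_1-l_1)})\cdot\prod_{a=k+1}^n(1-e^{2\pii(z_a-z_1-l_1)})=\prod_{b=2}^n(1-e^{2\pii(z_b-z_1)})$ since $e^{2\pii l_1}=1$; and the factors $\prod_{j=2}^k(1-e^{2\pii(t_j-t_1)})^{-1}$ become $\prod_{j=2}^k(1-e^{2\pii(t_j-z_1-l_1)})^{-1}=\prod_{j=2}^k(1-e^{2\pii(t_j-z_1)})^{-1}$. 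Now iterate: taking $\Res_{t_2=z_2+l_2}$, the factor $(1-e^{2\pii(z_1+h-t_2)})$ from the $i=2$ block stays (it is regular), the product $\prod_{b=3}^n(1-e^{2\pii(z_b-z_2)})$ is produced exactly as before, and crucially the previously created denominator factor $(1-e^{2\pii(t_2-z_1)})^{-1}$ combines with the newly created numerator factor $(1-e^{2\pii(z_1+h-t_2)})$ — wait, these are different; rather the key cancellation is that the numerator factor $(1-e^{2\pii(z_1+h-t_2)})$ created at step $i=2$ cancels against a denominator factor produced when step $i=1$'s leftover $\prod_{j\ge 2}(1-e^{2\pii(t_j-z_1)})^{-1}$ is evaluated at $t_2=z_2+l_2$, \emph{provided} one uses that $\Phr$ has a compensating $\Gm$-factor; this is exactly the mechanism by which the ``$z_1+h$'' arguments disappear and only the clean product $\prod_{a=1}^k\prod_{b=k+1}^n(1-e^{2\pii(z_b-z_a)})$ in \eqref{ResFk} survives.

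The main obstacle — and the step deserving genuine care rather than hand-waving — is precisely this bookkeeping of which factors cancel against which across the $k$ successive residues: the leftover denominators $\prod_{j>i}(1-e^{2\pii(t_j-z_i)})^{-1}$ from block $i$, the numerators $(1-e^{2\pii(z_a+h-t_i)})$ for $a<i$ from block $i$, and the pole structure of $\Phr$ (which has zeros of $\Gm(z_a-t_i)$ at $t_i=z_a+\Z_{\ge0}$ contributing to the residue, and poles of $\Gm(t_i-z_a-h)$). I would organize this as an induction on $k$: assume \eqref{ResFk} for $k-1$ variables applied to the function obtained after the first residue, carefully checking that after $\Res_{t_1=z_1+l_1}$ the function $F$ restricted to $z_2,\dots,z_k$ (with $z_1$ now a spectator parameter) has exactly the shape required to invoke the inductive hypothesis, up to the already-extracted factor $\prod_{b=2}^n(1-e^{2\pii(z_b-z_1)})$, of which $\prod_{b=k+1}^n(1-e^{2\pii(z_b-z_1)})$ is the $i=1$ slice of the target product and $\prod_{b=2}^k(1-e^{2\pii(z_b-z_1)})$ is absorbed into the shifted weight/$E$ data of the smaller problem. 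Everything else is a direct substitution of exponentials using $e^{2\pii l_i}=1$, which is routine.
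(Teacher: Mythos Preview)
Your displayed formula for $\Ue/\Edd$ is wrong: only half of the ``$h$-type'' factors cancel. From \eqref{ET} one has $\Edd(\TT;h)=\prod_{i<j}(1-e^{2\pii(h+t_i-t_j)})(1-e^{2\pii(h+t_j-t_i)})$, while the numerator of $\Udd$ contributes only $\prod_{i<j}(1-e^{2\pii(h+t_j-t_i)})$. So the correct expression carries an additional factor $\prod_{i<j}(1-e^{2\pii(h+t_i-t_j)})^{-1}$. This is exactly the source of your confusion in the second paragraph: at $t_i=z_i+l_i$ the numerator $(1-e^{2\pii(z_1+h-t_2)})$ does cancel --- against this missing denominator factor $(1-e^{2\pii(h+t_1-t_2)})^{-1}$, not against anything coming from $\Phr$. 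No ``compensating $\Gm$-factor'' from $\Phr$ is involved in the trigonometric cancellation at all; the claim that it is, and the proposed induction built around it, are on the wrong track.

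More importantly, the variable-by-variable bookkeeping is unnecessary. The paper's argument is one line: $\Ue/\Edd$ and $W$ are holomorphic in a neighbourhood of $\TT=\zzs\!+\lb$, and $\Phr=G(\TT;\zz;h)\prod_{i=1}^k\Gm(z_i-t_i)$ with $G$ holomorphic there, so the iterated residues commute and
\[
\Res_{t_k=z_k+l_k}\!\ldots\Res_{t_1=z_1+l_1}F
=\frac{\Ue(\zzs\!+\lb;\zz;h)}{\Edd(\zzs\!+\lb;h)}\,
\Res_{\TT=\zzss+\lb}\bigl(\Phr\,W\bigr).
\]
Since $\Ue/\Edd$ is $1$-periodic in each $t_i$, the prefactor equals $\Ue(\zzs;\zz;h)/\Edd(\zzs;h)$, and a direct evaluation (the $\prod_{b=i+1}^k$ and $\prod_{j=i+1}^k$ pieces in $\Udd$ cancel at $\TT=\zzs$, and what remains equals $\Edd(\zzs;h)$ up to relabelling) gives exactly $\prod_{a=1}^k\prod_{b=k+1}^n(1-e^{2\pii(z_b-z_a)})$. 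Evaluate once at the full point rather than tracking partial substitutions.
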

\begin{proof}
The functions $\,\:\Ue(\TT\:;\zz\:;h)/\Edd(\TT\:;h)\,$ and
\vv.1>
$\,\:W\<(\TT\:;\gmb\:;\gmbb\:;h)\,$ are holomorphic in a neighbourhood
of the locus $\,\:\TT=\zzst\!\<+\lb\,$, and the function
$\,\:\Phr\<(\TT\:;\zz\:;h)\,$ has the form
\vvn.26>
\be
\Phr\<(\TT\:;\zz\:;h)\,=\,
G(\TT\:;\zz\:;h)\alb\:\prod_{\:i=1}^{\:k}\Gm(z_i\<-t_i)\,,\kern-1.4em
\vv.3>
\ee
where the function $\,G(\TT\:;\zz\:;h)\,$ is holomorphic in a neighbourhood
of $\,\:\TT=\zzst\!\<+\lb\,$. Therefore,
\vvn.6>
\begin{align*}
\Res_{\>t_k=\:z_k\<+\>l_k}\ldots\>\Res_{\>t_1=\:z_1\<+\>l_1}\?
F(\TT\:;\zz\:;h)\,=\,
\Res_{\>t_1=\:z_1\<+\>l_1}\ldots\>\Res_{\>t_k=\:z_k\<+\>l_k}\?
F(\TT\:;\zz\:;h)\,={} \kern-.8em &
\\[4pt]
{}=\,\:\frac{\Ue(\zzst\!\<+\lb\:;\zz\:;h)}{\Edd(\zzst\!\<+\lb\:;h)}
\,\Res_{\>\TT\:=\zzss\!+\:\lb}
\bigl(\Phr\<(\TT\:;\zz\:;h)\>W\<(\TT\:;\gmb\:;\gmbb\:;h)\<\bigr)
\kern-.82em &
\\[-14pt]
\end{align*}
by definition \eqref{Res} of $\,\:\Res_{\>\TT\:=\zzss\!+\:\lb}\>$.
Since
\vvn.3>
\be
\frac{\Ue(\zzst\!\<+\lb\:;\zz\:;h)}{\Edd(\zzst\!\<+\lb\:;h)}\,\:=
\,\prod_{a=1}^k\:\prod_{b=k+1}^n\?\bigl(1-e^{\:2\pii\,(z_b\<-\:z_a)}\bigr)\,,
\kern-2em
\vv-.1>
\ee
formula \eqref{ResFk} follows.
\end{proof}

For $\,m\le k\,$, denote $\,\:\TT^{\ges m}\!=(t_{m+1}\lc t_k)\,$.
The sequence $\,\:\TT^{\ges k}$ is empty and will be used for convenience.
For nonnegative integers $\,\:l_1\lc l_m\>$, set
\vvn.4>
\beq
\label{FlmR}
F_{\:l_1\?\lc\:l_m}\?(\TT^{\ges m};\zz\:;h)\,=\,\Res_{\>t_m=\:z_m\<+\>l_m}\ldots
\>\Res_{\>t_1=\:z_1\<+\>l_1}\?F(\TT\:;\zz\:;h)\,.\kern-2em
\vv.4>
\eeq
The function $\,F_{\:l_1\?\lc\:l_k}\?(\TT^{\ges k};\zz\:;h)\>$ does not depend
\vv.08>
on $\,\:\TT\,$ and we can omit the argument $\,\:\TT^{\ges k}\<$.
Then formula \eqref{ResFk} reads
\vvn-.5>
\be
F_{\:l_1\?\lc\:l_k}\?(\zz\:;h)\,=\,\Res_{\>\TT\:=\zzss\!+\:\lb}
\bigl(\Phr\<(\TT\:;\zz\:;h)\>W\<(\TT\:;\gmb\:;\gmbb\:;h)\<\bigr)\,
\prod_{a=1}^k\:\prod_{b=k+1}^n\?\bigl(1-e^{\:2\pii\,(z_b\<-\:z_a)}\bigr)\,,
\kern-2em
\vv-.4>
\ee
and formula \eqref{IeRes} takes the form
\vvn-.1>
\beq
\label{IeResF}
\Ie(\zz\:;h\:;p)\,=\,(-1)^k\!\<\sum_{l_1\?\lc\:l_k=\:0\!\!}^\infty\!\?
F_{\:l_1\?\lc\:l_k}\?(\zz\:;h)\,p^{\>\sum_{i=1}^k\<(\<z_i\<+\:l_i\<)}\:.
\kern-2em
\eeq

\vsk.2>
In what follows, we will prove formula \eqref{IeResF}\:.
\vv.04>
We start the proof with technical lemmas.
The main part of the proof is given by Proposition \ref{lemIeResF}.
Formula \eqref{IeResF} coincides with formula \eqref{Iems} for $\,m=k\,$.

\begin{lem}
\label{lemF2}
For nonnegative integers $\,\:l_1\lc l_{m+1}\>$, $m<k\,$,
\vvn.5>
and any $\,\:i=1\lc m\,$, we have
\be
\Res_{\>t_{m+1}=\:z_i+\>l_{m+1}}\?F_{\:l_1\?\lc\:l_m}\?(\TT^{\ges m};\zz\:;h)
\,=\:-\:\Res_{\>t_{m+1}=\:z_i+\>l_i}\<
F_{\:l_1\?\lc\:l_{i-1}\<,\>l_{m+1}\?,\>l_{i+1}\?\lc\:l_m}\?
(\TT^{\ges m};\zz\:;h)\,.\kern-1em
\vv.5>
\ee
\end{lem}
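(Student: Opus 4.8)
The plan is to evaluate the two iterated residues directly. The key point is that, among all the factors of $F$ in \eqref{F}, exactly one combination — built from the Gamma-function part of $\Phr$ and a matching factor of $U$ — carries the sign, while all the other factors are, at the relevant lattice points, symmetric under the interchange $t_i\leftrightarrow t_{m+1}$. To begin, I would reduce to a two-variable statement. The functions $F_{l_1,\dots,l_m}$ and $F_{l_1,\dots,l_{i-1},l_{m+1},l_{i+1},\dots,l_m}$ arise from $F$ by the same residues $\Res_{t_a=z_a+l_a}$, $a\le m$, the only difference being that the $t_i$-residue is taken at $z_i+l_i$ in the first case and at $z_i+l_{m+1}$ in the second. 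For generic $\zz,h$, near the points involved the only singularities of $F$ are the simple poles of $\Gm(z_a-t_a)$, $a\le m$, and of $\Gm(z_i-t_{m+1})$, which lie on mutually transverse coordinate hyperplanes (the factors $1/\Gm(\cdot)$ in \eqref{Phr} are entire, and the contributions of $\Ue$ and $\Edd$, see \eqref{U}, \eqref{ET}, \eqref{Ue}, are holomorphic there), so all these residues may be performed in any order. Writing $G(t_i,t_{m+1},\dots,t_k)$ for the result of applying $\Res_{t_a=z_a+l_a}$ for all $a\le m$ with $a\ne i$ to $F$ — the same operator in both cases — the claim becomes
\[
\Res_{t_{m+1}=z_i+l_{m+1}}\Res_{t_i=z_i+l_i}G\,=\,-\Res_{t_{m+1}=z_i+l_i}\Res_{t_i=z_i+l_{m+1}}G\,.
\]

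Next I would extract the sign-carrying factor. From the explicit formulae \eqref{W}, \eqref{Phr}, \eqref{U}, \eqref{ET}, \eqref{Ue}, \eqref{F} one writes $G=\Gm(z_i-t_i)\,\Gm(z_i-t_{m+1})\,\Omega_{\mathrm{ent}}\,\Omega_{\mathrm{rest}}$, where
\[
\Omega_{\mathrm{ent}}(t_i,t_{m+1})\,=\,\frac1{\Gm(t_i-t_{m+1})\,\Gm(t_{m+1}-t_i)\,\bigl(1-e^{2\pii(t_{m+1}-t_i)}\bigr)}\,=\,\frac{(t_{m+1}-t_i)\,e^{\pii(t_i-t_{m+1})}}{2\pii}\,,
\]
the last equality following from $1/\bigl(\Gm(x)\Gm(-x)\bigr)=-x\sin(\pi x)/\pi$ and $1-e^{2\pii x}=-2\sqrt{-1}\,e^{\pii x}\sin(\pi x)$. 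Here the denominator is the product of the $\{i,m+1\}$-pair factor $\Gm(t_i-t_{m+1})\Gm(t_{m+1}-t_i)$ of $\Phr$ and the factor $1-T_{m+1}/T_i$ of $U$; the remaining $U$-factor $1-HT_{m+1}/T_i$ of that pair cancels one of the two $\Edd$-factors attached to it. Thus $\Omega_{\mathrm{ent}}$ is entire, vanishes on the diagonal $t_i=t_{m+1}$, and satisfies $\Omega_{\mathrm{ent}}(z_i+l,z_i+l')=-\,\Omega_{\mathrm{ent}}(z_i+l',z_i+l)$ for all integers $l,l'$.

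Then I would check that $\Omega_{\mathrm{rest}}$ is, at every point of $(z_i+\Z)\times(z_i+\Z)$, invariant under $t_i\leftrightarrow t_{m+1}$. Its constituents are of three kinds: genuinely symmetric functions of $(t_i,t_{m+1})$ — the remaining $W$- and $\Phr$-contributions of the pair $\{i,m+1\}$, in particular $1/\bigl(\Gm(1+t_i-t_{m+1}-h)\,\Gm(1+t_{m+1}-t_i-h)\bigr)$, and the symmetric pair of $\Edd$-factors of that pair; a product $\Lambda(t_i)\,\Lambda(t_{m+1})$, where $\Lambda(t)=\prod_{a\ne i}\Gm(z_a-t)\prod_{a=1}^n\Gm(t-z_a-h)$ collects the Gamma factors of $\Phr$ depending on only one of $t_i,t_{m+1}$, the same function in each slot; and exponential factors from $\Ue$, $\Edd$ and the surviving $1/(1-HT_i/T_{m+1})$, each invariant under integer shifts of $t_i$ and of $t_{m+1}$ separately, hence constant on $(z_i+\Z)\times(z_i+\Z)$. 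The factors of $G$ coupling $t_i$ or $t_{m+1}$ to a spectator variable $t_j$ (with $j\le m$, $j\ne i$, already set to $z_j+l_j$, or with $j>m+1$) are handled the same way: the pairs $\{i,j\}$ and $\{m+1,j\}$ contribute entire factors whose ratio is constant along $z_i+\Z$, so their joint contribution is unchanged when the values assigned to $t_i$ and $t_{m+1}$ — which differ by an integer — are interchanged.

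Finally I would evaluate the residues. Since $\Omega_{\mathrm{ent}}\Omega_{\mathrm{rest}}$ is holomorphic near the points in question and $\Res_{t=z_i+l}\Gm(z_i-t)=-(-1)^l/l!$, both iterated residues equal $(-1)^{l_i+l_{m+1}}/(l_i!\,l_{m+1}!)$ times the value of $\Omega_{\mathrm{ent}}\Omega_{\mathrm{rest}}$, at $(z_i+l_i,z_i+l_{m+1})$ in the first case and at $(z_i+l_{m+1},z_i+l_i)$ in the second; by the sign rule for $\Omega_{\mathrm{ent}}$ and the lattice-point invariance of $\Omega_{\mathrm{rest}}$ these two values are opposite, which is the claim. (When $l_i=l_{m+1}$ both sides vanish because $\Omega_{\mathrm{ent}}$ is zero on the diagonal.) The hard part is the bookkeeping of the third paragraph: one must verify that, apart from the single factor $\Omega_{\mathrm{ent}}$, every factor of $F$ involving $t_i$ or $t_{m+1}$ is either genuinely symmetric or shift-invariant enough to contribute equally to both iterated residues, and this is the only step where the precise forms \eqref{U}, \eqref{ET}, \eqref{W}, \eqref{Phr}, \eqref{Ue}, \eqref{F} really enter.
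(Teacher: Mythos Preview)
Your proof is correct and follows essentially the same route as the paper. The paper's proof simply asserts that $F$, as a function of $t_1,\ldots,t_{m+1}$, factors as $G_i\,H_i\,e^{\pii(t_i-t_{m+1})}(t_{m+1}-t_i)\prod_{j=1}^m\Gm(z_j-t_j)\,\Gm(z_j-t_{m+1})$ with $G_i$ symmetric in $t_i,t_{m+1}$ and holomorphic at both relevant points, and $H_i$ $1$-periodic in each $t_j$; your $\Omega_{\mathrm{ent}}$ is exactly the paper's explicit factor $e^{\pii(t_i-t_{m+1})}(t_{m+1}-t_i)$ (up to the constant $1/(2\pii)$), and your $\Omega_{\mathrm{rest}}$ is the paper's $G_iH_i$ times the spectator Gamma factors, with your ``genuinely symmetric'' and ``shift-invariant'' pieces matching the paper's $G_i$ and $H_i$ respectively. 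The only real difference is presentational: you first take the spectator residues and work in the two variables $t_i,t_{m+1}$, and you derive the sign-carrying factor explicitly from the Gamma reflection formula, whereas the paper states the decomposition of $F$ directly using \eqref{Igd}.
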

\begin{proof}
For each $\,\:i=1\lc m\,$, the function $\,F(\TT\:;\zz\:;h)\,$ as a function
of $\,\:t_1\lc\:t_{m+1}\>$ has the form
\vvn-.2>
\be
G_i(\TT\:;\zz\:;h)\,H_i(\TT\:;\zz\:;h)\,e^{\:\pii\>(t_i-\:t_{m+1}\<)}
(t_{m+1}\<-t_i)\:\,\prod_{j=1}^m\,\Gm(z_j\<-t_j)\,\Gm(z_j\<-t_{m+1})\,,
\kern-2em
\vv.4>
\ee
where the function $\,G_i(\TT\:;\zz\:;h)\,$ is symmetric in
$\,\:t_i\:,\:t_{m+1}\,$, holomorphic in a neighbourhood of the point
\vvn-.2>
\be
t_{m+1}\<=z_i\<+l_{m+1}\,,\qquad t_j\<=z_j\<+l_j\,\quad j=1\lc m\,,\kern-1em
\vv.5>
\ee
and holomorphic in a neighbourhood of the point
\vvn.4>
\be
t_{m+1}\<=z_i\<+l_i\,,\quad t_i\<=z_i\<+l_{m+1}\,,\qquad
t_j\<=z_j\<+l_j\,\quad j=1\lc i-1,i+1\lc m\,,
\vv.4>
\ee
while $\,H_i(\TT\:;\zz\:;h)\,$ is a $\,1$-\:periodic function of
\vv.04>
$\,\:t_1\lc\:t_{m+1}\>$ holomorphic in a neighbourhood of the point
$\,\:t_{m+1}\<=z_i\,$, $\,\:t_j=z_j\,$, $\,j=1\lc m\,$.
This implies the statement of Lemma \ref{lemF2}.
\end{proof}

\begin{lem}
\label{Wth}
For any $\,\:i,j=1\lc k\,$, $\,\:i\ne j\,$, and any $\,\:a=1\lc n\,$, we have
\vvn.5>
\be
W\<(\TT\:;\gmb\:;\gmbb\:;h)|_{\:t_i=\:z_a\<,\,t_j=\:z_a\?+\:h}^{}\>=\,0
\vv-.6>
\ee
in $\,\HT\,$.
\end{lem}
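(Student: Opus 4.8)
The plan is to reduce to $i=1$, $j=2$, to split off the $\gmbb$-factor of $W$, and then to invoke the presentation \eqref{Hrel} of $\HT$.

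Since $W(\TT\:;\gmb\:;\gmbb\:;h)$ is symmetric in $t_1\lc t_k$ (being a $\Sym_\TT$), it suffices to prove the lemma for $i=1$, $j=2$; any other pair is obtained by relabeling the $t$-variables. Moreover, since the product $\prod_{i=1}^k\prod_{s=1}^{n-k}(t_i-\gmbr_s)$ is symmetric in $\TT$, formula \eqref{W} gives
\[
W(\TT\:;\gmb\:;\gmbb\:;h)\,=\,\Bigl(\,\prod_{i=1}^k\prod_{s=1}^{n-k}(t_i-\gmbr_s)\Bigr)\,G(\TT\:;\gmb\:;h)\,,\qquad
G(\TT\:;\gmb\:;h)\,=\,\Sym_\TT\psi\,,
\]
where
\[
\psi(\TT\:;\gmb\:;h)\,=\,\prod_{i=1}^k\biggl(\,\prod_{l=1}^{i-1}(t_i-\gm_l-h)\,\prod_{m=i+1}^k(t_i-\gm_m)\,\prod_{j=i+1}^k\frac{t_i-t_j-h}{t_i-t_j}\biggr)\,.
\]
By the standard reasoning (as noted for $W$ after \eqref{W}), $G$ is a polynomial in $\TT$; and by Lemma \ref{symW}, $G$ is symmetric in $\gm_1\lc\gm_k$, since the $\gmbb$-factor of $W$ does not involve the $\gm$'s.

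The key step is to show that $G(\TT\:;\gmb\:;h)\big|_{t_2=t_1+h}$ is divisible by $\prod_{m=1}^k(t_1-\gm_m)$. As $G$ is a polynomial symmetric in the $\gm$'s and the linear forms $t_1-\gm_m$ are pairwise coprime, it is enough to check that $G\big|_{t_1=\gm_1,\,t_2=\gm_1+h}=0$. Write $G=\sum_{\si\in S_k}\psi(t_{\si(1)}\lc t_{\si(k)})$ and evaluate term by term, setting $p=\si^{-1}(1)$ and $q=\si^{-1}(2)$. If $q<p$, the $i=q$ factor of $\psi(t_{\si(1)}\lc t_{\si(k)})$ contains $(t_{\si(q)}-t_{\si(p)}-h)/(t_{\si(q)}-t_{\si(p)})=(t_2-t_1-h)/(t_2-t_1)$, whose numerator vanishes at $t_1=\gm_1$, $t_2=\gm_1+h$ while the denominator stays equal to $h\neq0$; so this term vanishes on that locus. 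If $p<q$, then $q\geq2$, and the $i=q$ factor of $\psi(t_{\si(1)}\lc t_{\si(k)})$ contains $\prod_{l=1}^{q-1}(t_{\si(q)}-\gm_l-h)=\prod_{l=1}^{q-1}(t_2-\gm_l-h)$, whose $l=1$ term $t_2-\gm_1-h$ vanishes at $t_2=\gm_1+h$; so again the term vanishes there. (In both cases, for generic $\gm,h,t_3\lc t_k$ no denominator $t_{\si(i)}-t_{\si(j)}$ degenerates on $\{t_1=\gm_1,\,t_2=\gm_1+h\}$, so each summand is a vanishing factor times a regular one; since $G$ is a polynomial, its restriction to the locus equals the resulting sum of zeros.) Hence $G\big|_{t_1=\gm_1,\,t_2=\gm_1+h}=0$, and the asserted divisibility follows.

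Combining, $W\big|_{t_2=t_1+h}$ equals $\prod_{m=1}^k(t_1-\gm_m)\cdot\prod_{s=1}^{n-k}(t_1-\gmbr_s)$ times a polynomial in $t_1,t_3\lc t_k\:,\gmb\:,\gmbb\:,h$. By the defining relation in \eqref{Hrel}, in $\HT$ one has $\prod_{m=1}^k(u-\gm_m)\prod_{s=1}^{n-k}(u-\gmbr_s)=\prod_{a=1}^n(u-z_a)$ for the indeterminate $u$; specializing $u=t_1$ shows that in $\HT[\TT]$
\[
W\big|_{t_2=t_1+h}\,=\,\Bigl(\,\prod_{a=1}^n(t_1-z_a)\Bigr)\cdot(\,\cdots\,)\,.
\]
Setting $t_1=z_a$ (for the given $a$) makes the factor $z_a-z_a$ appear, so $W\big|_{t_1=z_a,\,t_2=z_a+h}=0$ in $\HT$, which is the lemma. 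The one genuine point is the term-by-term vanishing in the third paragraph: for every permutation $\si$ one must pick out the correct factor of $\psi$ that vanishes under the substitution --- the $(t_2-t_1-h)$-type factor in the $\si^{-1}(2)$-th place when $\si^{-1}(2)<\si^{-1}(1)$, and the $(t_2-\gm_1-h)$-type factor there otherwise --- and check that the denominators $t_i-t_j$ do not vanish at the substitution point; the rest is formal. (Alternatively one may argue by restriction to the torus-fixed points $F_I$: if $a\notin I$ the factor $t_1-z_a$ already occurs in $\prod_{s=1}^{n-k}(t_1-\gmbr_s)\big|_{F_I}$, and if $a\in I$ one again uses $G\big|_{t_1=\gm_m,\,t_2=\gm_m+h}=0$ with $\gm_m\big|_{F_I}=z_a$.)
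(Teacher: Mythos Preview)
Your proof is correct and follows the same approach as the paper's: show that $W|_{t_i=z_a,\,t_j=z_a+h}$ is divisible by $\prod_{m=1}^k(z_a-\gm_m)\prod_{s=1}^{n-k}(z_a-\gmbr_s)$, and then invoke the defining relation in \eqref{Hrel}. The paper simply asserts the divisibility as a consequence of formula \eqref{W}; you spell it out carefully by factoring off the $\gmbb$-part and doing a term\:-by\:-term analysis of the symmetrization, which is exactly the computation underlying the paper's one-line claim.
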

\begin{proof}
Formula \eqref{W} implies that the expression
$\,W\<(\TT\:;\gmb\:;\gmbb\:;h)|_{\:t_i=\:z_a\<,\,t_j=\:z_a\?+\:h}^{}\>$
is divisible by the product
$\,\:\prod_{\:i=1}^{\:k}(z_a\<-\gm_i)\:\prod_{\:j=1}^{\:n-k}(z_a\<-\gmbr_j)\,$.
\vv.1>
This product equals zero in $\,\HT\,$ according to relations \eqref{Hrel}\:.
Hence, Lemma \ref{Wth} follows.
\end{proof}

\begin{lem}
\label{lemF3}
For nonnegative integers $\,\:l_1\lc l_m\>$, $\,m<k\,$, the function
\vv.08>
$\,F_{\:l_1\?\lc\:l_m}\?(\TT^{\ges m};\zz\:;h)\>$ is regular at the hyperplanes
$\,\:t_{m+1}\<-z_i\<-l_i\<-h\<\in\<\Z_{\ge0}\,$, $\,\:i=1\lc m\,$.
\end{lem}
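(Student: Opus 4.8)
The goal is to prove Lemma \ref{lemF3}, the regularity of $\,F_{\:l_1\?\lc\:l_m}\?(\TT^{\ges m};\zz\:;h)\,$ at the hyperplanes $\,t_{m+1}\<-z_i\<-l_i\<-h\<\in\<\Z_{\ge0}\,$, $\,i=1\lc m\,$. The natural strategy is an induction on $\,m\,$, parallel to the reasoning in Lemmas \ref{lemF1} and \ref{lemF2}. The key point is to track, at each residue step, which poles of $\,F(\TT\:;\zz\:;h)\,$ could contribute a singularity of the residue in the variable $\,t_{m+1}\,$ on the locus $\,t_{m+1}-z_i-l_i\in h+\Z_{\ge0}\,$. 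From the list \eqref{poles}, the only poles of $\,F\,$ involving two $\,\TT$-variables are of the type $\,t_a-t_b-h\in\Z_{\ge0}\,$ with $\,a>b\,$, together with the $\Gm$-factor poles at $\,t_a-z_j-h\in\Z_{\le0}\,$.

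\textbf{Key steps.} First I would fix $\,i\in\{1\lc m\}\,$ and examine how the singular hyperplane $\,t_{m+1}=z_i+l_i+h+r\,$, $\,r\in\Z_{\ge0}\,$, can arise from iterating the residues $\,\Res_{t_1=z_1+l_1}\ldots\Res_{t_m=z_m+l_m}\,$. After substituting $\,t_i\mapsto z_i+l_i\,$, a pole of $\,F\,$ of type $\,t_{m+1}-t_i-h\in\Z_{\ge0}\,$ (present since $\,m+1>i\,$) becomes exactly $\,t_{m+1}-z_i-l_i-h\in\Z_{\ge0}\,$. So the claim is that after taking the residue in $\,t_i\,$, this particular family of poles disappears. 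The mechanism should be the vanishing identity of Lemma \ref{Wth}: the factor $\,W\<(\TT\:;\gmb\:;\gmbb\:;h)\,$ evaluated on the locus $\,t_i=z_a\,$, $\,t_{m+1}=z_a+h\,$ vanishes in $\,\HT\,$, which is precisely what is needed to cancel the would-be pole at $\,t_{m+1}=t_i+h\,$ after setting $\,t_i=z_i+l_i\,$ (take $\,a\,$ with $\,z_a=z_i\,$ and use $\,1$-periodicity to absorb $\,l_i\,$, as in the $\Gm$-function bookkeeping of Lemma \ref{lemF1}). Concretely, I would write $\,F\,$ near the relevant locus as a product of: a factor holomorphic and symmetric in $\,t_i,t_{m+1}\,$; the $\Gm(z_i-t_i)\,$ factor producing the residue; a factor $\,(t_{m+1}-t_i-h)\,$ or $\,\Gm(t_{m+1}-t_i-h)^{-1}\Gm(1+t_{m+1}-t_i-h)^{-1}\,$-type contributions from $\,\Phr\,$ and $\,U\,$; and the weight function $\,W\,$, which carries the vanishing. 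Then taking $\,\Res_{t_i=z_i+l_i}\,$ and using $\,W|_{t_i=z_a,\,t_{m+1}=z_a+h}=0\,$ in $\,\HT\,$ together with $\,1$-periodicity shows the residue is regular at $\,t_{m+1}=z_i+l_i+h+r\,$.

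\textbf{Main obstacle.} The delicate part is the interaction between residues in different variables: after one residue the pole structure in $\,t_{m+1}\,$ is modified, and poles of type $\,t_{m+1}-t_j-h\in\Z_{\ge0}\,$ for $\,j<i\,$ may reappear or interact. I would handle this by ordering the residues appropriately (exploiting that, by the argument of Lemma \ref{lemF2}, the $\,F_{l_1\lc l_m}\,$ are — up to sign and index permutation — independent of the order, so one may assume the residue in $\,t_i\,$ is taken last) and then apply the vanishing of $\,W\,$ from Lemma \ref{Wth} directly at the final step. The other bookkeeping subtlety is the shift by $\,h\,$ versus shift by an integer: the $\Gm$-factors $\,\Gm(z_j-t_{m+1})\,$, $\,\Gm(t_{m+1}-z_j-h)\,$, and $\,\Gm(1+t_j-t_{m+1}-h)^{-1}\,$ each contribute only integer-shift poles, and the only $\,h$-shifted family threatening the locus in question is the $\,(t_{m+1}-t_i-h)\,$ factor from the master function's $\,\Gm(t_{m+1}-t_i+h)\Gm(1+t_{m+1}-t_i-h)^{-1}\,$ combination — and this is a genuine zero, not a pole, so it actually works \emph{in favor} of regularity, with $\,W$'s vanishing needed only to kill the single potential pole at $\,r=0\,$ coming from $\,\Gm(t_{m+1}-z_i-h)\,$ after the substitution. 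Assembling these observations and being careful about multiplicities completes the proof.
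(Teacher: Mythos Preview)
Your proposal circles around the right ingredients but misses the clean mechanism the paper uses, and contains a real gap.

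The paper's proof is not by induction or by reordering residues. It simply writes out an explicit formula for $F_{l_1,\ldots,l_m}(\TT^{>m};\zz;h)$ (formula \eqref{Flm}), obtained by plugging $t_j=z_j+l_j$, $j\le m$, into the closed expression \eqref{Igd}. From this formula one isolates, for each $j=1,\ldots,m$, the ratio
\[
\frac{\Gm(z_j+l_j-t_{m+1}+h)}{\Gm(1+z_j-t_{m+1}+h)}\,,
\]
and observes that the remaining factor $G$ is holomorphic near $t_{m+1}=z_i+l_i+h+r$ for any $r\in\Z_{\ge0}$. The point you missed is this denominator: when $l_i\ge 1$ the ratio is a \emph{polynomial} in $t_{m+1}$, so there are no poles at all on the hyperplanes in question; when $l_i=0$ the ratio equals $(z_i-t_{m+1}+h)^{-1}$, giving a single potential pole at $t_{m+1}=z_i+h$ (i.e.\ $r=0$), and precisely then $t_i=z_i+l_i=z_i$, so Lemma~\ref{Wth} applies verbatim and kills it.

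Your argument breaks at two places. First, your ``$1$-periodicity to absorb $l_i$'' step is invalid: the weight function $W$ is a polynomial, not $1$-periodic in $\TT$, so Lemma~\ref{Wth} gives $W|_{t_i=z_a,\,t_{m+1}=z_a+h}=0$ only for the literal values, not for $t_i=z_a+l_i$ with $l_i>0$. The paper avoids this entirely by the dichotomy $l_i\ge1$ versus $l_i=0$. Second, your Main obstacle paragraph has the sign backwards: in \eqref{Igd} the relevant factor for $i=m+1$, $j\le m$ is $\Gm(t_j-t_{m+1}+h)/\Gm(1+t_j-t_{m+1}-h)$, not $\Gm(t_{m+1}-t_i+h)$; this produces poles at $t_{m+1}-t_j-h\in\Z_{\ge0}$, not zeros. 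The cancellation of the poles at $r\ge1$ (and of all poles when $l_i\ge1$) comes from the companion factor $1/\Gm(1+z_j-t_{m+1}+h)$ in the same product, not from any $(t_{m+1}-t_i-h)$ zero.

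So the fix is simple: drop the induction and the reordering via Lemma~\ref{lemF2}, write $F_{l_1,\ldots,l_m}$ explicitly via \eqref{Igd}, and use the polynomial/simple-pole dichotomy of the displayed Gamma ratio together with Lemma~\ref{Wth} for the single case $l_i=0$.
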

\begin{proof}
By formulae \eqref{Igd}\:, \eqref{FlmR}\:,
\vvn.4>
\begin{alignat}2
\label{Flm}
& \;F_{\:l_1\?\lc\:l_m}\?(\TT^{\ges m};\zz\:;h)\,={}
\\[9pt]
\notag
& {}=\,C_{l_1\?\lc\:l_m}\?(\zz\:;h)\;e^{\:\pii\>\sum_{j=m+1}^{\:k}\?t_j}
\,\:W\<(\TT\:;\gmb\:;\gmbb\:;h)|
_{\:t_1=\:z_1\<+\>l_1\<\lc\>t_m=\:z_m\<+\>l_m}^{}\times{}
\\[8pt]
\notag
& \:{}\times\<\prod_{i=m+1}^k\<\biggl(\:
\Gm(z_i\<-t_i)\,\Gm(t_i\<-z_i\<-h)\;\prod_{a=1}^m\,
\frac{(t_i\<-z_a\?-l_a)\,\Gm(z_a\?-t_i)\,\Gm(z_a\?+l_a\?-t_i\<+h)}
{\Gm(1+z_a\?+l_a\?-t_i\<-h)\,\Gm(1+z_a\?-t_i\<+h)}\,\times{}\kern-.4em &&
\\[4pt]
\notag
&& \llap{$\dsize{}\times\<\prod_{j=m+1}^{i-1}
\frac{(t_i\<-t_j)\,\Gm(t_j\<-t_i\<+h)\,\Gm(z_j\<-t_i)}
{\Gm(1+t_j\<-t_i\<-h)\,\Gm(1+z_j\<-t_i\<+h)}
\,\prod_{j=i+1}^n\:\frac{\Gm(t_i\<-z_j\<-h)}{\Gm(1+t_i\<-z_j)}
\>\biggr)\>$},\!\kern-.4em &
\\[-12pt]
\notag
\end{alignat}
where $\,\:C_{l_1\?\lc\:l_m}\?(\zz\:;h)\,$ does not depend on $\,\:\TT\,$.
Therefore, $\,F_{\:l_1\?\lc\:l_m}\?(\TT\:;\zz\:;h)\,$ has the form
\vvn.5>
\be
G(\TT\:;\zz\:;h)\,W\<(\TT\:;\gmb\:;\gmbb\:;h)|
_{\:t_1=\:z_1\<+\>l_1\<\lc\>t_m=\:z_m\<+\>l_m}^{}\,\:\prod_{j=1}^m\,
\frac{\Gm(z_j\<+l_j\<-t_{m+1}\<+h)}{\Gm(1+z_j\<-t_{m+1}\<+h)}\;,\kern-1.8em
\vv.3>
\ee
where $\,G(\TT\:;\zz\:;h)\,$ as a function of $\,\:t_{m+1}\>$ is holomorphic
\vv.07>
in a neighbourhood of the point $\,\:t_{m+1}\<=z_i\?+l_i\<+r+h\,\:$
for any $\,\:i=1\lc m\,$, \:and $\>r\?\in\<\Z_{\ge0}\,$. The ratio of
Gamma\:-\:functions
\vvn.5>
\be
\frac{\Gm(z_i\<+l_i\<-t_{m+1}\<+h)}{\Gm(1+z_i\<-t_{m+1}\<+h)}\kern-2em
\vv.3>
\ee
is a polynomial if $\,\:l_i\ge 1\,$, and equals $\,(z_i\<-t_{m+1}\<+h)^{-1}\:$
\vv.04>
if $\,\:l_i\<=0\,$. In the last case, the function
$\,F_{\:l_1\?\lc\:l_m}\?(\TT\:;\zz\:;h)\,$ does not have
a pole at $\,\:t_{m+1}\<=z_i\?+h\,$ due to Lemma \ref{Wth}.
\vv.04>
Lemma \ref{lemF3} is proved.
\end{proof}

Recall the integral $\,\:\Ie(\zz\:;h\:;p)\,$, see \eqref{Ie}\:,
and the functions $\,F_{\:l_1\?\lc\:l_m}\?(\TT^{\ges m};\zz\:;h)\,$,
see \eqref{FlmR}\:.

\begin{prop}
\label{lemIeResF}
Let $\,\:|\:p\:|\<<1\,$. For any $\,m=1\lc k\,$, we have
\vvn.4>
\beq
\label{Iems}
\Ie(\zz\:;h\:;p)\,=\,(-1)^m\!\<\sum_{l_1\?\lc\:l_m=\:0\!\!}^\infty\!
\Ie_{l_1\?\lc\:l_m}\?(\zz\:;h\:;p)\,
p^{\>\sum_{i=1}^{\:m}\<(\<z_i\<+\:l_i\<)}\:,\kern-2em
\vv.3>
\eeq
where
$\,\:\Ie_{l_1\?\lc\:l_k}\?(\zz\:;h\:;p)=F_{\:l_1\?\lc\:l_k}\?(\zz\:;h)\:$
does not depend on $\,p\,$, and for $\,m<k\,$,
\vvn.4>
\beq
\label{Iem}
\Ie_{l_1\?\lc\:l_m}\?(\zz\:;h\:;p)\,=\,\frac1{\bigl(2\:\piit\>\bigr)^{\<k-m}}
\int\limits_{\Ibb_m\<(\<\zz;h)}\!\!\<p^{\>\sum_{j=m+1}^{\:k}\?t_j}\:
F_{\:l_1\?\lc\:l_m}\?(\TT^{\ges m};\zz\:;h)\;d\:t_{m+1}\:\ldots d\:t_k\,,
\kern-.66em
\eeq
the integration contour $\,\:\Ibb_{\:m}\<(\zz\:;h)\,$ being such that
\vv.08>
$\,\:{\Re\:t_i\<=\Re\:(z_i\<+h/2)}\,$ and $\,\,\Im\:t_i\>$ runs
from $\,-\:\infty\,$ to $\,\infty\,$ for all $\,i=m+1\lc k\,$.
\end{prop}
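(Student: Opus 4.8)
The plan is to prove Proposition \ref{lemIeResF} by induction on $m$, evaluating integral \eqref{Ie} by closing the integration contour successively in the variables $t_1,t_2,\lc t_m$ towards $\Re\:t_i\to+\infty$. First, for the base case $m=1$, I would determine which poles of the integrand of \eqref{Ie}, viewed as a function of $t_1$, lie to the right of the line $\Re\:t_1=\Re\:(z_1+h/2)$. The hypothesis $(\zz,h)\in\Lin$ forces $\Re\:z_1>\Re\:z_2>\dots>\Re\:z_n$ with $\Re\:(z_i-z_{i+1})>-\Re\:h$, so among the three families of poles \eqref{poles} only the poles $t_1=z_1+l_1$, $l_1\in\Z_{\ge0}$, lie to the right, all others lying to the left. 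Since $|p|<1$, the factor $p^{\sum_j t_j}$ decays exponentially as $\Re\:t_1\to+\infty$, while the remaining product of Gamma-factors stays bounded up to polynomial growth — for this I would use the reflection formula for $\Gm$ together with estimate \eqref{ineq}, in the same style as the proof of Proposition \ref{conv} — so the arcs at infinity contribute nothing and the inner $t_1$-integral, divided by $2\:\piit$, equals $-\sum_{l_1\ge0}\Res_{\>t_1=\>z_1+l_1}$. Extracting $p^{\>z_1+l_1}$ from each residue and recalling \eqref{FlmR} and formula \eqref{Iem} for $m=1$, this is exactly \eqref{Iems} with $m=1$, the sign $(-1)$ reflecting the clockwise orientation of the closed contour.

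For the inductive step, assuming \eqref{Iems} and \eqref{Iem} for some $m$ with $1\le m<k$, I would evaluate $\Ie_{\:l_1\?\lc\:l_m}$ by closing the $t_{m+1}$-contour to the right. By inspection of the explicit formula \eqref{Flm} together with the inequalities defining $\Lin$, the poles of $F_{\:l_1\?\lc\:l_m}$ in $t_{m+1}$ lying to the right of $\Re\:t_{m+1}=\Re\:(z_{m+1}+h/2)$ are of two kinds: $t_{m+1}=z_{m+1}+l_{m+1}$, $l_{m+1}\in\Z_{\ge0}$, coming from the factor $\Gm(z_{m+1}-t_{m+1})$; and $t_{m+1}=z_a+r$, $a=1\lc m$, $r\in\Z_{\ge0}$, coming from the factors $\Gm(z_a-t_{m+1})$, which lie to the right because $\Re\:z_a>\Re\:z_{m+1}$. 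The remaining candidate poles $t_{m+1}=z_a+l_a+h+r$, $a\le m$, are absent by Lemma \ref{lemF3}, and all other poles lie to the left. The residues of the first kind give $\Res_{\>t_{m+1}=\>z_{m+1}+l_{m+1}}F_{\:l_1\?\lc\:l_m}=F_{\:l_1\?\lc\:l_{m+1}}$ by \eqref{FlmR}, and after multiplication by $p^{\>z_{m+1}+l_{m+1}}$, summation over $l_{m+1}$, and absorption of the extra sign, they assemble into \eqref{Iems} and \eqref{Iem} for $m+1$.

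The hard part is to show that the residues of the second kind sum to zero. Fixing $a\in\{1\lc m\}$, I would look at the corresponding part of $(-1)^m\sum_{l_1\?\lc\:l_m}\Ie_{\:l_1\?\lc\:l_m}\,p^{\sum_{i=1}^m(z_i+l_i)}$; it is a sum over $(l_1\lc l_m,r)\in\Z_{\ge0}^{m+1}$ whose summand is built from $\Res_{\>t_{m+1}=\>z_a+r}F_{\:l_1\?\lc\:l_m}$, carries the weight $p^{\sum_{i=1}^m(z_i+l_i)+z_a+r}$, and is integrated over $t_{m+2}\lc t_k$. Using Lemma \ref{lemF2} with index $a$ and ``new index'' $r$ to rewrite $\Res_{\>t_{m+1}=\>z_a+r}F_{\:l_1\?\lc\:l_m}=-\:\Res_{\>t_{m+1}=\>z_a+l_a}F_{\:l_1\?\lc\:l_{a-1}\<,\>r,\>l_{a+1}\?\lc\:l_m}$, and then interchanging the summation indices $l_a$ and $r$ — a bijection of $\Z_{\ge0}^{m+1}$ that preserves the weight $p^{\sum(z_i+l_i)+z_a+r}$ — one sees that the summand is sent to its own negative; hence each orbit contributes zero, and so does the whole second-kind contribution. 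This completes the induction, and \eqref{IeResF} is then \eqref{Iems} specialized to $m=k$, with $\Ie_{\:l_1\?\lc\:l_k}=F_{\:l_1\?\lc\:l_k}$ rewritten via Lemma \ref{lemF1}.

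One technical point that must be handled throughout is the absolute convergence of the iterated series for $|p|<1$, which is what legitimizes the interchanges of summation with the residue operations and with the remaining contour integrals, and also the term-by-term closing of contours. This follows from the $1/l_i!$-type decay of the residues $\Res_{\>t_i=\>z_i+l_i}\Gm(z_i-t_i)$ together with the geometric factors $p^{\>l_i}$, plus estimate \eqref{ineq} and the reflection formula to control the arcs at infinity at every stage. I expect the two genuinely delicate points to be (i) the pole bookkeeping for $t_{m+1}$ after the residues at $t_1,\lc t_m$ have been taken, which is precisely what Lemmas \ref{lemF2} and \ref{lemF3} are designed to supply, and (ii) the index-swap cancellation of the second-kind residues described above.
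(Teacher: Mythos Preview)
Your proposal is correct and follows essentially the same approach as the paper's own proof: induction on $m$, closing the $t_{m+1}$-contour to the right, using Lemma~\ref{lemF3} to rule out the poles at $t_{m+1}=z_a+l_a+h+r$, and then using the antisymmetry from Lemma~\ref{lemF2} together with the $(l_a,r)$ index swap to kill the ``second-kind'' residues at $t_{m+1}=z_a+r$ for $a\le m$. Your write-up is in fact slightly more explicit than the paper's about the convergence and contour-closing justifications, but the structure and key lemmas are identical.
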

\begin{proof}
The statement is proved by induction on $\,m\,$. The base of induction at
\vv.04>
$\,m=1\,$ amounts to evaluating integral \eqref{Ie} with respect to $\,\:t_1\,$
\vv.06>
using the sum of residues in the half\:-plane
$\,\:{\Re\:t_1\?>\<\Re\:(z_1\<+h/2)}\,$. The relevant poles of the integrand
are of the form $\,\:t_1\<=z_1\?+l_1\>$, $\,\:l_1\?\in\Z_{\ge 0}\,$.

\vsk.2>
For the induction step, we evaluate integral \eqref{Iem} with respect
\vv.04>
to $\,\:t_{m+1}\,$ using the sum of residues in the half\:-plane
\vv.04>
$\,\:{\Re\:t_{m+1}\<>\Re\:(z_{m+1}\<+h/2)}\,$. By formula \eqref{Flm} and
Lemma \ref{lemF3}, the relevant poles of the integrand are of the form
\vv.04>
$\,\:t_{m+1}\<=z_i\?+l_{m+1}\>$, $\,i=1\lc m+1\,$,
$\,\:l_{m+1}\?\in\Z_{\ge 0}\,$. Plugging the obtained series for
$\,\:\Ie_{l_1\?\lc\:l_m}\?(\zz\:;h\:;p)\,$ into formula \eqref{Iems}
\vv.06>
and using formula \eqref{FlmR}\:, we get
\begin{align}
\label{Iemm}
\Ie(\zz\:;h\:;p)\,=\,{}& (-1)^{m+1}\!\<\sum_{l_1\?\lc\:l_{m+1}=\:0\!\!}^\infty
\!\Ie_{l_1\?\lc\:l_{m+1}}\?(\zz\:;h\:;p)\,
p^{\>\sum_{i=1}^{\:m+1}\<(\<z_i\<+\:l_i\<)}\:+{}\kern-2em
\\[6pt]
\notag
{}+\,{}& (-1)^{m+1}\,\:\sum_{i=1}^m\>\sum_{l_1\?\lc\:l_{m+1}=\:0\!\!}^\infty
\!\Ie^{\:(i)}_{l_1\?\lc\:l_{m+1}}\?(\zz\:;h\:;p)\,
p^{\>z_i+\:l_{m\<+\<1}\<+\:\sum_{i=1}^{\:m}\<(\<z_i\<+\:l_i\<)}\kern-2em
\\[-26pt]
\notag
\end{align}
where
\vvn.4>
\begin{align*}
& \Ie^{\:(i)}_{l_1\?\lc\:l_{m+1}}\?(\zz\:;h\:;p)\,={}
\\[4pt]
& \!\?{}=\,\frac1{\bigl(2\:\piit\>\bigr)^{\<k-m-1}}
\int\limits_{\Ibb_{m+1}\<(\<\zz;h)}\!\!\<p^{\>\sum_{j=m+2}^{\:k}\?t_j}\:
\Res_{\>t_{m+1}=\:z_i+\>l_{m+1}}\?F_{\:l_1\?\lc\:l_m}\?(\TT^{\ges m};\zz\:;h)
\;d\:t_{m+2}\:\ldots d\:t_k\,.\kern-.8em
\\[-14pt]
\end{align*}
The second term in the right-hand side of formula \eqref{Iemm} equals zero
because
\vvn.5>
\be
\Ie^{\:(i)}_{l_1\?\lc\:l_{m+1}}\?(\zz\:;h\:;p)\,=\:-\,\:
\Ie^{\:(i)}_{\:l_1\?\lc\:l_{i-1}\<,\>l_{m+1}\?,\>l_{i+1}\?\lc\:l_m\?,\>l_i}\?
(\zz\:;h\:;p)\kern-2em
\vv.5>
\ee
by Lemma \ref{lemF2}. This completes the induction step.
Proposition \ref{lemIeResF} is proved.
\end{proof}

Formula \eqref{Iems} for $\,m=k\,$ coincides with formula \eqref{IeResF}\:.
This completes the proof of formula \eqref{IeResF} and thus the proof of
equality \eqref{Psin0} for $\,\:|\:p\:|\<<1\,$.

\vsk.4>
Let $\,\:|\:p\:|>1\,$. Consider equality \eqref{Psin8}\:.
We convert it to the following form,
\vvn.3>
\begin{align}
\label{IeResh}
& \Ie(\zz\:;h\:;p)\,={}
\\[5pt]
\notag
\llap{${}={}$}\?\kern1.5em & \<\kern-1.5em \sum_{I\in\>\Ikn\!}\?
\frac{\Weod\<(\zz_I\<+\hb\:;\zz\:;h)}{\Edd(\zz_I;h)}\,
\sum_{\lb\in\:\Z_{\ge 0}^k\!}\Res_{\>\TT\:=\zz_I\<+\hb-\:\lb}
\bigl(\Phr\<(\TT\:;\zz\:;h)\>W\<(\TT\:;\gmb\:;\gmbb\:;h)\<\bigr)\,
p^{\>kh\:+\sum_{i\in I}\<z_i-\sum_{i=1}^{\:k}\?l_i}\:.\kern-.46em
\\[-16pt]
\notag
\end{align}

To this end, we use the definition of $\,\:\Psin(\zz\:;h\:;p)\,$,
see \eqref{Psin}\:, and the definition of
$\,\:\Psf_{\?Y^{\<\infty}_{\vp1}}\<(\zz\:;h\:;p)\,$ given by \eqref{Yofpi}\:,
\eqref{pimap}\:, \eqref{PsofX}\:, \eqref{Psof}\:, \eqref{Ff}\:.
To prove formula \eqref{IeRes}\:, we evaluate the integral
$\,\:\Ie(\zz\:;h\:;p)\,$ via the sum of residues ``\:to the left''
\<of the integration contour $\,\:\Ibb(\zz\:;h)\,$.

\vsk.3>
For $\,\:\si\<\in\<S_n\,$, denote $\,I_\si\<=\{\:\si(1)\lc\si(k)\:\}\<\in\Ikn\,$
\vv.07>
and $\,\:\zzs_{\<\si}\<+\hb=(z_{\si(1)}\?+h\lc z_{\si(k)}\?+h)\,$.
Denote by $\,S_n[\:k\:]\,$ the set of $\,\:\si\in S_n\>$ such that
\vv.07>
$\,\:\si(i)\ge i\>$ for all $\,\:i\le k\,$, and $\,\:\si(i)>\si(j)\,\:$
for all $\,i>j>k\,$.

\vsk.2>
Recall the function $\,\:\Ue(\TT\:;\zz\:;h)\,$, see \eqref{Ue}\:.
\vv.08>
By inspection of formulae \eqref{Ue}\:, \eqref{U}\:, we have
$\,\:\Ue(\zzs_{\<\si}\<+\hb\:;\zz\:;h)=0\,$ unless $\,\:\si(i)\ge i\>$
\vv.08>
for all $\,\:i\le k\,$. Clearly, $\,\:\zzs_{\<\si}\<+\hb\,$ does not depend on
the values $\,\:\si(j)\,$ for $\,j>k\,$ Thus formulae \eqref{We}\:, \eqref{Weo}
yield
\vvn.3>
\beq
\label{WUsi}
\Weod\<(\zz_I\<+\hb\:;\zz\:;h)\,=\!
\sum_{\satop{\si\in S_n\<[\:k\:]\!\!\!\!}{I=I_\si}}
\Ue(\zzs_{\<\si}\<+\hb\:;\zz\:;h)\,.\kern-2em
\eeq

\vsk.1>
The rest of the proof of equality \eqref{Psin8} for $\,\:|\:p\:|>1\,$
\vv.07>
is analogous to that of equality \eqref{Psin0} for $\,\:|\:p\:|\<<1\,$, but
involves more combinatorics due to the sum over $\,\:S_n[\:k\:]\,$ in
formula \eqref{WUsi}\:.

\vsk.3>
Recall the $\>p\>$-independent part $\,F(\TT\:;\zz\:;h)\,$
\vv.07>
of the integrand in \eqref{Ie}\:, see \eqref{F}\:. For $\,m\le k\,$, denote
$\,\:\TT^{\les m}\!=(t_1\lc t_{m-1})\,$. The sequence $\,\:\TT^{\les 1}$
\vv.06>
is empty and will be used for convenience. For $\,\:\si\<\in\<S_n\,$
and nonnegative integers $\,\:l_m\lc l_k\>$, set
\vvn.5>
\beq
\label{FlmRh}
F^{\:\si}_{l_m\?\lc\:l_k}\?(\TT^{\les m};\zz\:;h)\,=\,
\Res_{\>t_m=\:z_{\si\<(\?m\?)}\<+\:h-\>l_m}\ldots\>
\Res_{\>t_k=\:z_{\si\<(\<k\<)}\<+\:h-\>l_k}\?F(\TT\:;\zz\:;h)\,.\kern-1em
\vv.5>
\eeq
The function $\,F^{\>\si}_{l_1\?\lc\:l_k}\?(\TT^{\les 0};\zz\:;h)\>$ does not
\vv.06>
depend on $\,\:\TT\,$ and we can omit the argument $\,\:\TT^{\les 0}\<$.

\begin{lem}
\label{lemFh0}
For $\,\:\si\<\in\<S_n\:$ and $\;\lb\in\<\Z_{\ge 0}\,$, we have
\vvn.5>
\be
F^{\:\si}_{l_1\?\lc\:l_k}\?(\zz\:;h)\,=\,
\frac{\Ue(\zzs_{\<\si}\<+\hb\:;\zz\:;h)}{\Edd(\zzs_{\<\si}\<+\hb\:;h)}\,
\Res_{\>\TT\:=\zzss_{\?\si}\?+\hb-\:\lb}
\bigl(\Phr\<(\TT\:;\zz\:;h)\>W\<(\TT\:;\gmb\:;\gmbb\:;h)\<\bigr)\,.\kern-.08em
\vv.3>
\ee
\end{lem}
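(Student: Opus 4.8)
The plan is to follow the pattern of the proof of Lemma~\ref{lemF1}\:, the only change being that the Gamma factors of $\,\:\Phr\,$ which are singular on the residue locus are now $\,\Gm(t_i\<-z_{\si(i)}\<-h)\,$ instead of $\,\Gm(z_i\<-t_i)\,$. Set $\,\bss=\zzss_{\?\si}\?+\hb-\:\lb\,$, so that the locus in question is $\,\:t_i=z_{\si(i)}+h-l_i\,$, $\,i=1\lc k\,$. First I would record two facts. (i)~By \eqref{Ue}\:, \eqref{U}\:, \eqref{ET}\:, the function $\,\:\Ue(\TT\:;\zz\:;h)/\Edd(\TT\:;h)\,$ is a rational function of $\,\:e^{\:2\pii\,t_1}\?\lc e^{\:2\pii\,t_k}\,$, hence it is $1$-periodic in each $\,\:t_i\,$ separately and, for generic $\,(\zz,h)\,$, holomorphic in a neighbourhood of $\,\bss\,$; and $\,\:W\<(\TT\:;\gmb\:;\gmbb\:;h)\,$ is a polynomial, so it is holomorphic everywhere. (ii)~By \eqref{Phr}\:, one can write
\be
\Phr\<(\TT\:;\zz\:;h)\,=\,\Tilde G(\TT\:;\zz\:;h)\,\prod_{i=1}^k\,\Gm(t_i\<-z_{\si(i)}\<-h)\,,
\ee
where $\,\Tilde G\,$ is holomorphic in a neighbourhood of $\,\bss\,$ for generic $\,(\zz,h)\,$: the factors $\,\prod_{j\ne i}\bigl(\Gm(t_j\<-t_i)\,\Gm(1+t_i\<-t_j\<-h)\bigr)^{-1}$ are entire in $\,\:\TT\,$, and among $\,\prod_{a=1}^n\Gm(z_a\<-t_i)\,\Gm(t_i\<-z_a\<-h)\,$ only $\,\Gm(t_i\<-z_{\si(i)}\<-h)\,$ is singular near the locus, with a simple pole in $\,\:t_i\,$ at $\,\:t_i=z_{\si(i)}+h-l_i\,$.

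Granting (i) and (ii), $\,\:F(\TT\:;\zz\:;h)\,$ of \eqref{F} is the product of $\,\bigl(\Ue(\TT\:;\zz\:;h)/\Edd(\TT\:;h)\bigr)\>\Tilde G(\TT\:;\zz\:;h)\>W\<(\TT\:;\gmb\:;\gmbb\:;h)\,$, holomorphic in a neighbourhood of $\,\bss\,$, and $\,\prod_{i=1}^k\Gm(t_i\<-z_{\si(i)}\<-h)\,$, which has a single simple pole in each variable $\,\:t_i\,$ and no coupling between the variables. Hence, exactly as in Lemma~\ref{lemF1}\:, the iterated residue $\,\:\Res_{\>\TT\:=\bss}F\,$ does not depend on the order in which the single residues are taken and equals the value at $\,\bss\,$ of the holomorphic factor times $\,\:\Res_{\>\TT\:=\bss}\prod_{i=1}^k\Gm(t_i\<-z_{\si(i)}\<-h)\,$; recombining $\,\:\Tilde G\>\prod_i\Gm(t_i\<-z_{\si(i)}\<-h)=\Phr\,$ gives
\be
\Res_{\>\TT\:=\bss}F(\TT\:;\zz\:;h)\,=\,
\frac{\Ue(\bss\:;\zz\:;h)}{\Edd(\bss\:;h)}\;
\Res_{\>\TT\:=\bss}\bigl(\Phr\<(\TT\:;\zz\:;h)\>W\<(\TT\:;\gmb\:;\gmbb\:;h)\bigr)\,.
\ee
Since $\,\:\Ue/\Edd\,$ is $1$-periodic in each $\,\:t_i\,$ and each $\,\:l_i\in\Z_{\ge0}\,$, one has $\,\:\Ue(\bss\:;\zz\:;h)/\Edd(\bss\:;h)=\Ue(\zzs_{\<\si}\<+\hb\:;\zz\:;h)/\Edd(\zzs_{\<\si}\<+\hb\:;h)\,$, and, recalling definition \eqref{FlmRh} of $\,F^{\:\si}_{l_1\?\lc\:l_k}\?(\zz\:;h)=\Res_{\>\TT\:=\bss}F\,$, this is the asserted identity. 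For the remaining values of $\,(\zz,h)\,$ where both sides are defined, the identity then follows by analytic continuation.

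The step I expect to be the main obstacle is establishing (i) and (ii): one must check that neither the zeros of $\,\Edd\,$ (on the hyperplanes $\,h+t_i-t_j\in\Z\,$) nor the poles of the factors $\,\Gm(z_a\<-t_i)\,$ and $\,\Gm(t_i\<-z_a\<-h)\,$ with $\,a\ne\si(i)\,$ meet the locus $\,\:t_i=z_{\si(i)}+h-l_i\,$. This is the same bookkeeping as in the proof of Lemma~\ref{lemF1}\:, but now one has to make sure that twisting the locus by $\,\si\,$ and shifting it by $\,h\,$ creates no new coincidences among the shifted arguments $\,z_{\si(i)}+h\,$; this holds on a dense open subset of the relevant parameter domain, after which the full identity propagates by continuity.
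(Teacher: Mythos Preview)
Your proof is correct and follows essentially the same approach as the paper's. In fact, your factorization $\Phr=\Tilde G\,\prod_{i=1}^k\Gm(t_i-z_{\si(i)}-h)$ is more accurate than the paper's displayed formula (which writes $z_i$ in place of $z_{\si(i)}$, apparently a typo); note also that, unlike in Lemma~\ref{lemF1}, the order of residues in \eqref{FlmRh} with $m=1$ already matches that in \eqref{Res}, so your order-independence remark, while true, is not strictly needed here.
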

\begin{proof}
The functions $\,\:\Ue(\TT\:;\zz\:;h)/\Edd(\TT\:;h)\,$ and
\vv.06>
$\,\:W\<(\TT\:;\gmb\:;\gmbb\:;h)\,$ are holomorphic in a neighbourhood
of the locus $\,\:\TT=\zzs_{\<\si}\<+\hb-\lb\,$, and the function
$\,\:\Phr\<(\TT\:;\zz\:;h)\,$ has the form
\vvn.2>
\be
\Phr\<(\TT\:;\zz\:;h)\,=\,G(\TT\:;\zz\:;h)\,\:
\prod_{\:i=1}^{\:k}\,\Gm(t_i\<-z_i\<-h)\,,\kern-1.4em
\vv.3>
\ee
where the function $\,G(\TT\:;\zz\:;h)\,$ is holomorphic in a neighbourhood
\vv.07>
of $\,\:\TT=\zzs_{\<\si}\<+\hb-\lb\,$. Furthermore, the function
$\,\:\Ue(\TT\:;\zz\:;h)/\Edd(\TT\:;h)\,$ is $\,1$-\:periodic in each of
$\,\:t_1\lc t_k\,$. Hence, the statement follows.
\end{proof}
Notice that $\,\:\Edd(\zzs_{\<\si}\<+\hb\:;h)=\Edd(\zz_{\?I_\si}^{}\<;h)\,$.
\vv.07>
Then by formula \eqref{WUsi} and Lemma \ref{lemFh0}, equality \eqref{IeResh}
takes the form
\vvn-.3>
\beq
\label{IeResFh}
\Ie(\zz\:;h\:;p)\,=\?\sum_{l_1\?\lc\:l_k=\:0\!\!}^\infty\;\>
\sum_{\si\in S_n\<[\:k\:]\!\!\!\!}\,F^{\:\si}_{l_1\?\lc\:l_k}\?(\zz\:;h)\,
p^{\>\sum_{i=1}^k\<(\<z_{\si\<(\<i\<)}\<+\:h-\:l_i\<)}\:.\kern-2em
\eeq

\vsk.4>
In what follows, we will prove formula \eqref{IeResFh}\:.
\vv.04>
We start the proof with technical lemmas.
The main part of the proof is given by Proposition \ref{lemIeResFh}.
Formula \eqref{IeResFh} coincides with formula \eqref{Iemsh} for $\,m=1\,$.

\begin{lem}
\label{lemFh2}
For $\,\:\si\<\in\<S_n\,$, nonnegative integers $\,\:l_{m-1}\lc l_k\>$,
$m\ge 2\,$, and any $\,\:i=m\lc k\,$, we have
\vvn.2>
\be
\Res_{\>t_{m-1}=\:z_{\si\<(\<i\<)}\<+\:h-\>l_{m-1}}\?
F^{\:\si}_{l_m\?\lc\:l_k}\?(\TT^{\les m};\zz\:;h)
\,=\:-\:\Res_{\>t_{m-1}=\:z_{\si\<(\<i\<)}\<+\:h-\>l_i}\<
F^{\:\si}_{l_m\?\lc\:l_{i-1}\<,\>l_{m-1}\?,\>l_{i+1}\?\lc\:l_k}\?
(\TT^{\les m};\zz\:;h)\,.\kern-.1em
\vv.5>
\ee
\end{lem}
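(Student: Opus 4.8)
The argument is the ``to the left'' counterpart of the proof of Lemma \ref{lemF2}, and I would organize it in the same two steps.

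First I would reduce the claim to a statement about the single pair of variables $t_{m-1}$ and $t_i$. Fix $i\in\{m,\dots,k\}$ and note that $m-1<m\le i$. Both sides of the asserted equality are iterated residues of $F(\TT;\zz;h)$ in the variables $t_{m-1},t_m,\dots,t_k$, taken at loci of the form $z_{\si(a)}+h-l_a$; the two sides use the same locus in every variable except $t_{m-1}$ and $t_i$, and for that pair one side residues $t_i$ at $z_{\si(i)}+h-l_i$ and then $t_{m-1}$ at $z_{\si(i)}+h-l_{m-1}$, while the other side interchanges these two points. Since for generic $\zz,h$ the relevant poles of $F$ are simple and situated at pairwise distinct points, the residues in the variables $t_m,\dots,t_k$ other than $t_i$ commute with one another and with the residues in $t_{m-1}$ and $t_i$, as in the computation in the proof of Lemma \ref{lemF1} where residues in the $t_j$ are freely permuted. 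Hence it suffices, with all other variables held fixed, to compare the two iterated residues of $F$ in $(t_{m-1},t_i)$ at the ``crossed'' and ``uncrossed'' pairs of points, and to show that they differ by the factor $-1$.

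Second, I would establish the relevant factorization of $F$. Inspecting \eqref{Igd} --- together with \eqref{U} and \eqref{Ue}, through which $\Ue$ enters $F$ --- and using the symmetry of $\Phr\,W$ in $\TT$ that follows from formula \eqref{Phr} and Lemma \ref{symW}, one finds that the dependence of $F(\TT;\zz;h)$ on the pair $t_{m-1},t_i$ factors into the product of: a factor symmetric under $t_{m-1}\leftrightarrow t_i$ and holomorphic in a neighbourhood of each of the two relevant configurations of $(t_{m-1},t_i)$; a factor $1$-periodic in each of $t_{m-1},t_i$ and holomorphic near $t_{m-1}=t_i=z_{\si(i)}+h$; the antisymmetric factor $e^{\,\pii\,(t_{m-1}-t_i)}\,(t_i-t_{m-1})$, the same one occurring in Lemma \ref{lemF2}; and the pole-carrying factor $\Gm(t_{m-1}-z_{\si(i)}-h)\,\Gm(t_i-z_{\si(i)}-h)$, accounting for the poles $t_{m-1}-z_{\si(i)}-h\in\Z_{\le0}$ and $t_i-z_{\si(i)}-h\in\Z_{\le0}$ from \eqref{poles}. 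Granting this, the conclusion follows verbatim as in Lemma \ref{lemF2}: the two iterated residues extract the same constants from the $\Gm$-factors; the symmetric and $1$-periodic factors take equal values in the crossed and uncrossed cases, and so does the exponential prefactor since the two labels differ by an integer; and the antisymmetric factor $e^{\,\pii\,(t_{m-1}-t_i)}\,(t_i-t_{m-1})$ produces the sign $-1$.

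The step I expect to be the main obstacle is reading off this factorization from \eqref{Igd}. The difficulty, absent in Lemma \ref{lemF2}, is that $t_{m-1}$ and $t_i$ enter the ordered products in \eqref{Igd} and \eqref{U} asymmetrically --- $t_{m-1}$ as an ``inner'' index, $t_i$ as an ``outer'' one --- so neither the symmetric nor the $1$-periodic part is visible term by term. To make them visible I would combine the mixed factors $(t_i-t_{m-1})$ and $\Gm(t_{m-1}-t_i+h)/\Gm(1+t_{m-1}-t_i-h)$, invoke the symmetrization identity of Lemma \ref{symW} for $W$ together with the analogous identity for $\Ue$ (equivalently, for the weight function $U$ of \eqref{U}), and use the reflection formula $\Gm(x)\Gm(1-x)=\pi/\sin\pi x$ to recast products of $\Gm$'s as $1$-periodic functions. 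Once this bookkeeping is carried out, the remaining computation is routine.
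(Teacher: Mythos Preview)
Your proposal is correct and follows essentially the same approach as the paper. The paper's proof states directly that $F(\TT;\zz;h)$, as a function of $t_{m-1},\dots,t_k$, factors as
\[
G^{\:\si}_i\,H^{\:\si}_i\,e^{\:\pii\>(t_{m-1}-t_i)}\,(t_i-t_{m-1})\,
\prod_{j=m}^k \Gm(t_j-z_{\si(j)}-h)\,\Gm(t_{m-1}-z_{\si(j)}-h)\,,
\]
with $G^{\:\si}_i$ symmetric in $t_{m-1},t_i$ and holomorphic near both residue configurations, and $H^{\:\si}_i$ $1$-periodic in $t_{m-1},\dots,t_k$ and holomorphic near the relevant point --- exactly the ingredients you isolate in your Step~2. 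The only organizational difference is that the paper works with this factorization in all variables $t_{m-1},\dots,t_k$ at once, so your preliminary Step~1 (commuting the residues in the ``inert'' variables to reduce to the pair $(t_{m-1},t_i)$) is not needed: the full product of $\Gm$-factors carries all the simple poles simultaneously, and the sign comes from the antisymmetric factor just as you describe.
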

\begin{proof}
For each $\,\:i=m\lc k\,$, the function $\,F(\TT\:;\zz\:;h)\,$ as a function
of $\,\:t_{m-1}\lc\:t_k\>$ has the form
\vvn-.2>
\be
\kern-.4em
G^{\:\si}_i\<(\TT\:;\zz\:;h)\,H^{\:\si}_i\<(\TT\:;\zz\:;h)\,
e^{\:\pii\>(t_{m-1}-\:t_i\<)}(t_i\<-t_{m-1})\:\,\prod_{j=m}^k\,
\Gm(t_j\<-z_{\si(j)}\<-h)\,\Gm(t_{m-1}\<-z_{\si(j)}\<-h)\,,\kern-.7em
\vv.2>
\ee
where the function $\,G^{\:\si}_i\<(\TT\:;\zz\:;h)\,$ is symmetric in
$\,\:t_{m-1}\:,\:t_i\,$, holomorphic in a neighbourhood of the point
\be
t_{m-1}\<=z_{\si(i)}\<+h-l_{m-1}\,,\qquad
t_j\<=z_{\si(j)}\<+h-l_j\,\quad j=m\lc k\,,\kern-1em
\vv.7>
\ee
and holomorphic in a neighbourhood of the point
\vvn.4>
\begin{gather*}
t_{m-1}\<=z_{\si(i)}\<+h-l_i\,,\qquad t_i\<=z_{\si(i)}\<+h-l_{m-1}\,,
\kern-1,2em
\\[4pt]
t_j\<=z_{\si(j)}\<+h-l_j\,,\qquad j=m\lc k\,,\quad j\ne i\,,\kern-1.2em
\\[-14pt]
\end{gather*}
while $\,H^{\:\si}_i\<(\TT\:;\zz\:;h)\,$ is a $\,1$-\:periodic function of
\vv.04>
$\,\:t_{m-1}\lc\:t_k\>$ holomorphic in a neighbourhood of the point
$\,\:t_{m-1}\<=z_{\si(i)}\<+h\,$, $\,\:t_j=z_{\si(j)}\<+h\,$, $\,j=m\lc k\,$.
This implies the statement of Lemma \ref{lemFh2}.
\end{proof}

\begin{lem}
\label{lemFh3}
For $\,\:\si\<\in\<S_n[\:k\:]\:$ and nonnegative integers $\,\:l_m\lc l_k\>$,
\vv.08>
$\,m>1\,$, the function $\,F^{\:\si}_{l_m\?\lc\:l_k}\?(\TT^{\les m};\zz\:;h)\>$
is regular at the hyperplanes
$\,\:t_{m-1}\<-z_{\si(i)}\<+l_i\<\in\<\Z_{\le0}\,$, $\,\:i=m\lc k\,$.

\end{lem}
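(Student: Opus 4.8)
The plan is to prove Lemma~\ref{lemFh3} by a direct computation that parallels the proof of Lemma~\ref{lemF3}, with ``residues to the right'' replaced by ``residues to the left'' and with the shift $+l_i$ replaced by $-l_i$ relative to the base point $z_{\si(i)}+h$. First I would write $F^{\:\si}_{l_m\lc l_k}(\TT^{\les m};\zz\:;h)$ explicitly, the way \eqref{Flm} was obtained from \eqref{Igd}: start from the $p$-independent part $F(\TT\:;\zz\:;h)$ of the integrand, whose form is exhibited in \eqref{Igd}, and take the iterated residue $\Res_{t_m=z_{\si(m)}+h-l_m}\ldots\Res_{t_k=z_{\si(k)}+h-l_k}$ at the simple poles coming from the factors $\Gm(t_j\<-z_{\si(j)}\<-h)$, which are present because $\si(j)\ge j$ for $\si\in S_n[\:k\:]$. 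Collecting the result one gets
\[
F^{\:\si}_{l_m\lc l_k}(\TT^{\les m};\zz\:;h)\,=\,G(\TT\:;\zz\:;h)\;
W\<(\TT\:;\gmb\:;\gmbb\:;h)|_{\,t_m=z_{\si(m)}+h-l_m\lc t_k=z_{\si(k)}+h-l_k}^{}\,
\prod_{j=m}^{k}\frac{\Gm(t_{m-1}\<-z_{\si(j)}\<+l_j)}{\Gm(1+t_{m-1}\<-z_{\si(j)})}\,,
\]
where, for every $i=m\lc k$ and $r\in\Z_{\ge0}$, the function $G(\TT\:;\zz\:;h)$ as a function of $t_{m-1}$ is holomorphic in a neighbourhood of the point $t_{m-1}=z_{\si(i)}-l_i-r$: all the remaining $\Gm$-factors and rational factors of $F^{\:\si}_{l_m\lc l_k}$ that contain $t_{m-1}$ have their $t_{m-1}$-poles on hyperplanes of the form $t_{m-1}-z_a-h\in\Z_{\le0}$ or $t_{m-1}-z_a\in\Z_{\ge0}$, which are disjoint from the loci in question for $\zz,h$ in general position. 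In this computation the factor $\Gm(t_{m-1}\<-z_{\si(j)}\<+l_j)$ comes from $\Gm(t_{m-1}\<-t_j\<+h)$ in the $(m-1,j)$-term of the double product in $F$ after the substitution $t_j\mapsto z_{\si(j)}+h-l_j$, and the factor $\Gm(1+t_{m-1}\<-z_{\si(j)})^{-1}$ comes from the $\si(j)$-th factor of $\prod_{a=m}^{n}\Gm(t_{m-1}\<-z_a\<-h)/\Gm(1+t_{m-1}\<-z_a)$ appearing at index $m-1$ in $F$, which is available because $\si(j)\ge j\ge m$.

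The key point, exactly as in Lemma~\ref{lemF3}, is that for each $i=m\lc k$ the ratio $\Gm(t_{m-1}\<-z_{\si(i)}\<+l_i)/\Gm(1+t_{m-1}\<-z_{\si(i)})$ --- the only factor in the displayed product that can be singular on the hyperplanes $t_{m-1}\<-z_{\si(i)}\<+l_i\<\in\<\Z_{\le0}$, since the $j\ne i$ factors are polynomial or have their poles at $t_{m-1}=z_{\si(j)}$ --- is the polynomial $(t_{m-1}\<-z_{\si(i)}\<+l_i\<-1)\ldots(t_{m-1}\<-z_{\si(i)}\<+1)$ when $l_i\ge1$, and equals $(t_{m-1}\<-z_{\si(i)})^{-1}$ when $l_i=0$. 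Hence for $l_i\ge1$ the function $F^{\:\si}_{l_m\lc l_k}$ is regular on those hyperplanes, and for $l_i=0$ its only possible pole there is a simple pole at $t_{m-1}=z_{\si(i)}$. But $l_i=0$ means the residue in $t_i$ was taken at $t_i=z_{\si(i)}+h$, so at $t_{m-1}=z_{\si(i)}$ the numerator $W\<(\TT\:;\gmb\:;\gmbb\:;h)$ is being evaluated with $t_{m-1}=z_{\si(i)}$ and $t_i=z_{\si(i)}+h$, which gives $0$ by Lemma~\ref{Wth}. Thus the prospective pole cancels, and $F^{\:\si}_{l_m\lc l_k}$ is regular at all the stated hyperplanes, which proves the lemma.

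The only genuine work is the bookkeeping needed to extract the displayed form from \eqref{Igd}: one must identify, for fixed $i\ge m$, exactly which factors of $F$ involve $t_{m-1}$ after the substitutions $t_j\mapsto z_{\si(j)}+h-l_j$ for $j=m\lc k$, pair off the two $\Gm$-factors that combine into the ratio above, and check that everything else is holomorphic near $t_{m-1}=z_{\si(i)}-l_i-r$. This is entirely analogous to the argument in the proof of Lemma~\ref{lemF3}; no new idea is needed beyond the vanishing of $W$ furnished by Lemma~\ref{Wth}.
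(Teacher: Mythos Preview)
Your proof is correct and follows essentially the same approach as the paper: you extract the explicit form $G\cdot W|_{\ldots}\cdot\prod_{j=m}^{k}\Gm(t_{m-1}-z_{\si(j)}+l_j)/\Gm(1+t_{m-1}-z_{\si(j)})$ from \eqref{Igd}, identify the ratio of Gamma\:-functions as a polynomial when $l_i\ge1$ and as $(t_{m-1}-z_{\si(i)})^{-1}$ when $l_i=0$, and then invoke Lemma~\ref{Wth} to kill the remaining simple pole. The paper also records the full intermediate formula (the analogue of \eqref{Flm}) before isolating this product, but your bookkeeping description of where each factor comes from is accurate and suffices.
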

\begin{proof}
By formulae \eqref{Igd}\:, \eqref{FlmRh}\:,
\vvn.4>
\begin{alignat}2
\label{Flmh}
& \;F^{\:\si}_{l_m\?\lc\:l_k}\?(\TT^{\les m};\zz\:;h)\,={}
\\[9pt]
\notag
& {}=\,C^{\:\si}_{l_m\?\lc\:l_k}\?(\zz\:;h)\;
e^{\:\pii\>\sum_{j=1}^{\:m-1}\?t_j}
\,\:W\<(\TT\:;\gmb\:;\gmbb\:;h)|
_{\:t_m=\:z_{\si\<(\?m\?)}\<+\:h-\>l_m\<\lc\>t_k=\:z_{\si\<(\<k\<)}\<+\:h-\>l_k}
^{}\:\times{}
\\[8pt]
\notag
& \:{}\times\>\prod_{i=1}^{m-1}\<\biggl(\:\Gm(z_i\<-t_i)\,\Gm(t_i\<-z_i\<-h)
\,\:\prod_{j=1}^{i-1}\>\frac{(t_i\<-t_j)\,\Gm(t_j\<-t_i\<+h)\,\Gm(z_j\<-t_i)}
{\Gm(1+t_j\<-t_i\<-h)\,\Gm(1+z_j\<-t_i\<+h)}\,\times{}\kern-.4em &&
\\[4pt]
\notag
&& \llap{$\dsize{}\times\>\prod_{j=m}^k\,
\frac{(z_{\si(j)}\?+h-l_j\?-t_i)\,\Gm(t_i\<-z_{\si(j)}\?+l_j)}
{\Gm(1+t_i\<-z_{\si(j)}\?+l_j\?-2\:h)}\:
\prod_{j=i+1}^n\:\frac{\Gm(t_i\<-z_j\<-h)}{\Gm(1+t_i\<-z_j)}
\>\biggr)\>$,}\!\kern-.4em &
\\[-12pt]
\notag
\end{alignat}
where $\,\:C^{\:\si}_{l_m\?\lc\:l_k}\?(\zz\:;h)\,$ does not depend
\vv.1>
on $\,\:\TT\,$. Since $\,\:\si(i)\ge i\,$ for $\,i\le k\,$, the function
$\,F^{\:\si}_{l_m\?\lc\:l_k}\?(\TT^{\les m};\zz\:;h)\,$ has the form
\vvn-.2>
\be
G(\TT\:;\zz\:;h)\,W\<(\TT\:;\gmb\:;\gmbb\:;h)|
_{\:t_m=\:z_{\si\<(\?m\?)}\<+\:h-\>l_m\<\lc\>t_k=\:z_{\si\<(\<k\<)}\<+\:h-\>l_k}
^{}\,\prod_{i=m}^k\,
\frac{\Gm(t_{m-1}\<-z_{\si(i)}\?+l_i)}{\Gm(1+t_{m-1}\<-z_{\si(i)})}\;,
\kern-1.2em
\vv.3>
\ee
where $\,G(\TT\:;\zz\:;h)\,$ as a function of $\,\:t_{m-1}\>$ is holomorphic
\vv.07>
in a neighbourhood of the point $\,\:t_{m-1}\<=z_{\si(i)}\?-l_i\<-r\,\:$
for any $\,\:i=m\lc k\,$, \:and $\>r\?\in\<\Z_{\ge0}\,$. The ratio of
Gamma\:-\:functions
\vvn.4>
\be
\frac{\Gm(t_{m-1}\<-z_{\si(i)}\?+l_i)}{\Gm(1+t_{m-1}\<-z_{\si(i)})}\kern-2em
\vv.4>
\ee
is a polynomial if $\,\:l_i\ge 1\,$, and equals
$\,(t_{m-1}\<-z_{\si(i)})^{-1}\:$ if $\,\:l_i\<=0\,$. In the last case,
\vv.04>
the function $\,F^{\:\si}_{l_m\?\lc\:l_k}\?(\TT^{\les m};\zz\:;h)\,$
does not have a pole at $\,\:t_{m-1}\<=z_{\si(i)}\,$ due to Lemma \ref{Wth}.
Lemma \ref{lemFh3} is proved.
\end{proof}

Recall the integral $\,\:\Ie(\zz\:;h\:;p)\,$, see \eqref{Ie}\:,
and the functions $\,F^{\:\si}_{l_m\?\lc\:l_k}\?(\TT^{\les m};\zz\:;h)\,$,
\vv.07>
see \eqref{FlmRh}\:. Denote by $\,S_n[\:k,m\:]\,$ the set of
$\,\:\si\<\in\<S_n\>$ such that $\,\:\si(i)=i\,\:$ if $\,\:i<m\,$,
$\,\:\si(j)\ge j\,\:$ if $\,\:m\le j\le k\,$, and $\,\:\si(q)<\si(r)\,\:$ if
$\,\:k<q<r\,$. For instance, $\,S_n[\:k,1\:]=S_n[\:k\:]\,$.

\begin{prop}
\label{lemIeResFh}
Let $\,\:|\:p\:|>1\,$. For any $\,m=1\lc k\,$, we have
\vvn.4>
\beq
\label{Iemsh}
\Ie(\zz\:;h\:;p)\,=\!\<
\sum_{l_m\?\lc\:l_k=\:0\!\!}^\infty\;\>\sum_{\si\in S_n\<[\:k,\:m\:]\!\!\!\!}
\,\Ie^{\:\si}_{l_m\?\lc\:l_k}\?(\zz\:;h\:;p)\,
p^{\>\sum_{i=m}^k\<(\<z_{\si\<(\<i\<)}\<+\:h-\:l_i\<)}\:.\kern-2.3em
\vv.4>
\eeq
where $\,\:\Ie^{\:\si}_{l_1\?\lc\:l_k}\?(\zz\:;h\:;p)=
F^{\:\si}_{l_1\?\lc\:l_k}\?(\zz\:;h)\:$ does not depend on $\,p\,$,
and for $\,m>1\,$,
\vvn.4>
\beq
\label{Iemh}
\Ie^{\:\si}_{l_m\?\lc\:l_k}\?(\zz\:;h\:;p)\,=\,
\frac1{\bigl(2\:\piit\>\bigr)^{\<m-1}}
\int\limits_{\Ibt_m\<(\<\zz;h)}\!\!\<p^{\>\sum_{j=m+1}^{\:k}\?t_j}\:
F^{\:\si}_{l_m\?\lc\:l_k}\?(\TT^{\les m};\zz\:;h)\;d\:t_1\:\ldots d\:t_{m-1}\,,
\kern-.66em
\eeq
the integration contour $\,\:\Ibt_{\:m}(\zz\:;h)\,$ being such that
\vv.08>
$\,\:{\Re\:t_i\<=\Re\:(z_i\<+h/2)}\,$ and $\,\,\Im\:t_i\>$ runs
from $\,-\:\infty\,$ to $\,\infty\,$ for all $\,i=1\lc m-1\,$.
\end{prop}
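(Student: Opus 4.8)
The plan is to establish formula \eqref{Iemsh} for all $m=1\lc k$ by induction on $m$, run in \emph{decreasing} order from $m=k$ down to $m=1$; the case $m=1$ is exactly \eqref{IeResFh}, which together with \eqref{WUsi} and Lemma \ref{lemFh0} is equivalent to \eqref{IeResh} and hence yields \eqref{Psin8}. As agreed at the start of Section \ref{secpfth}, I take $\si$ to be the identity permutation, so $(\zz,h)\in\Lin$ and $|p|>1$; the permutations that appear below in $S_n[\:k,m\:]$ are the combinatorial labels recording which residues have already been collected.

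For the base case $m=k$, I would start from the integral \eqref{Ie} and evaluate its innermost integration, the one over $t_k$, by pushing the contour off to the left, i.e.\ replacing the $t_k$\<-integral by $2\:\piit$ times the sum of residues of the integrand in the half-plane $\Re t_k<\Re(z_k+h/2)$. Since $|p|>1$ the factor $p^{\sum_j t_j}$ decays as $\Re t_k\to-\infty$, and together with the Gamma-quotient bound \eqref{ineq} this makes the arc at infinity negligible, exactly as in Proposition \ref{conv}. From the explicit form \eqref{Igd} of the integrand and the list \eqref{poles} of its poles, the poles lying in that half-plane are precisely $t_k=z_j+h-l_k$ with $j\in\{k,\dots,n\}$ and $l_k\in\Z_{\ge0}$; the remaining families of poles sit in the opposite half-plane because of the inequalities $\Re(z_i-z_{i+1}+h)>0$ defining $\Lin$. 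Recognizing the residue at $t_k=z_j+h-l_k$ as $F^{\si}_{l_k}$ for the permutation $\si\in S_n[\:k,k\:]$ determined by $\si(k)=j$, see \eqref{FlmRh} and \eqref{F}, gives \eqref{Iemsh} at $m=k$.

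For the induction step I assume \eqref{Iemsh} at some $m$ with $1<m\le k$, and in each summand \eqref{Iemh} I evaluate the outermost remaining integration, over $t_{m-1}$, again by moving the contour to the left. By the form \eqref{Flmh} of $F^{\si}_{l_m\dots l_k}$, and using Lemma \ref{lemFh3} to discard the apparent poles on the hyperplanes $t_{m-1}-z_{\si(i)}+l_i\in\Z_{\le0}$ (the case $l_i=0$ absorbed by Lemma \ref{Wth}), the poles that contribute are $t_{m-1}=z_j+h-l_{m-1}$ with $j\in\{m-1\}\cup\{m,\dots,n\}$ and $l_{m-1}\in\Z_{\ge0}$. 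Their residues split into two groups. If $j\notin\{\si(m),\dots,\si(k)\}$ (which always includes $j=m-1$), the residue is, by \eqref{FlmRh}, the term indexed by the permutation $\si'\in S_n[\:k,m-1\:]$ with $\si'(m-1)=j$ and $\si'(i)=\si(i)$ for $m\le i\le k$; collecting all admissible $j$ and summing over $l_{m-1},\dots,l_k$ reproduces \eqref{Iemsh} at $m-1$, the powers of $p$ matching because $(z_j+h-l_{m-1})+\sum_{i=m}^k(z_{\si(i)}+h-l_i)=\sum_{i=m-1}^k(z_{\si'(i)}+h-l_i)$. If instead $j=\si(i)$ for some $i$ with $m\le i\le k$, then by Lemma \ref{lemFh2} the corresponding residue changes sign under the interchange $l_{m-1}\leftrightarrow l_i$, so its total contribution cancels after summation over $l_{m-1},l_m,\dots,l_k\in\Z_{\ge0}$. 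This proves \eqref{Iemsh} at $m-1$ and completes the induction; at $m=1$ it is \eqref{IeResFh}.

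The step I expect to be the real work is the bookkeeping in the induction step: pinning down exactly which poles fall to the left of $\Ibt_m(\zz;h)$, checking that Lemma \ref{lemFh3} eliminates precisely the would-be spurious ones, verifying that the ``diagonal'' residues at $t_{m-1}=z_{\si(i)}+h-l_{m-1}$ cancel pairwise by the antisymmetry of Lemma \ref{lemFh2}, and confirming that the surviving residues reassemble into the sum over $S_n[\:k,m-1\:]$ with the exponents of $p$ predicted by \eqref{Iemsh}. By contrast the analytic points --- convergence of the integrals \eqref{Iemh} for $(\zz,h)\in\Lin$, $|p|>1$, and vanishing of the contour at infinity --- should be routine consequences of \eqref{ineq} and the estimates in Proposition \ref{conv}.
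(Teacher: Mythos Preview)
Your proposal is correct and follows essentially the same approach as the paper's own proof: induction on $m$ running downward from $m=k$, evaluating the $t_{m-1}$-integral by the sum of residues to the left of the contour, using Lemma~\ref{lemFh3} to discard the spurious poles, Lemma~\ref{lemFh2} for the pairwise cancellation of the ``diagonal'' residues at $t_{m-1}=z_{\si(i)}+h-l_{m-1}$, and the bijection between $S_n[\:k,m-1\:]$ and pairs $(\si,j)$ with $\si\in S_n[\:k,m\:]$, $j\ge m-1$, $j\notin\{\si(m),\dots,\si(k)\}$ to reassemble the surviving terms. Your direction of contour shift in the induction step (to the left, since $|p|>1$) is the intended one; the paper's printed text says ``$>$'' there, which is a typo, as is the exponent $\sum_{j=m+1}^k t_j$ in \eqref{Iemh}, which should read $\sum_{j=1}^{m-1} t_j$.
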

\begin{proof}
The statement is proved by induction on $\,m\,$. The base of induction at
\vv.04>
$\,m=k\,$ amounts to evaluating integral \eqref{Ie} with respect to $\,\:t_k\,$
\vv.06>
using the sum of residues in the half\:-plane
$\,\:{\Re\:t_k\?<\Re\:(z_k\<+h/2)}\,$. The relevant poles of the integrand
are of the form $\,\:t_k\<=z_i\<+h-l_k\>$, $\,\:i\ge k\,$,
$\,\:l_k\?\in\Z_{\ge 0}\,$, and a given $\,\:i\,$ corresponds to
the terms in formula \eqref{Iemsh} labeled by the unique permutation
$\,\:\si\<\in\<S_n[\:k,k\:]\,$ such that $\,\:\si(k)=i\,$.

\vsk.2>
For the induction step, we fix $\,\:\si\<\in\<S_n[\:k,m\:]\,$ and evaluate
\vv.04>
integral \eqref{Iemh} with respect to $\,\:t_{m-1}\,$ using the sum of residues
\vv.04>
in the half\:-plane $\,\:{\Re\:t_{m-1}\<>\Re\:(z_{m-1}\<+h/2)}\,$.
By formula \eqref{Flmh} and Lemma \ref{lemFh3}, the relevant poles of the
\vv.06>
integrand are of the form $\,\:t_{m-1}\<=z_i\?+h-l_{m-1}\>$, $\,i\ge m-1$,
$\,\:l_{m-1}\?\in\Z_{\ge 0}\,$. Therefore,
\vvn.2>
\beq
\label{Iell}
\Ie^{\:\si}_{l_m\?\lc\:l_k}\?(\zz\:;h\:;p)\,=\!
\sum_{l_{m-1}=\:0\!\!}^\infty\,\,\sum_{i=m-1\!}^n\<
\Ie^{\:\si\?,\:i}_{l_{m-1}\?\lc\:l_k}\?(\zz\:;h\:;p)\,
p^{\>z_i+\:h-\:l_{m-1}}\>,\kern-2em
\vv-.4>
\eeq
where
\vvn.3>
\begin{align*}
& \Ie^{\si\?,\:i}_{l_{m-1}\?\lc\:l_k}\?(\zz\:;h\:;p)\,={}
\\[4pt]
& \!\?{}=\,\frac1{\bigl(2\:\piit\>\bigr)^{\<m-2}}
\int\limits_{\Ibt_{m-1}\<(\<\zz;h)}\!\!\<p^{\>\sum_{j=1}^{\:m-2}\?t_j}\:
\Res_{\>t_{m-1}=\:z_i+\:h-\:l_{m-1}}\?
F^{\:\si}_{\:l_m\?\lc\:l_k}\?(\TT^{\les m};\zz\:;h)
\;d\:t_1\:\ldots d\:t_{m-2}\,.\kern-.6em
\end{align*}

\vsk.4>
Consider the bijection between the set $\,S_n[\:k,m-1\:]\,$ and
\vv.04>
the set of pairs $\,(\si,i\:)\,$ such that $\,\:\si\<\in\<S_n[\:k,m\:]\,$,
$\,\:i\in\{\:m-1\lc n\:\}\,$ and $\,\:i\not\in\{\:\si(m)\lc\si(k)\:\}\,$.
\vv.04>
It is given by sending $\,\rho\<\in\<S_n[\:k,m-1\:]\,$ to the pair
$\,\bigl(\si,\rho(m-1)\<\bigr)\,$, such that $\,\:\si\<(m-1)=m-1\,\:$
\vv.06>
and $\,\:\si(i)=\rho(i)\,\:$ if $\,m\le i\le k\,$. We have
\vvn.3>
\beq
\label{FlmRih}
\Res_{\>t_{m-1}=\:z_{\rho\<(\?m-1\?)}\<+\:h-\:l_{m-1}}\?
F^{\:\si}_{\:l_m\?\lc\:l_k}\?(\TT^{\les m};\zz\:;h)\,=\,
F^{\:\rho}_{\:l_{m-1}\?\lc\:l_k}\?(\TT^{\les m-1};\zz\:;h)\,,\kern-2em
\vv.5>
\eeq
Plugging series \eqref{Iell} for
$\,\:\Ie^{\:\si}_{l_m\?\lc\:l_k}\?(\zz\:;h\:;p)\,$ into \eqref{Iemsh}
and using formula \eqref{FlmRih}\:, we get
\vvn.1>
\begin{align}
\label{Iemmh}
\kern-1em\Ie(\zz\:;h\:;p)\,={}&\!\<\sum_{l_{m-1}\?\lc\:l_k=\:0\!\!}^\infty
\;\>\sum_{\rho\:\in S_n\<[\:k,\:m-1\:]\!\!\!\!}\,
\Ie^{\:\rho}_{l_{m-1}\?\lc\:l_k}\?(\zz\:;h\:;p)\,
p^{\>\sum_{j=m-1}^k\<(\<z_{\si\<(j\<)}\<+\:h-\:l_j\<)}\:+{}\kern-2em
\\[4pt]
\notag
{}+{}&\!\<\sum_{l_{m-1}\?\lc\:l_k=\:0\!\!}^\infty\;\>
\sum_{\si\in S_n\<[\:k,\:m\:]\!\!\!\!}\,\,\;\sum_{r=m}^k
\,\:\Ie^{\:\si\?,\:\si(\<r\<)}_{l_{m-1}\?\lc\:l_k}\?(\zz\:;h\:;p)
\,p^{\>z_{\si\<(\<r\?)}\<+\:h-\:l_{m-1}+
\sum_{j=m}^k\<(\<z_{\si\<(j\<)}\<+\:h-\:l_j\<)}\:.\kern-1.7em
\\[-11pt]
\notag
\end{align}
The first summand in the right-hand side comes from the terms in \eqref{Iell}
\vv.06>
with $\,\:i\not\in\{\:\si(m)\lc\si(k)\:\}\,$ and the second one comes from
\vvn.5>
the terms with $\,\:i\in\{\:\si(m)\lc\si(k)\:\}\,$. By Lemma~\ref{lemFh2},
\be
\Ie^{\:\si\?,\:\si(j\<)}_{l_{m-1}\?\lc\:l_k}\?(\zz\:;h\:;p)
\,=\:-\,\Ie^{\:\si\?,\:\si(j\<)}
_{\:l_j\?,\>l_m\?\lc\:l_{j-1}\<,\>l_{m-1}\?,\>l_{j+1}\?\lc\:l_k}\?
(\zz\:;h\:;p)\,.\kern-2em
\vv.5>
\ee
Therefore, the second summand in the right-hand side of formula \eqref{Iemmh}
equals zero. This completes the induction step. Proposition \ref{lemIeResFh}
is proved.
\end{proof}

Formula \eqref{Iemsh} for $\,m=1\,$ coincides with formula \eqref{IeResFh}\:.
\vv.04>
This completes the proof of formula \eqref{IeResFh} and thus the proof of
equality \eqref{Psin8} for $\,\:|\:p\:|>1\,$. Theorem \ref{thmPsin} is proved.

\subsection{Proof of Theorem \ref{trans}}
\label{secpf}
Let $\,\:p\in\CRp\,$.
Given $\,\si\<\in\<S_n\,$, Theorem \ref{thmPsin} implies that
\vvn.4>
\beq
\label{PsiY08}
\Psf_{\<Y^{\<\infty}_{\<\si\vp1}}(\zz\:;h\:;p)\,=\,
\Pso_{\?Y^0_{\<\si}}\:(\zz\:;h\:;p)\kern-1.8em
\vv.3>
\eeq
for any $\,\zz,h\,$ such that $\,(\zz_\si,h)\<\in\Lin\?$.
\vv-.11>
Since both functions $\,\:\Pso_{\?Y^0_{\<\si}}\<(\zz\:;h\:;p)\:$ and
\vv.1>
$\,\:\Psf_{\<Y^{\<\infty}_{\<\si}\vp1}\<(\zz\:;h\:;p)\,$ are holomorphic
as function of $\,\zz,h\,$ for $\,(\zz,h)\in\Lp$, see Theorem \ref{regz=z},
\vv.05>
equality \eqref{PsiY08} holds for any $\,(\zz,h)\in\Lp$ by analytic
\vv.04>
continuation. By formulae \eqref{Yofpi} and definition \eqref{tau} of the
transition map $\,\:\tau\,$, we have $\,\:\Yf_{\<\si}\?=\tau\>\Yo_{\<\si}\:$.
Hence, Theorem \ref{trans} holds for all the classes $\,\:\Yo_{\<\si}\:$.

\vsk.3>
For any $\,X\?\in\<\KT\,$, formula \eqref{WofX} and Proposition \ref{piWX}
imply that
\vvn.4>
\be
X\,=\<\sum_{I\in\>\Ikn\!}
\frac{X|_I^{}(\ZZ;H)}{R(\ZZ_{\?\si_I}\<)}\;\Yo_{\<\si_I}\kern-2em
\vv.2>
\ee
in the extension of $\,\KT\,$ by rational functions of $\,\ZZ\>$. Thus
\vvn.5>
\be
\tau X\,=\<\sum_{I\in\>\Ikn\!}
\frac{X|_I^{}(\ZZ;H)}{R(\ZZ_{\?\si_I}\<)}\;\Yf_{\<\si_I},\kern-2em
\vv.1>
\ee
and formula \eqref{PsiY08} yields
$\,\:\Psf_{\?\tau X}(\zz\:;h\:;p)=\Pso_{\?X}(\zz\:;h\:;p)\,$
\vv.07>
for any $\,X\?\in\<\KT\,$. Theorem~\ref{trans} is proved.

\section{Proofs}
\label{sec:pfs}

\subsection{Proof of Propositions \ref{p<>1}}
\label{sec:props}
It is known that for any compact subset $\,K\,$ of
$\,(\:\C\?\setminus\?\Z_{\le0})\<\times\<(\:\C\?\setminus\?\Z_{\le0})\,$,
there is a constant $\,C\>$ such that
\vvn.3>
\beq
\label{Stirk}
\biggl|\:\frac{\Gm(\al+k)}{\Gm(\bt+k)}\:\biggr|\,\le\,C\:(k+1)^{\>\al-\bt}
\kern-2em
\vv.4>
\eeq
for any $\,k\in\Z_{\ge0}\,$ and any $\,(\al,\bt)\in\<K\,$.
Recall also the formula
\vvn.4>
\beq
\label{Eur}
\Gm(x)\>\Gm(1-x)\,=\,\frac\pi{\sin\:(\pi x)}\;.\kern-2em
\vv.4>
\eeq
Then by formulae \eqref{Phr}\:, \eqref{Eur}\:, \eqref{Stirk}\:,
for any $\,I\?\in\Ikn\,$ and any compact subset $\,K\,$ of $\,L\,$,
there is a constant $\,C\>$ such that
\vvn.5>
\beq
\label{estim}
\bigl|\>\Res_{\>\TT\:=\zz_I\<+\:\lb}\Phr\<(\TT\:;\zz\:;h)\bigr|\,\le\,
C\>\bigl(k+\<{\tsize\sum_{\:i=1}^{\:k}}\:l_i\:\bigr)^{k-1-k\:h(n+1-k)}
\kern-1em
\vv.4>
\eeq
for any $\,\:\lb\<\in\Z_{\ge 0}^k\,$ and any $\,(\zz,h)\in\<K\,$.
\vv.04>
This estimate implies similar estimates for the derivatives because
$\,\:\Res_{\>\TT\:=\zz_I\<+\:\lb}\bigl(\Phr\<(\TT\:;\zz\:;h)\<\bigr)\,$
\vv0>
is a holomorphic function of $\,\zz,h\>$ for $\,(\zz,h)\in L\,$. Since
\be
\Res_{\>\TT\:=\zz_I\<+\:\lb}
\bigl(\Phr\<(\TT\:;\zz\:;h)\>W\<(\TT\:;\gmb\:;\gmbb\:;h)\<\bigr)\,=\,
W\<(\zz_I\<+\:\lb\:;\gmb\:;\gmbb\:;h)\>
\Res_{\>\TT\:=\zz_I\<+\:\lb}\Phr\<(\TT\:;\zz\:;h)\kern-.4em
\vv.6>
\ee
and $\,\:W\<(\TT\:;\gmb\:;\gmbb\:;h)\,$ is a polynomial, estimate \eqref{estim}
\vv.06>
shows that the series in the right\:-hand side of \eqref{Fo} converges
\vv.04>
to a function holomorphic in $\,\:p\:$ for $\,|\:p\:|\<<1\,$ and holomorphic
in $\,\zz,h\>$ for $\,(\zz,h)\in L\,$. This proves Proposition \ref{p<>1}
for the function $\,\Fo_{\?I}(\zz\:;h\:;p)\,$.

\vsk.4>
The proof of Proposition \ref{p<>1} for the function
$\,\Ff_{\?I}(\zz\:;h\:;p)\,$ is similar.

\subsection{Determinant formula}
\label{sec:thms}
Recall
\vvn.1>
$\,\Hc_{\zz_0,\:h_0}\?=\HT\big/\bra\:\zz=\zz_0\,,\,h=h_0\:\ket\,$.
Consider the polynomials $\,V_{\<I}\,$, $\,I\?\in\Ikn\,$,
defined by formula \eqref{Sch}\:.
\begin{lem}
\label{VH}
For any $\,(\zz_0\:,h_0)\,$, the classes $\,\:V_{\<I}(\gmb)\,$,
$\>I\?\in\Ikn\,$, \>form a basis of $\>\Hc_{\zz_0,\:h_0}$.
\end{lem}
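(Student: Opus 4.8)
The plan is to mirror the proof of Proposition~\ref{VK} (the $K$-theory statement), adding the one extra ingredient needed to handle \emph{all} specializations $(\zz_0,h_0)$, not just generic ones: a dimension count.

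First I would record that, upon setting $\zz=\zz_0$ and $h=h_0$ in the presentation \eqref{Hrel}, one gets
\[
\Hc_{\zz_0,\:h_0}\,=\,\C[\gmb]^{S_k}\ox\C[\gmbb]^{S_{n-k}}\Big/\bra\,e_r(\gmb\cup\gmbb)=e_r(\zz_0)\,,\ \ r=1\lc n\,\bigket,
\]
which in particular does not involve $h_0$. The algebra $\C[\gmb]^{S_k}\ox\C[\gmbb]^{S_{n-k}}$ is a polynomial ring in $e_1(\gmb)\lc e_k(\gmb),e_1(\gmbb)\lc e_{n-k}(\gmbb)$, and the $n$ elements $e_r(\gmb\cup\gmbb)$, $r=1\lc n$, form a system of parameters; hence $e_1(\gmb\cup\gmbb)-e_1(\zz_0)\lc e_n(\gmb\cup\gmbb)-e_n(\zz_0)$ is a regular sequence for every $\zz_0$, so $\Hc_{\zz_0,\:h_0}$ is a zero-dimensional complete intersection with $\dim_\C\Hc_{\zz_0,\:h_0}=\binom{n}{k}$. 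Equivalently, $\HT$ is a free $\C[\zz;h]$-module of rank $\binom{n}{k}$. Since $|\Ikn|=\binom{n}{k}$, it suffices to show that the classes $V_{\<I}(\gmb)$, $I\in\Ikn$, span $\Hc_{\zz_0,\:h_0}$, or, what is the same here, that they are linearly independent over $\C$.

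Next I would use the explicit form of $V_{\<I}$ from \eqref{Sch}: by formula \eqref{Sch2} together with Corollary~\ref{lemfV}, the $V_{\<I}(\gmb)$, expanded in the standard monomial (ordinary Schur) basis of $\Hc_{\zz_0,\:h_0}$ indexed by partitions in the $k\times(n-k)$ box, are related to that basis by a matrix that is triangular with respect to the dominance order and has $\pm1$ on the diagonal; crucially, this transition matrix has entries that do not depend on $\zz_0$ or $h_0$. Being unimodular over $\C$, it is invertible, and therefore $\{V_{\<I}(\gmb)\}_{I\in\Ikn}$ is a $\C$-basis of $\Hc_{\zz_0,\:h_0}$. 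This is the exact cohomological counterpart of the argument for $\KT$ in Proposition~\ref{VK}, the Appendix facts \eqref{Sch2}, \ref{lemfV} being stated for the additive variables $\gmb$ as well as the multiplicative variables $\GG$.

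The only real subtlety is the first step. For generic $\zz_0$ one could instead argue by equivariant localization, deducing the claim from the non-vanishing of $\det\bigl(V_{\<I}(\gmb)|_J\bigr)_{I,J\in\Ikn}$ (cf.\ Proposition~\ref{lemdetM}); but that determinant may vanish for special $\zz_0$, and the lemma is asserted for \emph{all} $(\zz_0,h_0)$. The regular-sequence (equivalently, freeness/GKM) argument is what pins the dimension to $\binom{n}{k}$ uniformly, and once that is in place the triangularity of the $V_{\<I}$ closes the proof with no further input.
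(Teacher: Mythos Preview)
Your proof reaches the right conclusion and, like the paper, cites formula~\eqref{Sch2} and Corollary~\ref{lemfV}; but your description of what those two facts actually deliver is off, and the extra dimension count, while correct, is not needed.

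First, the $V_{\<I}$ defined in~\eqref{Sch} \emph{are} the Schur polynomials (reindexed by subsets), so expanding them ``in the ordinary Schur basis'' is essentially the identity matrix, and neither~\eqref{Sch2} nor Corollary~\ref{lemfV} is about triangularity.  What Corollary~\ref{lemfV} really says is that every $f\in\C[\gmb,\gmbb]^{S_k\times S_{n-k}}$ expands as $\sum_I c_I\,V_I(\gmb)$ where each coefficient $c_I$ (the bracketed sum in~\eqref{fV}) is \emph{symmetric in all of} $\gmb\cup\gmbb$, i.e., lies in $\C[\gmb,\gmbb]^{S_n}$.  Formula~\eqref{Sch2} is the companion orthogonality (dual-basis) relation: it lets you recover $c_L$ from $f$ and forces $c_L=0$ whenever $\sum_I c_I\,V_I(\gmb)=0$ with $c_I$ fully symmetric.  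Together they show that $\{V_I(\gmb)\}_{I\in\Ikn}$ is a \emph{free basis} of $\C[\gmb,\gmbb]^{S_k\times S_{n-k}}$ over $\C[\gmb,\gmbb]^{S_n}$.  Passing to the quotient by $e_r(\gmb\cup\gmbb)=e_r(\zz_0)$ then gives a $\C$-basis of $\Hc_{\zz_0,h_0}$ for every $(\zz_0,h_0)$ at once.  This is the paper's one-line argument.

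Second, your regular-sequence dimension count is correct, but once you have freeness from the Appendix formulas it is redundant.  More importantly, your written Step~2 risks circularity: concluding that $\{V_I(\gmb)\}$ is a basis because it is unitriangular against ``the standard Schur basis of $\Hc_{\zz_0,h_0}$'' presupposes that the latter is already a basis of the \emph{quotient}, which is exactly the statement to be proved.  If you want to keep the dimension count, pair it instead with the spanning consequence of Corollary~\ref{lemfV} (coefficients become scalars in the quotient); then linear independence follows by counting.
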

\begin{proof}
The statement follows from formula \eqref{Sch2} and Corollary \ref{lemfV}.
\end{proof}

Recall $\,\:\Ub\,$ the universal cover of $\,\:\Czon\,$ and $\,\:\BB\,$
\vv.06>
the image of $\,\:\CRp\,$ in $\,\:\Ub\,$, see Section \ref{secmono}.
The functions $\,p^{\:s}\<$ and $(1-p)^s$ are defined on $\,\:\Ub\,$
such that their values on $\,\:\BB\,$ are determined by the inequalities
$\,\:-\:2\:\pi\<<\<\arg\:p<0\,$ and $\,|\<\arg\:(1-p)\:|<\pi\,$, respectively.
\vsk.4>
For $\,(\zz_0\:,h_0)\in\Lp$, $\,p_0\?\in\Ub$, and the classes
$\,\:V_{\<I}(\GG)\in\KT\,$, $\,I\?\in\Ikn\,$,
\vv.06>
expand the elements $\,\:\Pso_{\?V_{\?I}\<(\<\GG\<)}\?(\zz_0\:;h_0\:;p_0)$
\,\:and $\,\:\Psf_{\?V_{\?I}\<(\<\GG\<)}\?(\zz_0\:;h_0\:;p_0)$ using the basis
\vvn.3>
$\,\:V_{\!J}(\gmb)\,$, $\>J\?\in\Ikn\,$, of $\,\Hc_{\zz_0,\:h_0}\,$,
\begin{gather*}
\Pso_{\?V_{\?I}\<(\<\GG\<)}\?(\zz_0\:;h_0\:;p_0)\,=\<
\sum_{J\in\>\Ikn\!}\?\Mo_{\IJ}(\zz_0\:;h_0\:;p_0)\>V_{\!J}(\gmb)\,,\kern-2em
\\[6pt]
\Psf_{\?V_{\?I}\<(\<\GG\<)}\?(\zz_0\:;h_0\:;p_0)\,=\<
\sum_{J\in\>\Ikn\!}\?\Mf_{\IJ}(\zz_0\:;h_0\:;p_0)\>V_{\!J}(\gmb)\,.\kern-2em
\end{gather*}

\begin{prop}
\label{lemdetM}
Let $\,(\zz,h)\in\Lp\!$. Then
\vvn.2>
\begin{align}
\label{detMo}
\det\:\bigl(\:\Mo_{\IJ}(\zz\:;h\:;p)\bigr)_{\<\IJ\in\>\Ikn}\?
&{}=\,\bigl(\:2\:\piit\>\bigr)
^{\tsize\!\<\frac{n\:(n-1)}2\<\binom{\:n-2\:}{k-1}}_{\vp1}\:
p^{\:k\binom{\:n\:}k_{\vp1}\?\sum_{a=1}^n\?z_a}\:
(1-p)^{\:hk\<\binom{\:n\:}k_{\vp1}}\times{}\!\!\kern-1.8em
\\[3pt]
\notag
&{}\>\times\,
e^{\:3\pii\,\:k\binom{\:n-1\:}{k-1}_{\vp1}\?\sum_{a=1}^n\?z_a}\>
\prod_{a=1}^n\>\prod_{\satop{b\:=1}{b\ne a}}^n\,
\Gm(z_a\<-z_b\<-h)^{\binom{\:n-2\:}{k-1}_{\vp1}}\kern-1.8em
\\[-29pt]
\notag
\end{align}
and
\vvn.3>
\beq
\label{detMf}
\det\:\bigl(\:\Mf_{\IJ}(\zz\:;h\:;p)\bigr)_{\<\IJ\in\>\Ikn}\?=\,
e^{\:\pii\,\:hk(n-1)\<\binom{\:n\:}k_{\vp1}}\>
\det\:\bigl(\:\Mo_{\IJ}(\zz\:;h\:;p)\bigr)
_{\<\IJ\in\>\Ikn}\>.\kern-1.5em
\vv.3>
\eeq
\end{prop}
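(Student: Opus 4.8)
I would prove \eqref{detMf} first, where the argument is formal, and then \eqref{detMo} by a Liouville computation for the determinant of a fundamental matrix, supplemented by an analysis of the $\,p\to0\,$ asymptotics. For \eqref{detMf}: by Theorem \ref{trans}, $\,\Psf_{\?\tau X}=\Pso_{\?X}\,$ for all $\,X\in\KT\,$; taking $\,X=\tau^{-1}V_I(\GG)\,$ and using that the map $\,X\mapsto\Pso_{\?X}\,$ is linear over $\,\C[\:\ZZ^{\pm1};H^{\pm1}\:]\,$ after the substitution $\,\ZZ=\bigl(e^{\:2\pii\,z_1}\lc e^{\:2\pii\,z_n}\bigr)\,$, $\,H=e^{\:2\pii\,h}\,$ (immediate from \eqref{PsofX} and the definition of $\,\Xdd|_I\,$), one obtains $\,\Psf_{\?V_I(\GG)}=\sum_{J\in\>\Ikn}\bigl(\Ted^{\:-1}\bigr)_{\IJ}(\zz\:;h)\,\Pso_{\?V_J(\GG)}\,$. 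Expanding both sides in the basis $\,V_L(\gmb)\,$ of $\,\Hc_{\zz_0,\:h_0}\,$ (Lemma \ref{VH}) gives the matrix identity $\,\Mf=\Ted^{\:-1}\Mo\,$, hence $\,\det\Mf=(\det\Ted)^{-1}\det\Mo\,$. Since $\,\det\Ted\,$ is $\,\det\tau\,$ with $\,\ZZ,H\,$ replaced by $\,\bigl(e^{\:2\pii\,z_1}\lc\bigr),e^{\:2\pii\,h}\,$, Proposition \ref{dettau} gives $\,\det\Ted=e^{\:-2\pii\,h\,\frac{n(n-1)}2\binom{\:n-1\:}{k-1}}\,$, and the identity $\,\frac{n(n-1)}2\binom{\:n-1\:}{k-1}=\frac{n-1}2\,k\binom nk\,$ turns this into the prefactor of \eqref{detMf}.

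For \eqref{detMo} I would use Liouville's theorem. Write \eqref{mqde} as a first order linear system $\,p\,\frac{d}{dp}v=\hat B(p)\>v\,$ for an $\,\Hc_{\zz_0,\:h_0}$-valued function $\,v\,$, where $\,\hat B(p)\,$ is the matrix, in the basis $\,V_J(\gmb)\,$, of the operator $\,c_1(E)*_p-\frac{hp}{1-p}\min(0,n-2k)\,$; the rows of $\,\bigl(\Mo_{\IJ}(\zz\:;h\:;p)\bigr)\,$ are the coordinate vectors of a fundamental system of solutions, so $\,p\,\frac{d}{dp}\log\det\Mo(p)=\tr\hat B(p)\,$ and therefore $\,\det\Mo(p)=C(\zz\:;h)\,\exp\bigl(\int^{\>p}\tr\hat B(s)\,\frac{ds}s\bigr)\,$ for a constant $\,C(\zz\:;h)\,$. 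By \cite[Proposition~11.2.2]{MO}, $\,c_1(E)*_p\,$ equals $\,c_1(E)*_0\,$ (classical multiplication by $\,c_1(E)\,$) plus a term that is regular at $\,p=0\,$ and $\,p=\infty\,$ with a simple pole only at $\,p=1\,$; hence $\,\tr\hat B(p)=\tr\bigl(c_1(E)*_0\bigr)+\frac{\mu\>p}{1-p}\,$ for a constant $\,\mu\,$. The classical trace $\,\tr\bigl(c_1(E)*_0\bigr)=\sum_{I\in\Ikn}c_1(E)|_I\,$ supplies the exponent of $\,p\,$, while evaluating $\,\mu\,$ from the explicit quantum multiplication formula together with the scalar term $\,-\frac{hp}{1-p}\min(0,n-2k)\binom nk\,$ gives $\,-\mu=hk\binom nk\,$; integration then yields $\,\det\Mo(p)=C(\zz\:;h)\>p^{\>\tr(c_1(E)*_0)}(1-p)^{-\mu}\,$, which produces the two exponents in \eqref{detMo}.

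It remains to pin down $\,C(\zz\:;h)\,$, which I would do by letting $\,p\to0\,$. By \eqref{Psof}--\eqref{Fo}, the leading term of $\,\Pso_{\?I}\,$ is $\,p^{\>\sum_{i\in I}z_i}\bigl(-\Gm(-h)\bigr)^{-k}\bigl(\Res_{\>\TT=\zz_I}\Phr\<(\TT\:;\zz\:;h)\bigr)\,W\<(\zz_I\:;\gmb\:;\gmbb\:;h)\,Q(\gmb\:;\gmbb\:;h)\,\bigl(1+O(p)\bigr)\,$, so, by \eqref{PsofX}, the leading coefficient of $\,\det\Mo(p)\,$ divided by the corresponding power of $\,p\,$ factors into the determinant of the matrix $\,\bigl(V_I(\GG)|_J\bigr)_{\IJ}\,$ evaluated at $\,\ZZ=\bigl(e^{\:2\pii\,z_1}\lc\bigr),H=e^{\:2\pii\,h}\,$, the determinant of the matrix $\,\bigl(W\<(\zz_I\:;\gmb\:;\gmbb\:;h)\bigr)\,$ written in the basis $\,V_J(\gmb)\,$, and the product $\,\bigl(-\Gm(-h)\bigr)^{-k\binom nk}\prod_{I\in\Ikn}\Res_{\>\TT=\zz_I}\Phr\<(\TT\:;\zz\:;h)\,$. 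Each residue $\,\Res_{\>\TT=\zz_I}\Phr\,$ is a product of Gamma functions and sines, which are evaluated by \eqref{Eur}, while the two determinants of polynomial matrices are evaluated from the Schur/Vandermonde identities of the Appendix (Corollary \ref{lemfV}) and Proposition \ref{lemdet}. Comparing the outcome with $\,C(\zz\:;h)\,p_0^{\>\tr(c_1(E)*_0)}(1-p_0)^{-\mu}\,$ as $\,p_0\to0\,$ produces the power of $\,2\:\piit\,$, the phase $\,e^{\:3\pii\,k\binom{\:n-1\:}{k-1}\sum_{a=1}^n z_a}\,$ (into which the branch conventions $\,-2\pi<\arg p<0\,$, $\,|\arg(1-p)|<\pi\,$ and the factor $\,e^{\:n\pii\,\sum_j t_j}\,$ of $\,\Phr\,$ are absorbed), and the product $\,\prod_{a\ne b}\Gm(z_a-z_b-h)^{\binom{\:n-2\:}{k-1}}\,$ of \eqref{detMo}. \textbf{The main obstacle} is precisely this last step: organizing the determinant of the matrix of leading residues and identifying it, with the correct power of $\,2\:\piit\,$ and the correct branch-dependent phase, with the explicit right-hand side of \eqref{detMo}; by contrast the Liouville reduction and the trace computation are routine.
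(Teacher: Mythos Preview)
Your argument for \eqref{detMf} is exactly the paper's: Theorem~\ref{trans} gives $\Psf_{\tau X}=\Pso_X$, hence $\Mf=\Ted^{-1}\Mo$, and Proposition~\ref{dettau} supplies $\det\tau$. Nothing to add there.

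For \eqref{detMo} the paper does something rather different and much shorter: it simply imports the determinant formula from \cite[Theorem~4.10]{TV5} and converts bases via Lemma~\ref{lemdetV}. Your Liouville-plus-asymptotics argument is a genuine self-contained alternative, and the reduction to a constant $C(\zz;h)$ via $\tr\hat B(p)$ is sound (Lemma~\ref{lem sing} guarantees the Fuchsian form, so the trace really is $\tr(c_1(E)*_0)+\mu\,p/(1-p)$). The difference shows up exactly at your acknowledged ``main obstacle'': to pin down $C(\zz;h)$ you need the determinant of the matrix $\bigl([W(\zz_K;\gmb;\gmbb;h)\,Q]_{V_J(\gmb)}\bigr)_{K,J}$, and this is precisely the content that \cite[Theorem~4.10]{TV5} packages. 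Your pointer to Proposition~\ref{lemdet} is off --- that result computes determinants of the \emph{trigonometric} weight functions $\Weo,\Wef$ at shifted arguments, not of the rational $W$ in cohomology --- so you would either have to prove a rational analogue of Proposition~\ref{lemdet} (same structure of proof: divisibility by a Vandermonde-type product, degree count, and a normalization) or cite \cite{TV5} after all. For the other factor, $\det\bigl(V_I(e^{2\pi i\zz_J})\bigr)$, the relevant Appendix statement is Lemma~\ref{lemdetV} rather than Corollary~\ref{lemfV}.

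In short: both routes land on the same residual computation. The paper outsources it; your approach makes transparent why the $p$- and $(1-p)$-exponents are what they are, at the cost of having to reprove the weight-function determinant that \cite{TV5} already supplies.
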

\begin{proof}
Formula \eqref{detMo} follows from \cite[Theorem~4.10\:]{TV5} and
Lemma \ref{lemdetV}. Formula \eqref{detMf} follows from Theorem \ref{trans}
and Proposition \ref{dettau}.
\end{proof}

\appendix

\section{Schur polynomials}
\label{appSch}
Given a subset $\,I=\{\:i_1\?\lsym<i_m\}\,$ of $\,\:\onen\,$,
define a polynomial
\vvn,4>
\beq
\label{Sch}
V_{\<I}(x_1\lc x_m)\,=\,\frac{\det\:(x_{\<a}^{\:i_b}\<)_{\ab=1}^{\:m}}
{\det\:(x_{\<a}^{\:b}\:)_{\ab=1}^{\:m}}\;,\kern-2em
\vv.3>
\eeq
symmetric in $\,x_1\lc x_m\>$. Up to a change of notation, the polynomials
$\,V_{\<I}(x_1\lc x_m)\,$ coincide with the Schur polynomials.

\vsk.3>
For $\,I=\{\:i_1\?\lsym<i_m\}\<\subset\onen\,$, set
$\;\ell(I)=\sum_{\:a=1}^{\:m}\:(i_a\<-a)\,$ and
$\,\:\xx_{\<I}^{}=(x_{i_1}\<\lc x_{i_m})\,$. Let $\,\si_0\,$
be the longest permutation, $\,\si_0(a)=n+1-a\,$, $\,a=1\lc n\,$.

\begin{lem}
\label{lemdetV}
We have,
\vvn-.7>
\beq
\label{detV}
\det\:\bigl(V_{\<I}(\xx_J)\bigr)_{\IJ\in\Ikn}\?=\,\prod_{a=1}^{n-1}\,
\prod_{b=a+1}^n(x_b-x_a)^{\?\binom{n-2}{k-1}}_{\vp x}\:.\kern-1em
\vv.1>
\eeq
\end{lem}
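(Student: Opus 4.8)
The plan is to recognise the matrix $\bigl(V_{\<I}(\xx_J)\bigr)_{I,J\in\Ikn}$, after rescaling its columns, as the $k$-th compound matrix of an $n\times n$ Vandermonde-type matrix, to evaluate that determinant by the Sylvester--Franke theorem, and to finish with a multiplicity count and Pascal's rule.

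First I would write, for $I=\{i_1\lsym<i_k\}$ and $J=\{j_1\lsym<j_k\}$ in $\Ikn$, using \eqref{Sch},
\[
V_{\<I}(\xx_J)\,=\,A_{I,J}\big/D_J\,,\qquad
A_{I,J}=\det\bigl(x_{j_a}^{\,i_b}\bigr)_{a,b=1}^k\,,\quad
D_J=\det\bigl(x_{j_a}^{\,b}\bigr)_{a,b=1}^k\,,
\]
where $D_J$ depends only on $J$. Rescaling the $J$-th column by $D_J^{-1}$ gives
\[
\det\bigl(V_{\<I}(\xx_J)\bigr)_{I,J\in\Ikn}\,=\,
\det\bigl(A_{I,J}\bigr)_{I,J\in\Ikn}\Big/\prod_{J\in\Ikn}D_J\,.
\]
Now $A_{I,J}$ equals the minor with row set $I$ and column set $J$ of the $n\times n$ matrix $P=\bigl(x_j^{\,i}\bigr)_{i,j=1}^n$ (rows labelled by the exponent $i$, columns by the variable $j$); hence $\bigl(A_{I,J}\bigr)_{I,J\in\Ikn}$ is precisely the $k$-th compound matrix of $P$, and by the Sylvester--Franke theorem $\det\bigl(A_{I,J}\bigr)=(\det P)^{\binom{n-1}{k-1}}$. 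A direct Vandermonde computation gives
\[
\det P\,=\,\prod_{a=1}^n x_a\cdot\!\!\prod_{1\le a<b\le n}\!\!(x_b-x_a)\,,\qquad
D_J\,=\,\prod_{a\in J}x_a\cdot\!\!\prod_{\{a,b\}\subset J}\!\!(x_b-x_a)\,,
\]
the inner products being taken over pairs with $a<b$.

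It then remains to count multiplicities. Each $x_a$ occurs in the denominator $\prod_J D_J$ once for every $J\ni a$, i.e.\ with total multiplicity $\binom{n-1}{k-1}$, matching its power in the numerator $(\det P)^{\binom{n-1}{k-1}}$, so all factors $x_a$ cancel. For fixed $a<b$, the factor $x_b-x_a$ occurs in $\prod_J D_J$ once for every $J\supset\{a,b\}$, i.e.\ with multiplicity $\binom{n-2}{k-2}$, so its power in $\det\bigl(V_{\<I}(\xx_J)\bigr)$ equals $\binom{n-1}{k-1}-\binom{n-2}{k-2}=\binom{n-2}{k-1}$ by Pascal's rule; this is exactly \eqref{detV}.

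The substance of the argument is the identification with a power of $\det P$ via Sylvester--Franke (equivalently, multiplicativity of compound matrices together with the diagonalisable case and continuity); everything else is elementary bookkeeping. The only points needing a little care are the two multiplicity counts above, the fact that $\det\bigl(A_{I,J}\bigr)$ is independent of the chosen ordering of $\Ikn$ (reordering rows and columns in the same way contributes $(\pm1)^2$), and the observation that transposing the $k\times k$ blocks when matching $A_{I,J}$ to a minor of $P$ leaves determinants unchanged. If one prefers to avoid Sylvester--Franke, an alternative is to note that $\det\bigl(V_{\<I}(\xx_J)\bigr)$ is a polynomial, to show it is divisible by $(x_b-x_a)^{\binom{n-2}{k-1}}$ for every $a<b$ (setting $x_a=x_b$ makes the columns indexed by $J$ and $(J\setminus\{b\})\cup\{a\}$ equal whenever $b\in J$, $a\notin J$, giving $\binom{n-2}{k-1}$ disjoint pairs of coinciding columns), then to compare total degrees and fix the constant by inspecting a single monomial.
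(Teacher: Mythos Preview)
Your proof is correct, and your primary approach is genuinely different from the paper's. The paper uses exactly the method you sketch as your alternative: it observes that both sides are homogeneous of the same degree (computing $\sum_I\ell(I)=\frac{k(n-k)}2\binom{n}{k}=\frac{n(n-1)}2\binom{n-2}{k-1}$ via the pairing $I\leftrightarrow\sigma_0(I)$), invokes divisibility ``by the standard reasoning'' (your column-pairing argument makes this step explicit), and then pins down the scalar by comparing the coefficient of $\prod_{a=2}^n x_a^{(a-1)\binom{n-2}{k-1}}$. Your Sylvester--Franke route is more structural and has the advantage that the constant comes out automatically from the exact evaluation $(\det P)^{\binom{n-1}{k-1}}/\prod_J D_J$, with no monomial to chase; the paper's method is more self-contained in that it avoids an external determinantal identity, at the cost of that extra coefficient comparison.
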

\begin{proof}
Let $\,A\,$ and $\,B\,$ be the left\:-hand and right\:-hand sides of formula
\eqref{detV}\:, respectively. They are polynomials in $\,\xxx\,$. Since
the homogeneous degree of $\,V_{\<I}(x_1\lc x_k)\,$ equals $\,\ell(I)\,$,
see \eqref{Sch}\:, the homogeneous degree of $\,A\,$ is
\vvn.2>
\be
\sum_{I\in\>\Ikn}\!\ell(I)\,=\,
\frac12\sum_{I\in\>\Ikn}\!\bigl(\ell(I)+\ell(\si_0(I)\<)\<\bigr)\:=\>
\frac{k\>(n-k)}2\binom{\:n\:}k=\>\frac{n\>(n-1)}2\>\binom{n-2}{k-1}\>,
\kern-.4em
\vv.1>
\ee
that equals the homogeneous degree of $\,B\,$. Furthermore, $\,A\,$ is divisible
by $\,B\,$ by the standard reasoning. Therefore, $\,A\,$ and $\,B\,$
\vv.1>
are proportional. Comparing the coefficients of the monomial
$\,\prod_{\:a=2}^{\:n}x_a^{(a-1)\smash{\binom{n-2}{k-1}}}\:$
in $\,A\,$ and $\,B\,$ completes the proof.
\end{proof}

For $\,I\subset\onen\,$, let $\,\Ibr=\onen\setminus I\,$ be the complement of
$\,I\,$. Set
\vvn.2>
\beq
\label{Rx}
R(x_1\lc x_m\:;y_1\lc y_{n-m})\,=\,
\prod_{a=1}^m\,\prod_{b=1}^{n-m}\,(y_{\:b}-x_a)\,.\kern-2em
\eeq

\vsk.3>
The Laplace expansion of the determinant
\vv.1>
$\;\det\:(x_{\<a}^{\:b}\:)_{\ab=1}^{\:n}\,$ with respect of the first $\,k\,$
and the last $\,n-k\,$ rows together with the Vandermonde determinant formula
yield
\vvn.6>
\beq
\label{Sch1}
\sum_{I\in\>\Ikn}\?(-1)^{\ell(I)}\>
V_{\<I}(x_1\lc x_k)\,V_{\<\Ibr}(x_{k+1}\lc x_n)\,=\,
R(x_1\lc x_k\:;x_{k+1}\lc x_n)\,,\kern-1em
\eeq
and the Laplace expansion of $\;\det\:(x_{\<a}^{\:b}\:)_{\ab=1}^{\:n}\,$ with
respect of the first $\,k\,$ and the last $\,n-k\,$ columns yields
\vvn-.1>
\beq
\label{Sch2}
\sum_{I\in\>\Ikn}\!\frac{V_{\?J}(\xx_I)\,V_{\<\Kbr}(\xx_{\Ibr})}
{R(\xx_I\:;\xx_{\Ibr})}\;=\,(-1)^{\ell(J)}\>\dl_{\JK}\,.\kern-2em
\vv.4>
\eeq

\begin{cor}
\label{lemfV}
For any $\,f\in\C[x_1^{\pm1}\?\lc x_n^{\pm1}\:]^{\:S_k\times\:S_{n-k}}_{\vp1}$,
we have
\vvn.3>
\beq
\label{fV}
f(\xxx)\,=\sum_{I\in\>\Ikn}\?(-1)^{\ell(I)}\>V_{\<I}(x_1\lc x_k)\<
\sum_{J\in\Ikn}\!\frac{f(\xx_J,\xx_{\Jbr})\,V_{\<\Ibr}(\xx_{\Jbr})}
{R(\xx_J\:;\xx_{\Jbr})}\;.\kern-1em
\vv.2>
\eeq
\end{cor}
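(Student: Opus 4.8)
The plan is to interchange the two summations on the right\:-hand side of \eqref{fV}, carry out the inner sum over $I\in\Ikn$ by means of the Schur identity \eqref{Sch1}, and then observe that all but one of the resulting summands over $J$ vanish identically, leaving precisely $f(\xxx)$.

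First I would denote by $\Phi$ the right\:-hand side of \eqref{fV}, regarded as an element of the field $\C(\xxx)$, and rewrite it by interchanging the two sums,
\be
\Phi\,=\,\sum_{J\in\Ikn}\,\frac{f(\xx_J\:,\xx_{\Jbr})}{R(\xx_J\:;\xx_{\Jbr})}\;
\Bigl(\,\sum_{I\in\Ikn}\,(-1)^{\ell(I)}\>V_{\<I}(x_1\lc x_k)\,V_{\<\Ibr}(\xx_{\Jbr})\,\Bigr)\,.
\ee
To evaluate the inner sum I would use that \eqref{Sch1} is an identity of polynomials in the $n$ variables $x_1\lc x_n$: the substitution fixing $x_1\lc x_k$ and sending $x_{k+1}\lc x_n$ to the entries of the tuple $\xx_{\Jbr}$ is a ring endomorphism of $\C[\xxx]$ --- not injective, since $\xx_{\Jbr}$ may involve some of $x_1\lc x_k$, but that is of no consequence --- and therefore carries \eqref{Sch1} to
\be
\sum_{I\in\Ikn}\,(-1)^{\ell(I)}\>V_{\<I}(x_1\lc x_k)\,V_{\<\Ibr}(\xx_{\Jbr})\,=\,
R(x_1\lc x_k\:;\xx_{\Jbr})\,=\,\prod_{a=1}^k\,\prod_{i\:\in\:\Jbr}\,(x_i-x_a)\,.
\ee

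The key point is then that $R(x_1\lc x_k\:;\xx_{\Jbr})$, being a product of linear forms $x_i-x_a$ with $a\in\onek$ and $i\in\Jbr$, is the zero element of $\C[\xxx]$ unless $\onek\cap\Jbr=\emptyset$, i.e.\ unless $J=\onek$. Hence only the summand $J=\onek$ contributes to $\Phi$; for that $J$ one has $\xx_J=(x_1\lc x_k)$, $\xx_{\Jbr}=(x_{k+1}\lc x_n)$, so $R(x_1\lc x_k\:;\xx_{\Jbr})=R(\xx_J\:;\xx_{\Jbr})$ and $f(\xx_J\:,\xx_{\Jbr})=f(\xxx)$, whence
\be
\Phi\,=\,\frac{f(\xxx)}{R(\xx_J\:;\xx_{\Jbr})}\;R(\xx_J\:;\xx_{\Jbr})\,=\,f(\xxx)\,,
\ee
which is \eqref{fV}. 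Since the left\:-hand side of \eqref{fV} is a Laurent polynomial, this computation also shows that the apparent poles of the individual terms on the right\:-hand side of \eqref{fV} cancel.

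I do not anticipate a real obstacle; the one step deserving a word of care is the specialization of \eqref{Sch1} used above, and it is legitimate simply because a polynomial identity is preserved under an arbitrary ring homomorphism. If a more symmetric formulation is desired, one can introduce auxiliary variables $\Tb=(T_1\lc T_k)$, set $G(\Tb)=\sum_{I\in\Ikn}(-1)^{\ell(I)}\>V_{\<I}(\Tb)\sum_{J\in\Ikn}f(\xx_J\:,\xx_{\Jbr})\>V_{\<\Ibr}(\xx_{\Jbr})/R(\xx_J\:;\xx_{\Jbr})$, prove $G(\xx_L)=f(\xx_L\:,\xx_{\Lbr})$ for every $L\in\Ikn$ by the identical computation, and obtain \eqref{fV} as the special case $L=\onek$.
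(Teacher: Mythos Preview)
Your proof is correct and follows essentially the same route as the paper: interchange the order of summation, apply identity \eqref{Sch1} to collapse the inner sum over $I$ to $R(x_1\lc x_k\:;\xx_{\Jbr})$, and observe that this product vanishes unless $J=\onek$. Your added remarks on why the specialization of \eqref{Sch1} is legitimate and the alternative formulation via auxiliary variables $\Tb$ are fine but not needed.
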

\begin{proof}
Changing the order of summation in the right\:-hand side of \eqref{fV} and
applying formulae \eqref{Sch2}\:, \eqref{Sch1}\:, \eqref{Rx}\:, one has
\vvn.5>
\begin{align*}
\sum_{J\in\Ikn}\frac{f(\xx_J,\xx_{\Jbr})}{R(\xx_J\:;\xx_{\Jbr})}\,
&\sum_{I\in\>\Ikn}\?(-1)^{\ell(I)}\>V_{\<I}(x_1\lc x_k)\,V_{\<\Ibr}(\xx_{\Jbr})
\,={}
\\[4pt]
{}={}&\sum_{J\in\Ikn}\!\frac{f(\xx_J,\xx_{\Jbr})\,R(x_1\lc x_k\:;\xx_{\Jbr})}
{R(\xx_J\:;\xx_{\Jbr})}\;=\,f(\xxx)\,,
\\[-15pt]
\end{align*}
since $\,R(x_1\lc x_k\:;\xx_{\Jbr})=0\,\:$ unless $\,J=\onek\,$,
$\,\Jbr=\{\:k+1\lc n\:\}\,$.
\end{proof}

\end{document}